\DeclareMathOperator*{\argmin}{\mathrm{argmin}}
\DeclareMathOperator*{\Hyp}{\mathrm{Hyp}}
\DeclareMathOperator*{\Bin}{\mathrm{Bin}}
\newcommand{\er}{Erd\H{o}s--R\'{e}nyi}
\newcommand{\Sn}{\mathcal{S}_n}
\newcommand{\E}{\mathsf{E}}
\newcommand{\prob}{\mathsf{P}}
\newcommand{\var}{\mathsf{Var}}
\newcommand{\cov}{\mathsf{Cov}}
\newcommand{\R}{\mathbb{R}}
\newcommand{\nontrivpout}{\Tilde{t} ^{\mathrm{a}}}
\newcommand{\nontrivpin}{\Tilde{t} ^{\mathrm{u}}}
\newcommand{\vpin}{t ^{\mathrm{u}}}
\newcommand{\vpout}{t ^{\mathrm{a}}}
\newcommand{\Density}{R}
\newcommand{\density}{r}
\newcommand{\nonDensity}{\Tilde{R}}
\newcommand{\nondensity}{\Tilde{r}}
\newcommand{\vu}{\mathcal{V}_{\mathrm{u}}}
\newcommand{\va}{\mathcal{V}_{\mathrm{a}}}
\newcommand{\eu}{\mathcal{E}_{\mathrm{u}}}
\newcommand{\ea}{\mathcal{E}_{\mathrm{a}}}
\newcommand{\psiu}{\psi_{\mathrm{u}}}
\newcommand{\psia}{\psi_{\mathrm{a}}}
\newcommand{\SnTn}{\mathcal{S}_{n,\Tilde{n}}}
\newcommand{\muu}{\mu}
\newcommand{\mua}{\nu}
\newcommand{\orbit}{\mathcal{O}}
\newcommand{\N}{\mathbb{N}}
\newcommand{\rom}[1]{\uppercase\expandafter{\romannumeral #1\relax}}
\newtheorem*{rep@theorem}{\rep@title}
\newcommand{\newreptheorem}[2]{%
\newenvironment{rep#1}[1]{%
 \def\rep@title{#2 \ref{##1}}%
 \begin{rep@theorem}}%
 {\end{rep@theorem}}}
\newtheorem{theorem}{Theorem}
\newtheorem{lemma}{Lemma}
\newtheorem{fact}{Fact}
\newtheorem{corollary}{Corollary}
\theoremstyle{definition}
\newtheorem{remark}{Remark}
\def\input@path{{Journal/}{Journal/sections/}{Journal/figs/}}
\title{Attributed Graph Alignment}
\author{Ning Zhang,
\and Ziao Wang,
\and Weina Wang,
\and Lele Wang
\thanks{Ning Zhang is with the Department of Statistics, University of Oxford, Oxford OX1 3LB, UK (email: ning.zhang@stats.ox.ac.uk). This work was done when she was with the Department of Electrical and Computer Engineering, University of British Columbia, Canada.}
\thanks{Ziao Wang is with the Department of Electrical and Computer Engineering, University of British Columbia, Vancouver, BC V6T1Z4, Canada (email: ziaow@ece.ubc.ca).}
\thanks{Weina Wang is with the Computer Science Department, Carnegie Mellon University, Pittsburgh, PA 15213, USA, (email: weinaw@cs.cmu.edu).}
\thanks{Lele Wang is with the Department of Electrical and Computer Engineering, University of British Columbia, Vancouver, BC V6T1Z4, Canada (email: lelewang@ece.ubc.ca).}
\thanks{This work was presented in part at the 2021 IEEE International Symposium on Information Theory.}
}
\begin{document}
\maketitle

\begin{abstract}
Motivated by various data science applications including de-anonymizing user identities in social networks, we consider the graph alignment problem, where the goal is to identify the vertex/user correspondence between two correlated graphs. Existing work mostly recovers the correspondence by exploiting the user-user connections. However, in many real-world applications, additional information about the users, such as user profiles, might be publicly available. In this paper, we introduce the attributed graph alignment problem, where additional user information, referred to as attributes, is incorporated to assist graph alignment.  We establish both the
{achievability and converse results}
% sufficient and necessary conditions
on recovering vertex correspondence exactly, 
% where the conditions match for a wide range of practical regimes. 
where the conditions match for certain parameter regimes.
Our results span the full spectrum between  models that only consider user-user connections and models where only attribute information is available.
\end{abstract}

% \begin{IEEEkeywords}
% Graph alignment, \er\ random graph,
% \end{IEEEkeywords}
% \IEEEpeerreviewmaketitle
% \section{Temporary}
% \input{sections/new-converse}

\section{Introduction}\label{sec:introduction}
% * Applications of the general graph alignment problem.\\
\IEEEPARstart{T}{he} graph alignment problem, also known as graph matching problem or noisy graph isomorphism problem, has received increasing attention in recent years, brought into prominence by applications in a wide range of areas \cite{Sin-Xu-global2008, Cho-Lee-progressive2012,Hag-Ng-robust2005}.  For instance, in social network deanonymization \cite{Nar-Shm-deanonymizing2009,Kor-Lat-Reconciliation2014}, two graphs are given, each of which represents the user relationship in a social network (e.g., Twitter, Facebook, Flickr, etc.). One graph is anonymized and the other graph has user identities as public information.
Then the graph alignment problem, whose goal is to find the best correspondence of the two graphs with respect to a certain criterion, can be used to de-anonymize users in the anonymous graph by finding the correspondence between them and the users with public identities in the other graph.

% * Current models (two correlated Erdos--Renyi graphs), the information-theoretic limits and computational limits on exact alignment (explain exact alignment) \\
The graph alignment problem has been studied under various random graph models, among which the most popular one is the \emph{\er\ graph pair} model (see, e.g., \cite{Ped-Gro-privacy2011, Cul-Kiy-improved2016, settling-TIT}).
In particular,
two \er\ graphs on the same vertex set, $G_1$ and $G_2$, are generated in a way such that their edges are correlated.
Then $G_1$ and an anonymous version of $G_2$, denoted as $G_2'$, are made public, where $G_2'$ is modeled as a vertex-permuted $G_2$ with an unknown permutation.
Under this model, typically the goal is to achieve the so-called \emph{exact alignment}, i.e., recovering the unknown permutation and thus revealing the correspondence for all vertices exactly.

A fundamental question in the graph alignment problem is: \emph{when is exact alignment possible?}  More specifically, \emph{what conditions on the statistical properties of the graphs are required for achieving exact alignment when given unbounded computational resources?}
Such conditions, usually referred to as \emph{information-theoretic limits}, have been established for the \er\ graph pair in a line of work \cite{Ped-Gro-privacy2011,Cul-Kiy-improved2016,settling-TIT,Cul-Kiy-exact2017}. 
{The best known information-theoretic limits are proved in~\cite{settling-TIT,Cul-Kiy-exact2017}, where the authors establish nearly matching achievability and converse bounds.}

In many real-world applications, additional information about the anonymized vertices might be available. For example, Facebook has user profiles on their website about each user's age, birthplace, hobbies, etc.
Such associated information is referred to as attributes (or features), which, unlike  user identities, are often publicly available.
Then a natural question to ask is: \emph{Can the attribute information help recover the vertex correspondence?} If so, \emph{can we quantify the amount of benefit brought by the attribute information?}
The value of attribute information has been demonstrated in the work of aligning Netflix and IMDb users by \citet{Nar-Shm-robust2008}.  They successfully recovered some of the user identities in the anonymized Netflix dataset based only on users' ratings of movies, without any information on the relationship among users. 
%In the example of aligning Netflix and IMDb databases, where each user's ratings on movies are considered as attributes of that user, \citet{Nar-Shm-robust2008} successfully recovered some of the user identities in the anonymized Netflix dataset by only comparing attributes of users without knowing the relationship network among users. 
In this paper, we incorporate attribute information to generalize the graph alignment problem. We call this problem the \emph{attributed graph alignment} problem.

%* Briefly describe our model and relate it other models (Correlated Erdos--Renyi, seeded graph alignment, database alignment)\\
%\ning{---modified intro---}

To investigate the attributed graph alignment problem, we extend the current \er\ graph pair model and we refer to this new random graph model as the attributed \er\ pair model $\mathcal{G}(n,\bm{p};m,\bm{q})$.
For a pair of graphs, $G_1$ and $G_2$, generated from the attributed \er\ pair model, each graph contains $n$ \emph{user vertices} and $m$ \emph{attribute vertices} (see Figure~\ref{fig:model}). 
Here, the user vertices represent the entities that need to be aligned; while the attribute vertices are all pre-aligned, reflecting the public availability of the attribute information. 
There are two types of edges in each graph, i.e., edges between user vertices and edges between user vertices and attribute vertices.
Here, edges between user vertices represent the relationship between users (e.g., friendship relations in a social network); edges between user vertices and attribute vertices encode the side information attached to each user (e.g., user profiles in a social network).
% there are no edges between attribute vertices. 
These two types of edges are correlatedly generated in the following way:
for a user-user vertex pair $(i,j)$, the edges connecting them follow a distribution $\bm{p}=(p_{11}, p_{10}, p_{01}, p_{00})$, where $p_{11}$ is the probability that $i$ and $j$ are connected in both $G_1$ and $G_2$, and $p_{10}, p_{01}, p_{00}$ represent the three remaining cases respectively: $i,j$ are only connected in $G_1$, only connected in $G_2$, and not connected in neither $G_1$ nor $G_2$;
for a user-attribute vertex pair, the edges connecting them are generated in a similar way following a distribution $\bm{q}=(q_{11}, q_{10}, q_{01}, q_{00})$.
This random process creates an identically labeled graph pair $(G_1,G_2)$ with similarity in both the graph topology part (user-user edges) and the attribute part (user-attribute edges). 
The graph $G_2$ is then anonymized by applying a random permutation on its \emph{user vertices} and the anonymized graph is denoted as $G_2'$. Under this formulation, our goal of attributed graph alignment is to recover this unknown permutation from $G_1$ and $G_2'$ by exploring both the topology similarity and attribute similarity. 
\begin{figure}
    \centering
    \fontsize{8pt}{2pt}
    \def\svgwidth{0.5\columnwidth}
    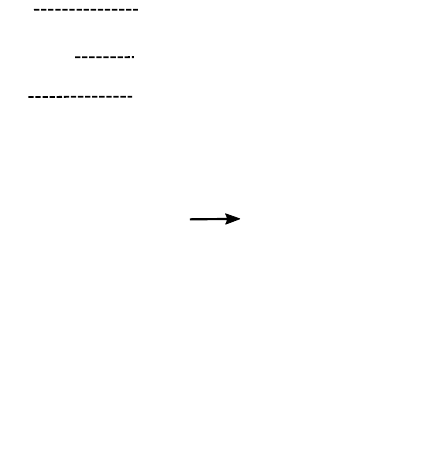
    \vspace{-1.2in}
    \caption{Example of attributed \er\ graph pair: Graph $G_1$ and $G_2$ are generated on the same set of vertices. Anonymized graph $G_2'$ is obtained through applying $\Pi^* = (1)(2,3)$ only on $\va$ of $G_2$ (permutation $\Pi^*$ is written in cycle notation).\vspace{-0.25in}}
    \label{fig:model}
\end{figure}

Under our attributed \er\ pair model, we use the maximum a posterior (MAP) estimator for aligning $(G_1,G_2')$, and establish the achievability and converse results for exact alignment. 
To get an intuitive understanding of how the existence of attribute information contributes to exact graph alignment, we present a simplified result by restricting the graph parameters to a certain regime, while deferring the general result to Section~\ref{sec:main-result}. In this regime we assume that the correlation coefficient of the user-user edges is at least $\Omega(\frac{(\log n)^2}{\sqrt{n}})$ and correlation coefficient of the user-attribute edges is at least $\Omega(\frac{(\log n)^{3/2}}{\sqrt{m}})$. Together with two other conditions on the edge sparsity, we establish the following asymptotically matching achievability and converse results as $n\rightarrow \infty$ (See Corollary~\ref{Coro:achievability} for the formal statement).
\begin{itemize}[leftmargin=14pt]
    \item If $n p_{11} +mq_{11}-\log{n} \rightarrow \infty$, then there exists an algorithm that achieves exact alignment with high probability (w.h.p.).
    \item If $np_{11}+mq_{11}-\log{n} \rightarrow -\infty$, then no algorithm guarantees exact alignment w.h.p.
\end{itemize}
The achievability and converse results are {illustrated} in Figure~\ref{fig:feasible-infeasible-region}. Here, $np_{11}$ is the average number of common users between $G_1$ and $G_2$ that are connected to an identical user vertex, and $mq_{11}$ is the average number of common attributes. 
Intuitively, the key quantity $np_{11}+mq_{11}$ (average common vertex degree) quantifies the topology and attribute similarity between $G_1$ and $G_2$.
The above results simply show that if this similarity measure is large enough, then exact alignment is achievable, or otherwise no algorithm can exactly recover the true alignment.
It is also worth noting that the average common vertex degree in attribute $mq_{11}$ highlights the extra benefit from attribute information, compared to the achievability result $np_{11} - \log n \to \infty$ when the attribute is not available.

%% Old Version 2022-Jul-26
% Here $np_{11}$ is the average number of users connected to a single user (average vertex degree in user) in the intersection graph $G_1 \cap G_2$ , and
% $mq_{11}$ is the average number of attributes connected to a single user in  $G_1 \cap G_2$ (average vertex degree in attribute). 
% Intuitively, $np_{11}+mq_{11}$ (overall average vertex degree in $G_1 \cap G_2$) quantifies the structure and attribute similarity between $G_1$ and $G_2$. The above results show that if this overall average vertex degree is large enough, then exact alignment is achievable, or otherwise no algorithm can exactly recover the true alignment.
% It is also worth noticing that the vertex degree in attribute out of the overall vertex degree highlights the extra benefit from attribute information. 
% The achievability and converse results are {illustrated} in Figure~\ref{fig:feasible-infeasible-region}. 
\begin{figure}
    \centering
    \fontsize{9pt}{9pt}
        \includegraphics[width=0.5\columnwidth]{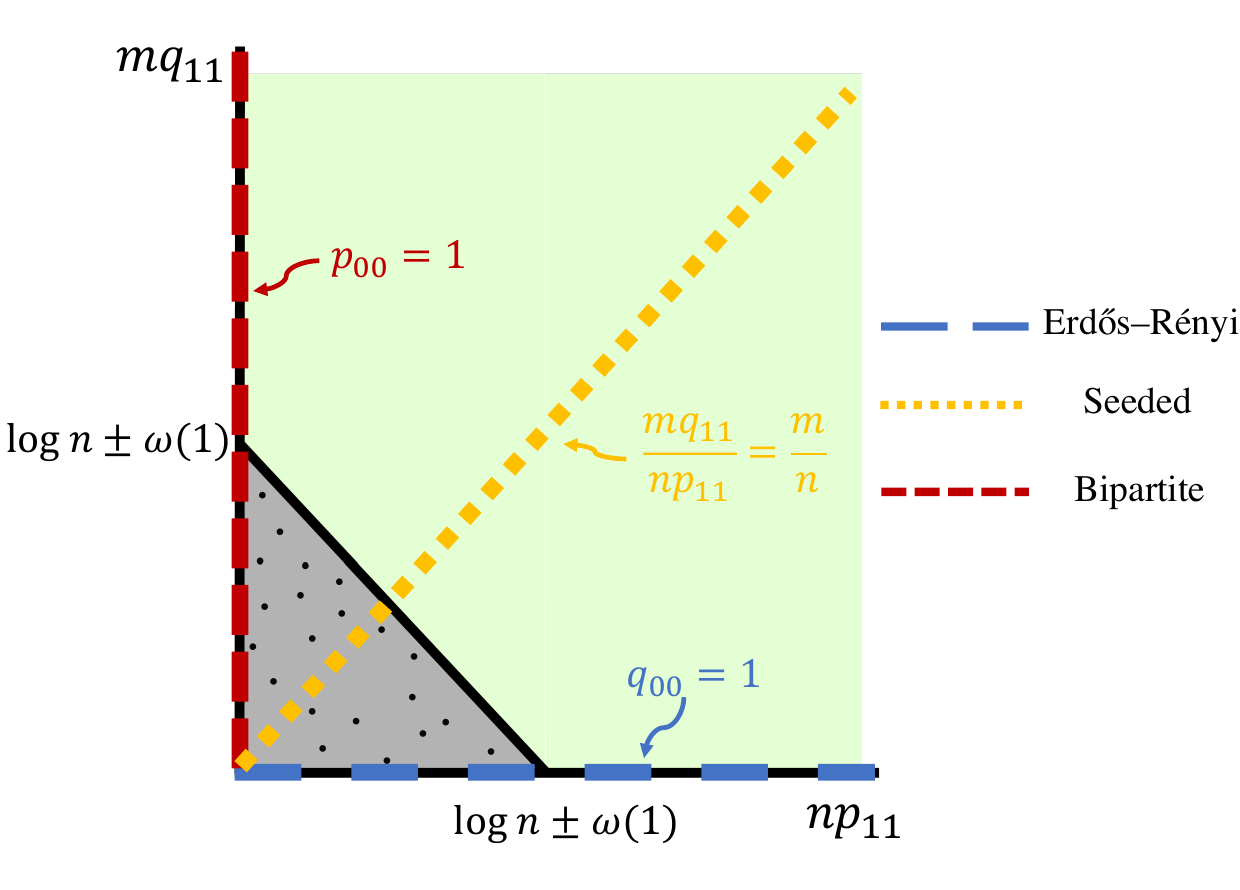}
    \caption{The green region in the figure is information theoretically achievable and the shaded grey region is not achievable. The three lines in the figure represent three specialized settings: 
    the blue line (correlated \er\ model) is obtained by setting $q_{00}=1$;
    the yellow line (seeded \er\ model) is obtained by setting $\bm{p}=\bm{q}$;
    the red line (correlated bipartite model) is obtained by setting $p_{00}=1$.
    Their intersections with the achievable and non-achievable region give the information-theoretic limits of the correlated \er\ model, seeded \er\ model and the correlated bipartite model separately.
    \vspace{-0.25in}
    }
    \label{fig:feasible-infeasible-region}
\end{figure}

From the information-theoretic limits we derive for the attributed graph pair, we could obtain information-theoretic limits on other existing random graph models as special cases (see Figure~\ref{fig:feasible-infeasible-region}). 
Below we highlight how the specializations of our results compare with the existing graph alignment literature under three specialized settings.
% Below we highlight how our results, by comparing with the three specialized settings, help answer some of the existing problems in the graph alignment literature.
The detailed comparison is given in Section~\ref{sec:comparison}. 
\begin{itemize}[leftmargin=14pt]
    \item 
    Specializing our model by setting %$m=0$, we remove the attribute vertices and obtain the correlated \er\ graph pair.
    % \ning{How about we just say $q_{00}=1$?}
    $q_{00}=1$, we remove the effect of the attribute vertices and get the correlated \er\ graph pair model.
    % Our specialized results recover the information-theoretic limits on \er\ graph alignment in \cite{settling-TIT,Cul-Kiy-exact2017}. 
    Our specialized results recover the information-theoretic limits on \er\ graph alignment in \cite{Cul-Kiy-exact2017}.
    % Comparing the specialized and un-specialized results allows us to quantify the benefit brought by the attribute information.
    \item Specializing our model by setting $\bm{p}=\bm{q}$, we can then treat the $m$ attribute vertices as pre-aligned user vertices and obtain the seeded \er\ model. 
    % Compared to the best-known information-theoretic limits in~\cite{Cul-Kiy-exact2017,Mos-Xu-seeded2020,wang-2022-feasible}, our specialized results strictly improves the best-known achievability and converse, establishing the tight threshold for the seeded graph alignment problem in the sparse regime.
    % Combining our specialized results and the best-known results in \cite{Cul-Kiy-exact2017,Mos-Xu-seeded2020,wang-2022-feasible}, the state-of-the-art information-theoretic limits of seeded graph alignment are strictly improved.
    Our specialized result reveals certain achievable and converse region that is unknown in the literature \cite{Cul-Kiy-exact2017,Mos-Xu-seeded2020,wang-2022-feasible}.
    % \sout{In contrast, existing work on the seeded problem mainly focuses on characterizations of computational feasibility bounds, and there are no known results for the information-theoretic limits.}
    \item Specializing our model by setting $p_{00}=1$, we remove all of the user-user edges and obtain the correlated bipartite graph pair model. 
    Our specialized achievability result recovers the best-known achievable region for bipartite alignment in the literature~\cite{Cul-Mit-fundamental2018}, and reveals certain converse region that is unknown in~\cite{Cul-Mit-fundamental2018}.

\end{itemize}

The main contributions of this paper are summarized as follows.
\begin{enumerate}[leftmargin=14pt]
    \item \emph{\textbf{Model Formulation.}} 
    We {propose} the attributed \er\ pair model, which incorporates both the graph topology similarity and the attribute similarity. Such model formulation allows us to align graphs with the assistance of publicly available side information.
    Moreover, our model serves as a unifying setting in the graph alignment literature and includes several popular models as its special cases.

    \item \emph{\textbf{Information theoretic limits.}} We {establish} achievability and converse results on exactly aligning random attributed graphs, where the conditions are tight under certain parameter regimes.
    Our results span the full spectrum from the traditional \er\ pair model where only the user relationship networks are available to models where only attribute information is available, unifying the existing results in each of these settings. 
    % When specialized to the seeded graph alignment and bipartite graph alignment models, our bounds strictly improve the best-known information-theoretic limits for these models.
    % When specialized to the seeded graph alignment and bipartite graph alignment models, by combining our bounds with the existing results, the state-of-the-art information-theoretic limits are strictly improved.
    When specialized to the seeded graph alignment and bipartite graph alignment models, our result reveals certain achievable and converse region that is unknown in the literature.
    \item \emph{\textbf{Proof techniques.}} The proof techniques for the achievability results are mainly inspired by the previous study on \er\ graph alignment \cite{Cul-Kiy-exact2017}. For the converse results, we study the phase-transition phenomenon on the existence of \emph{indistinguishable vertex pairs}, which may be of independent interest.

\end{enumerate}

\subsection{Related work}
% empirical
% seeded: noisy/accurate
% SBM with community label
% determinstic study 
% \ning{[All in present tense]}\\
The exact graph alignment problem has been studied under various random graph models.
One of the most popular random graph models is the correlated \er\ pair model   $\mathcal{G}(n,\bm{p})$, which generates simple graph pairs without any side information.
% The most popular setting in graph alignment without side information is the correlated \er\ pair model $\mathcal{G}(n,\bm{p})$.
Under this model, the optimal alignment strategy, derived from the MAP estimator, is enumerating all possible permutations in order to make the two graphs achieve the maximum edge overlap.
While the optimal strategy requires exponential time complexity, numerous studies have proposed polynomial-time approaches that exactly solve the graph alignment problem with high probability \cite{dai-2019-analysis,Ding-2020-efficient,Fan-2020-spectral,Mao-2021-constant,Mao2022}.

Here, we do not attempt to provide further detailed discussions on efficient algorithms, but focus on surveying the information-theoretic limits of exact alignment.
Currently, the best-known information-theoretic limits on \er\ graph alignment are shown in \cite{Cul-Kiy-exact2017} and \cite{settling-TIT} by analyzing error event of the MAP estimator. 
In \cite{Cul-Kiy-exact2017}, the authors prove achievability in the regime $n (\sqrt{p_{11}p_{00}}-\sqrt{p_{10}p_{01}})^2 \geq (2+\epsilon)\log n$. Under certain sparsity conditions, they also show that the achievable region can be improved to $np_{11} \geq \log n +\omega(1)$. In \cite{settling-TIT}, the authors consider a special case of the \er{} graph pair model called \emph{symmetric subsampling model}. In this model, it is assumed that 
\begin{equation}
\label{eq:subsamp_mod}
    p_{11}=ps^2,\; p_{01}=p_{10}=ps(1-s),\; p_{00}=1-2ps+ps^2
\end{equation}
for some $p,s\in [0,1]$.
Under this model, the authors prove the achievability in the regime $n(\sqrt{p_{11}p_{00}}-\sqrt{p_{10}p_{01}})^2 \geq (1+\epsilon)\log n$.
For the converse, \cite{Cul-Kiy-exact2017} proves that the permutation cannot be exactly recovered with high probability if $np_{11}\leq \log n -\omega(1)$ by showing the existence of isolated vertices in the intersection graph $G_1 \wedge G_2$. 
Under the general \er\ graph pair model, \cite{settling-TIT} shows the impossibility of exactly recovering the permutation in the regime $n(\sqrt{p_{11}p_{00}} - \sqrt{p_{10}p_{01}})^2 \leq (1-\epsilon)\log n$ by showing the existence of permutations that fails the MAP estimator by swapping two vertices.
To summarize the aforementioned results, matching achievability and converse for exact recovery is derived under certain sparsity assumptions in~\cite{Cul-Kiy-exact2017}, 
% and almost tight achievability and converse with a gap of $\Theta(\log n)$ is derived for the special case of symmetric subsampling model is presented in~\cite{settling-TIT}. 
and for the special case of the symmetric subsampling model, \cite{settling-TIT} provides almost tight achievability and converse bounds, with a gap of width $2\epsilon\log n$ between the established bounds.
Closing the gap for the general \er{} graph pair model is still an open problem.

Recently, there has been a growing interest in studying graph alignment with side information.
For example, in the seeded alignment setting, the side information appears in the form of a partial observation of the latent alignment.
For the seeded graph alignment problem, there have been a number of studies concentrating on designing polynomial-time algorithms with performance guarantees\cite{lyzinski-2014-seeded, Fis-Don-Ada-seeded2019, Mos-Xu-seeded2020}.
Some other more general settings treat any form of side information as vertex attributes and formulate this as the attribute graph alignment problem \cite{zhang-tong-2018attributed}.

% The side information, in the seeded alignment problems, appears in the form of partial observation of the latent alignment.
% There has been various studies focus on designing polynomial-time algorithms and providing performance guarantees for the seeded graph alignment problem  \cite{lyzinski-2014-seeded, Fis-Don-Ada-seeded2019, Mos-Xu-seeded2020}.
% Some other settings, which are more general than the seeded models, treat any form of side information as vertex attributes and formulate this as the attribute graph alignment problem.
There is a line of empirical studies on the attributed graph alignment \cite{zhang-tong-2016final, zhang-tong-2018attributed, zhou-ying-tong-2021attent}, yet, to the best of our knowledge, there is no known result on information-theoretic limits on graph alignment with attribute information. 

% We comment that our proposed attributed alignment problem can also be viewed as a graph alignment problem with part of vertices correctly pre-aligned, known as the \emph{seeded graph alignment} problem. Efficient algorithms and feasible regions for seeded graph alignment have been studied in \cite{KorLat_2014, Mos-Xu-seeded2020}. 

% \emph{\textbf{Reminder of the Landau notation.}}
% \begin{table}[!h]
% \centering
% \vspace{-0.1in}
% \begin{tabular}{ c c } 
%  \toprule
%  Notation & Definition\\ 
%  \midrule
%  $f(n)= \omega(g(n))$ & $\lim\limits_{n\rightarrow\infty} \frac{|f(n)|}{g(n)} = \infty$ \\
%  $f(n) = o(g(n))$ & $\lim\limits_{n \to \infty} \frac{|f(n)|}{g(n)} = 0$\\ 
%  $f(n)= O(g(n))$ & $\limsup\limits_{n\rightarrow\infty}\frac{|f(n)|}{g(n)} < \infty$\\
%  $f(n) = \Omega(g(n))$ & $\liminf\limits_{n\rightarrow \infty} \frac{|f(n)|}{g(n)} >0$\\
%  $f(n) = \Theta(g(n))$  & $f(n) = O(g(n))$ and $f(n) = \Omega(g(n))$\\
%  \bottomrule
% \end{tabular}
% \end{table}

% \section{related work}\label{sec:related}
% \input{sections/related}

\section{Model}\label{sec:model}

In this section, we describe the attributed \er\ graph pair model. 
{Under this model formulation, we formally define the exact attributed graph alignment problem.}
% We propose and formally define the attributed graph alignment problem under this model. 
An illustration of the model is given in Figure~\ref{fig:model}.

\emph{\textbf{User vertices and attribute vertices.}}
We first generate two graphs, $G_1$ and $G_2$, on the same vertex set $\mathcal{V}$.  The vertex set $\mathcal{V}$ consists of two disjoint sets of vertices, the \emph{user vertex set} $\vu$ and the \emph{attribute vertex set} $\va$, i.e., $\mathcal{V}=\vu\cup\va$. Assume that the user vertex set $\vu$ consists of $n$ vertices, labeled as $[n]\triangleq\{1, 2, 3,...,n\}$.  Assume that the attribute vertex set $\va$ consists of $m$ vertices, and $m$ scales as a function of $n$.

%\smallskip
\emph{\textbf{Correlated edges.}}
To describe the probabilistic model for edges in $G_1$ and $G_2$, we first consider the set of user-user vertex pairs $\eu\triangleq\vu\times\vu$ and the set of user-attribute vertex pairs $\ea\triangleq\vu\times\va$.  Then for each vertex pair $e\in \mathcal{E}\triangleq\eu\cup\ea$, we write $G_1(e)=1$ (resp.\ $G_2(e)=1$) if there is an edge connecting the two vertices in the pair in $G_1$ (resp.\ $G_2$), and write $G_1(e)=0$ (resp.\ $G_2(e)=0$) otherwise. Since we often consider the same vertex pair in both $G_1$ and $G_2$, we write $(G_1,G_2)(e)$ as a shortened form of $(G_1(e),G_2(e))$.

The edges of $G_1$ and $G_2$ are then correlatedly generated in the following way.  For each user-user vertex pair $e\in\eu$, $(G_1,G_2)(e)$ follows the joint distribution specified by
\begin{equation}
\label{Eq user pair distribution}
    (G_1,G_2)(e) =
    \begin{cases}
    (1,1)\;\ \text{w.p.} \;\ p_{11},\\
    (1,0)\;\ \text{w.p.} \;\ p_{10},\\
    (0,1)\;\ \text{w.p.} \;\ p_{01},\\
    (0,0)\;\ \text{w.p.} \;\ p_{00},
    \end{cases}
\end{equation}
where $p_{11},p_{10},p_{01},p_{00}$ are probabilities that sum up to~$1$. 
For each user-attribute vertex pair $e\in\ea$, $(G_1,G_2)(e)$ follows the joint probability distribution specified by
\begin{equation}
\label{Eq attribute pair distribution}
    (G_1,G_2)(e) =
    \begin{cases}
    (1,1)\;\ \text{w.p.} \;\ q_{11},\\
    (1,0)\;\ \text{w.p.} \;\ q_{10},\\
    (0,1)\;\ \text{w.p.} \;\ q_{01},\\
    (0,0)\;\ \text{w.p.} \;\ q_{00},
    \end{cases}
\end{equation}
where $q_{11},q_{10},q_{01},q_{00}$ are probabilities that sum up to~$1$. 
The correlation between $G_1(e)$ and $G_2(e)$ is measured by the correlation coefficient defined as
\begin{align*}
    \rho(e) &\triangleq \frac{\cov(G_1(e),G_2(e))}{\sqrt{\var[{G_1(e)}]}\sqrt{\var[G_2(e)]}},
\end{align*}
where $\cov(G_1(e),G_2(e))$ is the covariance between $G_1(e)$ and $G_2(e)$ and $\var[G_1(e)]$ and $\var[G_2(e)]$ are the variances. 
We assume that $G_1(e)$ and $G_2(e)$ are positively correlated, i,e., $\rho(e) > 0$ for every vertex pair $e$.
%\ning{A vertex pair $e$ in $G_1$ and $G_2$ represents the same entities, thus the correlation between $G_1(e)$ and $G_2(e)$ is non-negative, i,e., $\rho \geq 0$ for every vertex pair.}
Across different vertex pair $e$'s, the $(G_1,G_2)(e)$'s are independent.
Finally, recall that there are no edges between attribute vertices in our model.

For compactness of notation, we represent the joint distributions in \eqref{Eq user pair distribution} and \eqref{Eq attribute pair distribution} in the following matrix form:
\begin{align*}
    \bm{p} =
    \begin{pmatrix}
    p_{11} & p_{10}\\
    p_{01} & p_{00}
    \end{pmatrix}
    \quad \text{ and }\quad
    \bm{q} =
    \begin{pmatrix}
    q_{11} & q_{10}\\
    q_{01} & q_{00}
    \end{pmatrix}.
\end{align*}
We refer to the graph pair $(G_1,G_2)$ as an attributed \er\ pair $\mathcal{G}(n, \bm{p},m,\bm{q})$.
Note that this model is equivalent to the subsampling model in the literature \cite{Ped-Gro-privacy2011}.

\emph{\textbf{Anonymization and exact alignment.}}
In the attributed graph alignment problem, we are given $G_1$ and an anonymized version of $G_2$, denoted as $G_2'$.
The anonymized graph $G_2'$ is generated by applying a random permutation $\Pi^*$ on the user vertex set of $G_2$, where the permutation $\Pi^*$ is unknown.
More explicitly, each user vertex $i$ in $G_2$ is re-labeled as $\Pi^*(i)$ in $G_2'$.
The permutation $\Pi^*$ is chosen uniformly at random from $\mathcal{S}_n$, where $\mathcal{S}_n$ is the set of all permutations on $[n]$. Since $G_1$ and $G_2'$ are observable, we refer to $(G_1,G_2')$ as the \emph{observable pair} generated from the attributed \er\ pair $\mathcal{G}(n, \bm{p},m,\bm{q})$.

Then the graph alignment problem, i.e., the problem of recovering the identities/original labels of user vertices in the anonymized graph $G_2'$, can be formulated as a problem of estimating the underlying permutation $\Pi^*$.
The goal of graph alignment is to design an estimator $\hat{\pi}(G_1,G_2')$ as a function of $G_1$ and $G_2'$ to best estimate $\Pi^*$. We say $\hat{\pi}(G_1,G_2')$ achieves \emph{exact alignment} if $\hat{\pi}(G_1,G_2')=\Pi^*$. The probability of error for exact alignment is defined as $\prob(\hat{\pi}(G_1, G_2') \neq \Pi^*)$. We say exact alignment is achievable \emph{with high probability (w.h.p.)} if there exists $\hat{\pi}$ such that
$\lim_{n \to \infty}\prob(\hat{\pi}(G_1, G_2') \neq \Pi^*) = 0.$
% \begin{equation*}
% \lim_{n \to \infty}\prob(\hat{\pi}(G_1, G_2') \neq \Pi^*) = 0.
% \end{equation*}

\emph{\textbf{Reminder of the Landau notation.}}
\begin{table}[!h]
\centering
\vspace{-0.1in}
\begin{tabular}{ c c } 
 \toprule
 Notation & Definition\\ 
 \midrule
 $f(n)= \omega(g(n))$ & $\lim\limits_{n\rightarrow\infty} \frac{|f(n)|}{g(n)} = \infty$ \\
 $f(n) = o(g(n))$ & $\lim\limits_{n \to \infty} \frac{|f(n)|}{g(n)} = 0$\\ 
 $f(n)= O(g(n))$ & $\limsup\limits_{n\rightarrow\infty}\frac{|f(n)|}{g(n)} < \infty$\\
 $f(n) = \Omega(g(n))$ & $\liminf\limits_{n\rightarrow \infty} \frac{|f(n)|}{g(n)} >0$\\
 $f(n) = \Theta(g(n))$  & $f(n) = O(g(n))$ and $f(n) = \Omega(g(n))$\\
 \bottomrule
\end{tabular}
\vspace{-0.25in}
\end{table}

\section{Main results}\label{sec:main-result}

In this section, we state the achievability results (Theorem~\ref{Thm:achievability-general} and Theorem~\ref{Thm:achievability-sparse}) and the converse result (Theorem~\ref{Thm:converse}). To better demonstrate the benefit from attribute information, we also present a simplified version of the results 
under certain sparsity and correlation assumptions
% under some mild and practical assumptions 
as Corollary~\ref{Coro:achievability}.

Throughout the remainder of the paper, we define 
\begin{align}
\psiu &\triangleq (\sqrt{p_{11}p_{00}}-\sqrt{p_{10}p_{01}})^2\\
\psia &\triangleq (\sqrt{q_{11}q_{00}}-\sqrt{q_{10}q_{01}})^2.
\end{align}
\begin{theorem}[General achievability]
\label{Thm:achievability-general}
Consider the attributed \er\ pair $\mathcal{G}(n,\bm{p};m,\bm{q})$.
If
\begin{equation}
\label{Eq:general-feasible-region}
    \tfrac{1}{2}n\psiu + m\psia -\log{n} = \omega(1),
\end{equation}
% \sout{where $\psiu = (\sqrt{p_{11}p_{00}}-\sqrt{p_{10}p_{01}})^2$ and $\psia = (\sqrt{q_{11}q_{00}}-\sqrt{q_{10}q_{01}})^2$,}
then the MAP estimator achieves exact alignment w.h.p.
\end{theorem}
\begin{restatable}[Achievability in sparse region]{theorem}{LemmaSparse}
\label{Thm:achievability-sparse}
Consider the attributed \er\ pair $\mathcal{G}(n,\bm{p};m,\bm{q})$.
If
\begin{align}
    p_{11} &= O\left(\tfrac{\log{n}}{n}\right),\label{eq:lem31}\\ 
    p_{10}+p_{01} &= O\left(\tfrac{1}{\log{n}}\right),\label{eq:lem32}\\
    \frac{p_{10}p_{01}}{p_{11}p_{00}} &= O\left(\tfrac{1}{(\log{n})^3}\right),\label{eq:lem33}\\
    np_{11}+m \psia -\log{n} &= \omega(1),\label{eq:lem34}
\end{align}
then the MAP estimator achieves exact alignment w.h.p.
\end{restatable}

\begin{restatable}[Converse]{theorem}{ThmConverse}\label{Thm:converse}
Consider the attributed \er\ pair $\mathcal{G}(n, \bm{p}; m,\bm{q})$. If 
\begin{equation}
\label{eq:converse}
    -n\log(1-2p_{11}+2p_{11}^2)-m\log(1-2q_{11}+2q_{11}^2)-2\log n\rightarrow -\infty,
\end{equation}
then for any estimator, the probability of error is bounded away from zero.
\end{restatable}

To better illustrate the benefit of attribute information in the graph alignment problem, we present in Corollary~\ref{Coro:achievability} a simplified version of our achievability result by adding
% a few mild assumptions on the graph parameters motivated by practical applications. 
certain conditions on the sparsity and correlation of the two graphs.
To make the notation compact, we consider the equivalent expression from the subsampling model, where
\begin{align*}
    \begin{pmatrix}
    p_{11} & p_{10}\\
    p_{01} & p_{00}
    \end{pmatrix}
    &=\begin{pmatrix}
    p s_{\mathrm{u},1} s_{\mathrm{u},2} & p s_{\mathrm{u},1} (1-s_{\mathrm{u},2})\\
    p (1-s_{\mathrm{u},1}) s_{\mathrm{u},2} & p(1-s_{\mathrm{u},1})(1-s_{\mathrm{u},2}) + 1-p
    \end{pmatrix},
    % \\
    % \begin{pmatrix}
    % q_{11} & q_{10}\\
    % q_{01} & q_{00}
    % \end{pmatrix}
    % &=\begin{pmatrix}
    % q s'_1s'_2 & q s'_1 (1-s'_2)\\
    % q (1-s'_1) s'_2 & q(1-s'_1)(1-s'_2) + 1-q
    % \end{pmatrix}.
\end{align*}
and
\begin{align*}
    \begin{pmatrix}
    q_{11} & q_{10}\\
    q_{01} & q_{00}
    \end{pmatrix}
    &=\begin{pmatrix}
    q s_{\mathrm{a},1} s_{\mathrm{a},2} & q s_{\mathrm{a},1} (1-s_{\mathrm{a},2})\\
    q (1-s_{\mathrm{a},1}) s_{\mathrm{a},2} & q(1-s_{\mathrm{a},1})(1-s_{\mathrm{a},2}) + 1-q
    \end{pmatrix}.
    % \\
    % \begin{pmatrix}
    % q_{11} & q_{10}\\
    % q_{01} & q_{00}
    % \end{pmatrix}
    % &=\begin{pmatrix}
    % q s'_1s'_2 & q s'_1 (1-s'_2)\\
    % q (1-s'_1) s'_2 & q(1-s'_1)(1-s'_2) + 1-q
    % \end{pmatrix}.
\end{align*}
Under the subsampling model, the generation of $G_1$ and $G_2$ is modelled as a two-step random process. We first generate a base graph $G$, where an edge exists between each user-user pair with probability $p$ and an edge exists between each user-attribute pair with probability $q$. To generate graph $G_1$, each user-user edge in $G$ is kept with probability $s_{\mathrm{u},1}$ and each user-attribute edge in $G$ is kept with probability $s_{\mathrm{a},1}$. Similarly $G_2$ is generated by keeping each user-user edge in $G$ with probability $s_{\mathrm{u},2}$ and each user-attribute edge in $G$ with probability $s_{\mathrm{a},2}$. A random permutation is then applied on the user of $G_2$ to generate $G_2'$.
As a mild restriction on the sparsity of the graphs, we assume that the base graph edge probabilities $p$ and $q$ are not going to 1, i.e.,
\begin{align}
\label{eq:cond-nonedgeProb}
    1- p=\Theta(1), \\
    \label{eq:cond-attedgeProb}
    1- q=\Theta(1).
\end{align}
% In a typical social network, the degree of a vertex is much smaller than the total number of users. Based on this observation, we assume that the base graph edge probabilities $p$ and $q$ are not going to 1, i.e.,
% \begin{align}
% \label{eq:cond-nonedgeProb}
%     1- p=\Theta(1), \\
%     \label{eq:cond-attedgeProb}
%     1- q=\Theta(1).
% \end{align}
Moreover, we assume the following bounds on the vanishing speed of subsampling probabilities $s_{\mathrm{u},1}, s_{\mathrm{u},2}, s_{\mathrm{a},1}$ and $s_{\mathrm{a},2}$
\begin{equation}
    \label{eq:cond-rho}
    s_{\mathrm{u},1}s_{\mathrm{u},2}=\Omega\left(\frac{(\log n)^4}{n}\right).
\end{equation}
\begin{equation}
    \label{eq:cond-rhoa}
    s_{\mathrm{a},1}s_{\mathrm{a},2}=\Omega\left(\frac{(\log n)^3}{m}\right).
\end{equation}
\begin{corollary}[Simplified achievability]
\label{Coro:achievability}
Consider the attributed \er\ pair $\mathcal{G}(n,\bm{p};m,\bm{q})$.
Under conditions \eqref{eq:cond-nonedgeProb}-\eqref{eq:cond-rhoa}, we have tight achievability and converse for exact recovery. That is if 
\begin{equation}
\label{eq:coroeq0}
n p_{11} +mq_{11}-\log{n} \rightarrow \infty,
\end{equation}
then the MAP estimator achieves exact alignment \textit{w.h.p}, and if
\begin{equation}
\label{eq:coroeq1}
n p_{11} +mq_{11}-\log{n} \rightarrow -\infty,
\end{equation}
then the error probability of any estimator is bounded away from zero.
\end{corollary}

The proof of Corollary~\ref{Coro:achievability} can be found in Appendix~\ref{appx:coro-proof}. We visualize the matching achievability and converse results under conditions \eqref{eq:cond-nonedgeProb}-\eqref{eq:cond-rhoa} in Figure~\ref{fig:feasible-infeasible-region}. How to close the gap in the case where at least one of \eqref{eq:cond-nonedgeProb}-\eqref{eq:cond-rhoa} is not satisfied is an open problem.

% \begin{figure}
%     \centering
%      \fontsize{9pt}{9pt}
%         \includegraphics[width=0.5\columnwidth]{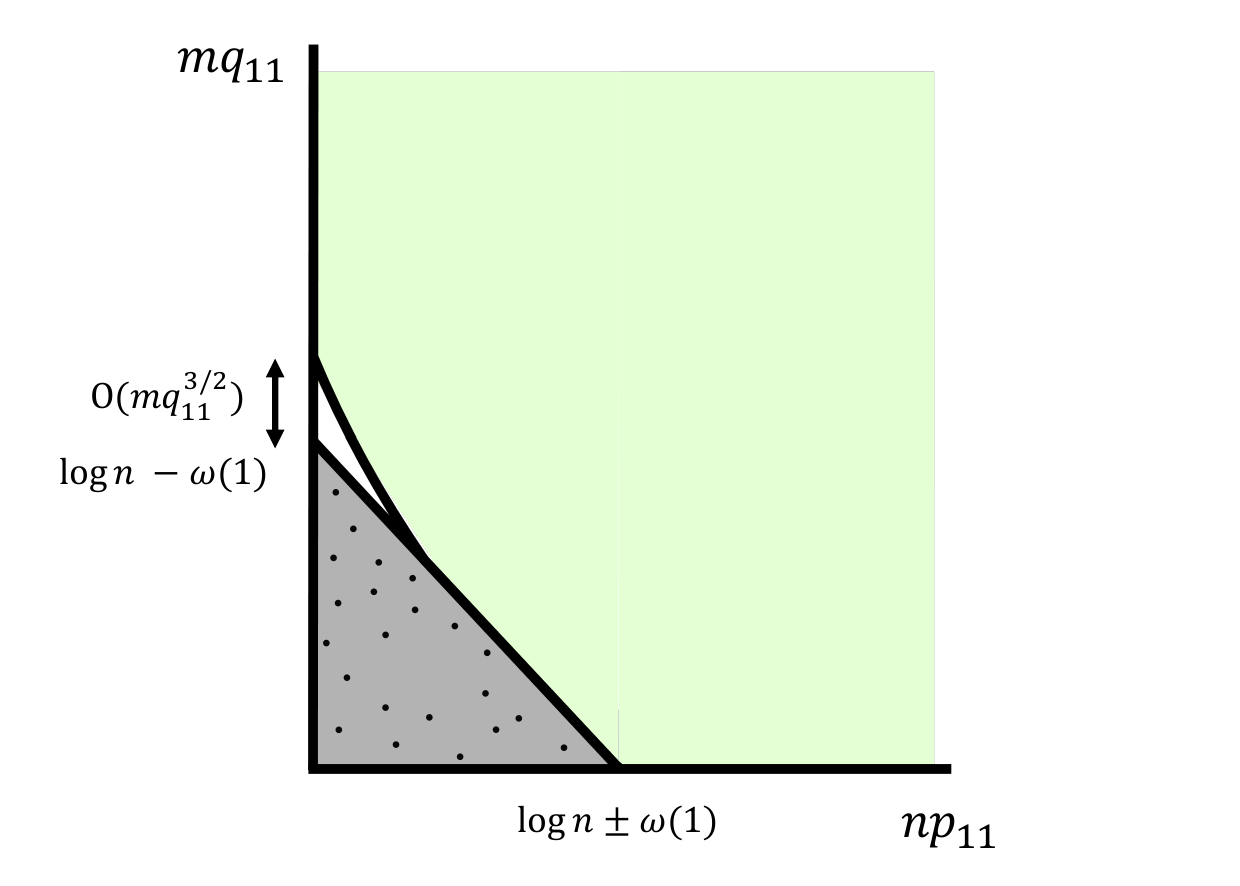}
%     \caption{The green region is information theoretically achievable and is specified by condition~\eqref{eq:coroeq2}; the shaded grey region in not achievable according the Theorem~\ref{Thm:converse}. The gap between the achievability and converse represent $q_{11}-\psia=O(q_{11}^{3/2})$. In particular, this gap is negligible up to $\pm\omega(1)$ when $m=\Omega((\log n)^3)$, which is the special case presented in Figure~\ref{fig:feasible-infeasible-region}.  }
%     \label{fig:corollary}
% \end{figure}

\section{Comparison}\label{sec:comparison}

% \ning{add references }
In this section, we specialize our main results (Theorems~\ref{Thm:achievability-general}, ~\ref{Thm:achievability-sparse}, and~\ref{Thm:converse}) on exact alignment of the attributed \er\ pair model to three closely related graph alignment problems: the \er\ graph alignment, the seeded \er\ graph alignment, and the bipartite graph alignment. We compare them with the best-known results in the literature. The main purpose of this comparison is to illustrate that our general results on the attributed graph alignment problem can recover most of the best-known existing results on these three specialized problems, and improve the state-of-the-art in certain cases. While it is possible that the best-known results can be further improved to get sharper bounds, this refinement is not the main focus of our comparison.

\subsection{\er\ graph pair}
The correlated \er\ pair model $\mathcal{G}(n,\bm{p})$ is the setting most commonly studied for graph alignment tasks that consider only graph topology similarity \cite{Ped-Gro-privacy2011, Cul-Kiy-improved2016,  settling-TIT,Cul-Kiy-exact2017, dai-2019-analysis}. This model generates graph pairs that contain only user vertices. For a pair of graphs $G_1$, $G_2$ obtained from this model $\mathcal{G}(n,\bm{p})$, we use $\vu$ to denote their vertex set and $|\vu|=n$. 
The edges in $G_1$ and $G_2$ are generated jointly in the following way: for a pair of users $e \in {\vu \choose 2}$, we have
\begin{equation}
\label{Eq ER pair distribution}
    (G_1,G_2)(e) =
    \begin{cases}
    (1,1)\;\ \text{w.p.} \;\ p_{11},\\
    (1,0)\;\ \text{w.p.} \;\ p_{10},\\
    (0,1)\;\ \text{w.p.} \;\ p_{01},\\
    (0,0)\;\ \text{w.p.} \;\ p_{00}.
    \end{cases}
\end{equation}
The anonymized graph $G_2'$ is obtained by applying a random permutation $\Pi^*$ on the vertices of $G_2$. This model can be specialized from the attributed graph pair model by setting the number of attributes $m = 0$ or $q_{00} = 1$. For aligning the correlated \er\ pair, the best-known information-theoretic limits are established in \cite{settling-TIT,Cul-Kiy-exact2017} and we state the combined results here for ease of comparison.

\begin{theorem}[Best-known information theoretic limits \cite{settling-TIT,Cul-Kiy-exact2017}]
\label{thm:sota ER}
Consider the correlated \er\ pair $\mathcal{G}(n,\bm{p})$.\\
\underline{Achievability:}\\
If 
\begin{align}
\label{eq:spER_A1}
    n \psiu \geq 2\log n + \omega(1),
\end{align}
or
\begin{align}
    p_{11} &= O\left(\tfrac{1}{\log{n}}\right),
    \label{eq:spER_A21}\\ 
    p_{10}+p_{01} &= O\left(\tfrac{1}{\log{n}}\right),
    \label{eq:spER_A22}\\
    \frac{p_{10}p_{01}}{p_{11}p_{00}} &= O\left(\tfrac{1}{(\log{n})^3}\right),
    \label{eq:spER_A23}\\
    np_{11} &= \log{n}+\omega(1),
    \label{eq:spER_A24}
\end{align}
then the MAP estimator achieves exact alignment w.h.p.\\
\underline{Converse:} \\
If there exist a constant $\epsilon \in (0,1)$ such that
\begin{align}
    n \psiu \leq (1-\epsilon) \log n, \label{eq:spERcon1}
\end{align}
or
\begin{align}
    np_{11} \leq \log n -\omega(1), \label{eq:spERcon2}
\end{align}
then for any estimator, the probability of error is bounded away from zero.
\end{theorem}

\begin{remark}
We point out that in the dense regime, i.e., at least one of conditions~\eqref{eq:spER_A21},~\eqref{eq:spER_A22} and~\eqref{eq:spER_A23} is not satisfied, Theorem 4 in~\cite{settling-TIT} provides a tighter achievability result. 
% However, the result is limited to the \emph{symmetric} \er{} pair model with $p_{01}=p_{10}$. 
However, as we mentioned in the introduction, the result is limited to the symmetric subsampling model~\eqref{eq:subsamp_mod}.
In this section, we focus on the comparison under the \emph{general} \er{} pair model, so the result from Theorem 4 in~\cite{settling-TIT} is not listed as one of the best known information theoretic limit. On the other hand, the converse result in~\cite{settling-TIT} is not limited to the symmetric subsampling model. Thus, we include the result as equation~\eqref{eq:spERcon1} in Theorem~\ref{thm:sota ER}.
\end{remark}
We now specialize the attributed \er{} pair model to the correlated \er{} pair by setting $q_{00} = 1$. Theorems~\ref{Thm:achievability-general}, ~\ref{Thm:achievability-sparse}, and~\ref{Thm:converse} simplify to the following.

\begin{theorem}[Specialization from attributed \er{} pair]
\label{thm:sp ER}
Consider the attributed \er\ pair $\mathcal{G}(n,\bm{p};m,\bm{q})$ with $q_{00}=1$.\\
\underline{Achievability:}\\
If
\begin{align}
\label{eq:spAER_A1}
    n \psiu \geq 2\log n + \omega(1),
\end{align}
or
\begin{align}
    p_{11} &= O\left(\tfrac{\log{n}}{n}\right),
    \label{eq:spAER_A21}\\ 
    p_{10}+p_{01} &= O\left(\tfrac{1}{\log{n}}\right),
    \label{eq:spAER_A22}\\
    \frac{p_{10}p_{01}}{p_{11}p_{00}} &= O\left(\tfrac{1}{(\log{n})^3}\right),
    \label{eq:spAER_A23}\\
    np_{11} &= \log{n}+\omega(1),
    \label{eq:spAER_A24}
\end{align}
then the MAP estimator achieves exact alignment w.h.p.\\
\underline{Converse:} 
If
\begin{align*}
    np_{11} \leq \log n -\omega(1),
\end{align*}
then for any estimator, the probability of error is bounded away from zero.
\end{theorem}

\begin{remark}
When specialized to the \er{} pair model, our achievability result
recovers the best-known achievability result from \cite{settling-TIT,Cul-Kiy-exact2017}, while our converse result is a strict subset of that given by conditions~\eqref{eq:spERcon1} and~\eqref{eq:spERcon2}.
% and our converse result is also a strict subset of that given by conditions~\eqref{eq:spERcon1} and~\eqref{eq:spERcon2}. }
To see the achievability results in Theorems~\ref{thm:sota ER} and~\ref{thm:sp ER} are equivalent, we observe that the difference between the region characterized by \eqref{eq:spER_A21}--\eqref{eq:spER_A24} and the region characterized by \eqref{eq:spAER_A21}--\eqref{eq:spAER_A24} is given by $p_{11}=\omega(\frac{\log n}{n})$ and $p_{11}=O(\frac{1}{\log n}).$ However, under the assumptions \eqref{eq:spAER_A22} and \eqref{eq:spAER_A23}, we know that $p_{00}=1-o(1)$ and $\frac{p_{10}p_{01}}{p_{00}p_{11}}=o(1)$. If $p_{11}=\omega(\frac{\log n}{n})$, then these further imply that $\psiu=(1+o(1))p_{11}p_{00}=\omega(\frac{\log n}{n})$, i.e., the difference between the two regions falls in the achievable region characterized by~\eqref{eq:spAER_A1}. Thus, the achievability region in Theorem~\ref{thm:sp ER} is exactly the same as that in Theorem~\ref{thm:sota ER}. 
For the converse, it is an open question whether a converse result for the attributed \er{} pair model can be established, which recovers condition~\eqref{eq:spERcon1} when specialized to the \er{} pair model.
\end{remark}

% Comparing the best-known result on \er\ pair alignment (Theorem~\ref{thm:sota ER}) with our specialized result (Theorem~\ref{thm:sp ER}), we can see that both our specialized achievability and converse region was contained by the best-known results.

\subsection{Seeded \er\ graph pair}
% \ning{compare to ER + assumptions on s=Theta(1)}
% \ning{use $m$ through out}\\
% % \ning{add reference}\\
% The seeded \er\ model, as a variant of the correlated \er\ model, is designed for graph alignment tasks when part of the vertices in the graph pair are already aligned.
% % with known side information about a partial alignment.
% Such partial alignment information may be formulated in a couple of different ways in different seeded models \cite{Kor-Lat-Reconciliation2014, Mos-Xu-seeded2020, Yar-Lyu-Gro-2013performance,lyzinski-2014-seeded,lubars-2018-correcting}, \ning{e.g., \cite{Kor-Lat-Reconciliation2014} studies a seeded model that has randomly generated seeds, \cite{Yar-Lyu-Gro-2013performance} studies a deterministic seeded model, etc.}
% \ning{Here, we focus on the deterministic seeded graph model $\mathcal{G}(n,m,\bm{p})$, which is closely related to our attributed \er\ setting.}

In the seeded graph model $\mathcal{G}(n,m,\bm{p})$, a pair of graphs $G_1, G_2$ are generated from the correlated \er\ pair model $\mathcal{G}(n+m,\bm{p})$. Then the anonymized graph $G_2'$ is obtained by applying a random permutation on the vertices of $G_2$. 
% Instead of observing only 
In addition to knowing $G_1$ and $G_2'$, 
in the seeded graph setting, we are also given
% also know 
the true alignment on a set of the user vertices, which is known as the seed set $\mathcal{V}_s$. The number of aligned pairs in $\mathcal{V}_s$ is a fixed number $m$. 
The seeded alignment problem has been studied by~\cite{Kor-Lat-Reconciliation2014, Mos-Xu-seeded2020, Yar-Lyu-Gro-2013performance,lyzinski-2014-seeded,lubars-2018-correcting}\footnote{In the literature, both random~\cite{Kor-Lat-Reconciliation2014} and deterministic~\cite{Yar-Lyu-Gro-2013performance} seed sets are considered. Here, we focus on the deterministic seed set setting which is closely related to our attributed \er\ pair model.}. Moreover, achievability results on unseeded graph alignment problem also trivially imply achievability results on seeded graph alignment problem. To the best of our knowledge, the best information-theoretic limits of the seeded alignment problem are given by~\cite{Cul-Kiy-exact2017,Mos-Xu-seeded2020, wang-2022-feasible}.
 For the simplicity of our discussion, we focus only on the 
symmetric subsampling model~\eqref{eq:subsamp_mod} with sparsity conditions
\begin{align}
    p=1-\Theta(1)\text{ and } s=\Omega\left(\frac{(\log n)^2}{\sqrt{n}}\right).\label{eq:sparsity}
\end{align}
\begin{theorem}[Best-known information-theoretic limits in the sparse and symmetric regime~\cite{Cul-Kiy-exact2017,Mos-Xu-seeded2020,wang-2022-feasible}]
\label{thm-sotaSeed}
Consider the seeded \er\ graph pair $\mathcal{G}(n,m,\bm{p})$ satisfying 
conditions~\eqref{eq:subsamp_mod} and~\eqref{eq:sparsity}.\\
\underline{Achievability from \cite{Cul-Kiy-exact2017}:} 
Assume 
\begin{align}
(n+m)p_{11}=\log(n+m)+\omega(1).
\label{eq:exi_ach1}
\end{align}
Then the unseeded MAP estimator achieves exact alignment w.h.p.\\
\underline{Achievability from~\cite{wang-2022-feasible}:} Assume $s=\Theta(1)$ and $p=o(1)$.
\begin{enumerate}
    \item In the regime where $mp_{11} = \Omega(\log n)$, if for a constant $\epsilon>0$, we have
    \begin{align}
       (n+m)p_{11} \geq (1+\epsilon) \log n,
       \label{eq:exi_ach2}
    \end{align}
    then the \textsc{AttrRich} algorithm in \cite{wang-2022-feasible} achieves exact alignment w.h.p.
    \item In the regime where $mp_{11} = o(\log n)$,
    if for a constant $\tau>0$, we have 
    \begin{align}
        np_{11} &=\log n+\omega(1),\\
        mp_{11} &\geq \frac{2\log n}{\tau \log (p_{11}/(p_{11}+p_{10})^2)},
    \end{align}
    then the \textsc{AttrSparse} algorithm in \cite{wang-2022-feasible} achieves exact alignment w.h.p. 
\end{enumerate}
\underline{Converse from \cite{Mos-Xu-seeded2020}} Consider the seeded \er\ graph pair $\mathcal{G}(n,m,\bm{p})$. If
\begin{align*}
    (n+m)p_{11}\leq \log{(n+m)}+O(1) \;\ \text{and} \;\ m=O(n),
\end{align*}
then for any estimator, the probability of error is bounded away from zero.
\end{theorem}
\begin{remark}[Efficient algorithms for seeded graph alignment]
    We comment that the seeded graph alignment algorithms proposed in~\cite{Mos-Xu-seeded2020} and~\cite{wang-2022-feasible} can be implemented polynomial-time, while the unseeded MAP estimator in~\cite{Cul-Kiy-exact2017} requires exponential time to implement. Under the seeded graph alignment problem, the best-known feasible range of graph parameters for achieving exact recovery by efficient algorithms is given together by \cite{Mos-Xu-seeded2020} and \cite{wang-2022-feasible}. 
    % Although the achievability results in~\cite{Mos-Xu-seeded2020} are not listed in Theorem~\ref{thm-sotaSeed} as best-known 
\end{remark}

To compare the best-known information-theoretic limits of the seeded \er\ alignment with our results, we specialize the attributed \er\ pair model by setting $\bm{p}=\bm{q}$, where $m$ attribute vertices are pre-aligned seeds. 
Notice that a small difference between the $\mathcal{G}(n,\bm{p}; m ,\bm{p})$ model and the seeded model $\mathcal{G}(n,m,\bm{p})$ is that there are no edges between the seeds in the specialized model but those edges exist in the seeded model. 
Such distinction may lead to a difference in the design of seeded graph alignment algorithms (e.g. algorithms from \cite{Mos-Xu-seeded2020} exploit seed-seed edges). It turns out that such seed-seed edges have no influence on the optimal MAP estimators for the two models, which leads to the next lemma.
% Thus, we have the following Lemma~\ref{lemma:seed-attri}.
\begin{lemma}
\label{lemma:seed-attri}
The information-theoretic limits on exact alignment in the seeded \er\ pair model $\mathcal{G}(n,m,\bm{p})$ and  the information-theoretic limits on exact alignment in the specialized attributed \er\ pair model $\mathcal{G}(n,\bm{p};m,\bm{p})$ are identical. 
\end{lemma}
\begin{proof}
See Appendix~\ref{appx:PfLem1}. 
\end{proof}

Based on Lemma~\ref{lemma:seed-attri}, we directly obtain the achievability and converse results on seeded graph alignment from Theorems~\ref{Thm:achievability-general},~\ref{Thm:achievability-sparse}, and~\ref{Thm:converse} by setting $\bm{p}=\bm{q}$. 
In the following, we demonstrate that the specialized result reveals certain achievable and converse region for seeded graph alignment that is unknown in the literature.

\begin{theorem}[Specialization from attributed \er{} pair]
\label{thm:sp Seed}
Consider the attributed \er\ pair $\mathcal{G}(n,\bm{p};m,\bm{p})$.\\
\underline{Achievability:}\\
If
\begin{align}
\label{eq:spSeed_A1}
    (n+m) \psiu \geq 2\log n + \omega(1),
\end{align}
or
\begin{align}
    p_{11} &= O\left(\tfrac{\log{n}}{n}\right),
    \label{eq:spSER_A21}\\ 
    p_{10}+p_{01} &= O\left(\tfrac{1}{\log{n}}\right),
    \label{eq:spSER_A22}\\
    \frac{p_{10}p_{01}}{p_{11}p_{00}} &= O\left(\tfrac{1}{(\log{n})^3}\right),
    \label{eq:spSER_A23}\\
    np_{11} +m\psiu &= \log{n}+\omega(1),
    \label{eq:spSER_A24}
\end{align}
then the MAP estimator achieves exact alignment w.h.p.\\
\underline{Converse:} \\
If
\begin{align*}
    (n+m)p_{11} \leq \log n -\omega(1),
\end{align*}
% or there exists a constant $\epsilon$ such that 
% \begin{align*}
%     (n+m)p_{11}p_{00} \leq (1-\epsilon) \log n,
% \end{align*}
then for any estimator, the probability of error is bounded away from zero.
\end{theorem}

% Moreover, if the two seeded graphs $G_1$ and $G_2'$ satisfy the two assumptions on typical social networks (see Section~\ref{sec:main-result}), then we have the following matching achievability and converse results. 
%(\ning{see proof Appendix~\ref{proof-seed_tight}}).
Moreover, if the two seeded graphs $G_1$ and $G_2'$ satisfy the two assumptions \eqref{eq:subsamp_mod} and~\eqref{eq:sparsity}, then we have the following matching achievability and converse results.
% \begin{corollary}[Threshold for sparse seeded \er\ pair]
% \label{coro:seed_tight}
% Consider the seeded \er\ pair model $\mathcal{G}(n,m,\bm{p})$
% under conditions \eqref{eq:cond-nonedgeProb} and \eqref{eq:cond-rho}. \\
% \underline{Achievability:}
% If 
% \begin{align}
% \label{eq:ach-seed}
%     (n+m)p_{11} \geq \log{n}+\omega(1),
% \end{align}
% then the MAP estimator achieves exact alignment w.h.p.\\
% \underline{Converse:}
% If 
% \begin{align}
% \label{eq:conv-seed}
%     (n+m)p_{11} \leq \log{n}-\omega(1),
% \end{align}
% then no algorithm guarantees exact alignment w.h.p.\\
% \end{corollary}
\begin{corollary}[Threshold for sparse seeded \er\ pair]
\label{coro:seed_tight}
Consider the seeded \er\ pair model $\mathcal{G}(n,m,\bm{p})$
under conditions \eqref{eq:subsamp_mod} and~\eqref{eq:sparsity}. \\
\underline{Achievability:}
If 
\begin{align}
\label{eq:ach-seed}
    (n+m)p_{11} \geq \log{n}+\omega(1),
\end{align}
then the MAP estimator achieves exact alignment w.h.p.\\
\underline{Converse:}
If 
\begin{align}
\label{eq:conv-seed}
    (n+m)p_{11} \leq \log{n}-\omega(1),
\end{align}
then for any estimator, the probability of error is bounded away from zero.
\end{corollary}
\begin{proof}
See Appendix~\ref{proof-seed_tight}.
\end{proof}

\begin{remark}
In Corollary~\ref{coro:seed_tight}, we obtain asymptotically tight achievability and converse
for seeded graph alignment under the symmetric subsampling model satisfying conditions \eqref{eq:subsamp_mod} and~\eqref{eq:sparsity}.
\end{remark}
\begin{remark}[Comparison between achievability results]
 The achievability result in Corollary~\ref{coro:seed_tight}
 reveals certain achievable region that is unknown in the literature~\cite{Cul-Kiy-exact2017,wang-2022-feasible}.
 % includes certain region that is not covered by the best-known achievability results for seeded alignment~\cite{Cul-Kiy-exact2017, wang-2022-feasible}.
 In the regime of $mp_{11}=\Omega(\log n)$, if $(n+m)p_{11}$ is at least $\log n+\omega(1)$ but less than $\log(n+m)+\omega(1)$ and $(1+\epsilon)\log n$ for any constant $\epsilon$, exact alignment is known to be achievable by Corollary~\ref{coro:seed_tight}, but not by Theorem~\ref{thm-sotaSeed}.
% The achievable region in Corollary~\ref{coro:seed_tight} strictly contains the union of feasible regions of all best-known seeded alignment algorithms from \cite{Mos-Xu-seeded2020, wang-2022-feasible}. 
In the following, we present an example which is in the achievable region of Corollary~\ref{coro:seed_tight}, but not in the achievable region of Theorem~\ref{thm-sotaSeed}. 
% Consider the sparse symmetric subsampling model satisfying~\eqref{eq:subsamp_mod} and~\eqref{eq:sparsity}.
% \ning{[I feel it need to be justified that for the example chosen, those condition are valid for some $p_{10}$ and $p_{01}$. That's why in the earlier version, we have also specified $p_{10}$, $p_{01}$.]}
Assume that 
\begin{align*}
    m=n^2,\; p_{11}=\frac{\log n+\log\log n}{m+n},\text{ and } p_{01} = p_{10} = 0.
\end{align*}
% m=n^2, p_{11}=\frac{\log n+\log\log n}{m+n}, p_{01} = p_{10} = 0$. 
We see that condition~\eqref{eq:ach-seed} holds because $(n+m)p_{11}=\log n+\log \log n=\log n+\omega(1)$. 
% Moreover, we have $1-(p_{11}+p_{01}) = 1-(p_{11}+p_{10}) = 1-\frac{\log n+\log\log n}{n^2+n} = \Theta(1)$ and $\rho_{\rm{u}}= \rho_{\rm{a}} = 1$, satisfying conditions~\eqref{eq:cond-nonedgeProb} and \eqref{eq:cond-rho}. 
Moreover, condition~\eqref{eq:sparsity} is satisfied because $s=1$ in this case.
However, condition~\eqref{eq:exi_ach1} $(n+m)p_{11}=\log(n+m)+\omega(1)$ in Theorem~\ref{thm-sotaSeed} does not hold because $(n+m)p_{11}<2\log n<\log(n+m)+\omega(1)$, and condition~\eqref{eq:exi_ach2} $(n+m)p_{11} \ge (1+\epsilon)\log n$ %\ning{(should here be $\geq$?) }
in Theorem~\ref{thm-sotaSeed} does not hold because $(n+m)p_{11}=\log n+\log\log n<(1+\epsilon)\log n$ for any positive constant $\epsilon$.  So this example lies in the achievable region of Corollary~\ref{coro:seed_tight}, but not in that of Theorem~\ref{thm-sotaSeed}.

However, we comment that the improvement mentioned above is natural. In~\cite{Cul-Kiy-exact2017}, the seedless graph alignment problem is considered. The results in~\cite{Cul-Kiy-exact2017} is included in the comparison because its achievable region for seedless graph alignment trivially implies achievable region for seeded graph alignment. When specializing the region in~\cite{Cul-Kiy-exact2017} to the scenario of seeded graph alignment, both seed vertices and non-seed vertices are viewed as vertices to align, and hence resulting in the $\log(m+n)$ terms on the right-hand side of~\eqref{eq:exi_ach1}. However, the identities of the seed vertices are already known, and there are actually just $n$ non-seed vertices to align. This causes the natural improvement to the $\log n$ term on the right-hand side of~\eqref{eq:ach-seed}. Moreover, note that this improvement is non-negligible only when $m=\omega(n)$, i.e., the number of seeds greatly surpass the number of non-seeds. The improvement beyond the achievable region in~\cite{wang-2022-feasible} is natural as well. This is because the MAP estimation studied in this work requires exponential computational time, while the algorithms in~\cite{wang-2022-feasible} can be implemented in polynomial time.
To make the comparison fair, we also mention that there exists certain known achievable region that is not covered by Corollary~\ref{coro:seed_tight}.

\end{remark}
\begin{remark}[Comparison between converse results]
% Our converse result in Corollary~\ref{coro:seed_tight} strictly improves the best-known converse result in~\cite{Mos-Xu-seeded2020}. 
Corollary~\ref{coro:seed_tight} includes certain converse region in the regime $m=\omega(n)$, while the converse region in Theorem~\ref{thm-sotaSeed} is exclusive to the regime $m=O(n)$. To make the comparison fair, we comment that the converse bound of Theorem~\ref{thm-sotaSeed} strictly contains the converse region of Corollary~\ref{coro:seed_tight} in the regime of $m=O(n)$.
\end{remark}
% \ning{comment on tightness}

\subsection{Bipartite graph pair}
In the bipartite graph pair model $\mathcal{G}(n,m,\bm{q})$, each graph is a bipartite graph on two disjoint set of vertices, i.e., the user vertex set $\vu$ and attribute vertex set $\va$. The edges between the two set of vertices are generated in a correlated way: for $e \in \vu \times \va$
\begin{equation}
\label{Eq bipa pair distribution}
    (G_1,G_2)(e) =
    \begin{cases}
    (1,1)\;\ \text{w.p.} \;\ q_{11},\\
    (1,0)\;\ \text{w.p.} \;\ q_{10},\\
    (0,1)\;\ \text{w.p.} \;\ q_{01},\\
    (0,0)\;\ \text{w.p.} \;\ q_{00}.
    \end{cases}
\end{equation}
The anonymized graph $G_2'$ is obtained by applying a random permutation $\Pi^*$ only on the user vertex set of $G_2$. Aligning such correlated bipartite graph is also known as a special case of the database alignment problem~\cite{Cul-Mit-Kiy-fund2018}. The best-known information-theoretic limits of database alignment are studied in \cite{Cul-Mit-Kiy-fund2018}. We restate the achievability and converse results from~\cite{Cul-Mit-Kiy-fund2018} when specialized to the case of bipartite graph pair alignment in Theorem~\ref{thm:DB_bipt}.

\begin{theorem}[Best-known information-theoretic limits~\cite{Cul-Mit-Kiy-fund2018}]
\label{thm:DB_bipt}
Consider the bipartite graph pair model $\mathcal{G}(n, m,\bm{q})$. \\
% Let $Q=\begin{psmallmatrix} q_{00} & q_{01}\\ q_{10} & q_{11}\end{psmallmatrix}$.\\
\underline{Achievability:}\\
If
\begin{align}
    -\frac m2 \log(1-2\psia)\ge \log n+\omega(1),
    % -\frac m2\log(1-2(\sqrt{q_{11}q_{00}}-\sqrt{q_{10}q_{01}})^2)\ge \log n +\omega(1),
    \label{eq:db_ach}
\end{align}
then the MAP estimator achieves exact alignment w.h.p.\\
\underline{Converse:}\\
If
\begin{align}
    -\frac m2 \log(1-2\psia)\le (1-\Omega(1))\log n,
    % -\frac m2\log(1-2(\sqrt{q_{11}q_{00}}-\sqrt{q_{10}q_{01}})^2)\le (1-\Omega(1))\log n,
    \label{eq:db_conv}
\end{align}
then for any estimator, the probability of error is bounded away from zero.
\end{theorem}
\begin{remark}
In~\cite{Cul-Mit-Kiy-fund2018}, the left-hand side of both inequalities in Theorem~\ref{thm:DB_bipt} are stated in a  different, yet equivalent, form. To state it, we need to introduce two definitions. Let $A=\begin{psmallmatrix} A_{00} & A_{01}\\ A_{10} & A_{11}\end{psmallmatrix}$ be a $2\times 2$ probability matrix with all non-negative entries and $A_{00}+A_{01}+A_{10}+A_{11}=1$. We define $A^{\otimes k}$ to be a $2^k\times 2^k$ probability matrix with rows and columns both indexed by $\{0,1\}^k$, and for $a,b\in \{0,1\}^k$, we have $A^{\otimes k}_{a,b}=\Pi_{i=1}^k A_{a_i,b_i}$. Matrix $A^{\otimes k}$ is called the $k$th \emph{tensor product} of $A$. For a probability matrix $A$ and an integer $l\ge 2$, we define the order-$l$ \emph{cycle mutual information} of $A$ to be $I_l^{\circ}(A)=\frac{1}{1-l}\log\mathrm{tr}((ZZ^T)^l)$, where $Z$ is a matrix with same dimension as $A$ and $Z_{ij}=\sqrt{A_{ij}}$ for any index pair $(i,j)$. In~\cite{Cul-Mit-Kiy-fund2018}, the left-hand side of both inequalities are given as $\frac12 I^{\circ}_2(Q^{\otimes m})$, where $Q=\begin{psmallmatrix} q_{00} & q_{01}\\ q_{10} & q_{11}\end{psmallmatrix}$.
According to~\cite{Cul-Mit-Kiy-fund2018}, the cycle mutual information satisfies a nice property that $I^{\circ}_2(Q^{\otimes m})=mI^{\circ}_2(Q)$. Furthermore, we can calculate the $2$-cycle mutual information of $Q$ as $I^{\circ}_2(Q)=-\log(1-2\psia)$, which implies that $\frac12 I^{\circ}_2(Q^{\otimes m})=-\frac m2 \log(1-2\psia)$.
\end{remark}

To compare our results with the best-known database alignment information-theoretic limits, we specialize the attributed \er\ pair model to the bipartite graph pair by removing all of the edges between user vertices, i.e., $p_{00}=1$. Correspondingly, we obtain the following achievable and converse result on bipartite graph alignment from Theorems~\ref{Thm:achievability-general} and~\ref{Thm:converse}.
\begin{theorem}[Specialization from attributed \er{} pair]
\label{thm:sp Bipt}
Consider the attributed \er\ pair $\mathcal{G}(n,\bm{p};m,\bm{q})$ with $p_{00}=1$.\\
\underline{Achievability:}\\
If
\begin{align}
     m\psia \geq \log n + \omega(1),
     \label{eq:bipt_ach}
\end{align}
then the MAP estimator achieves exact alignment w.h.p.\\
\underline{Converse:}\\
If
\begin{align}
    m q_{11} \leq \log n -\omega(1),
    \label{eq:bipt_conv}
\end{align}
% or there exists a constant $\epsilon$ such that 
% \begin{align*}
%     m q_{11}q_{00} \leq (1-\epsilon) \log n,
% \end{align*}
then for any estimator, the probability of error is bounded away from zero.
\end{theorem}
\begin{remark}
    The MAP estimator for bipartite graph alignment can be implemented by solving an assignment problem using the Hungarian method within polynomial time~\cite{hungarian}.
\end{remark}

\begin{remark}[Comparison between achievable regions]
    The achievable region in Theorem~\ref{thm:sp Bipt} is a strict subset of the achievable region in Theorem~\ref{thm:DB_bipt}. This because $-\frac m2 \log(1-2\psia)>  m\psia$. However, in the derivation steps~\eqref{eqLem2_7}--\eqref{eqLem2_9} of Theorem~\ref{Thm:achievability-general}, we were replacing the logarithm term by applying $\log(1+x)\le x$. Without this step of loosening the bound, equation~\eqref{eq:bipt_ach} can be replaced by $-\frac m2 \log(1-2\psia)\ge \log n+\omega(1)$, which is the same as equation~\eqref{eq:db_ach}. Therefore, although the achievability region~\eqref{eq:bipt_ach} in Theorem~\ref{thm:sp Bipt} does not directly recover the achievable region~\eqref{eq:db_ach} in Theorem~\ref{thm:DB_bipt}, it can be improved to the achievable region~\eqref{eq:db_ach} by a slight modification of derivations steps~\eqref{eqLem2_7}--\eqref{eqLem2_9}.
\end{remark}
% \ning{How about the case when $\psia=O(1/\sqrt{m})$ (we have $m\psia^2=O(1)$)? say if $m=(\log n) ^{3/2}$, then it becomes $\psia=O(1/(\log n) ^{3/4})$ which includes $\psia = O(1/\log n)$ }
\begin{remark}[Comparison between converse regions]
The converse result~\eqref{eq:bipt_conv} in Theorem~\ref{thm:sp Bipt} includes certain region that is not covered by the best-known converse region for bipartite alignment~\eqref{eq:db_conv} in Theorem~\ref{thm:DB_bipt}. This new region stems from the difference between the $-\Omega(1)\log n$ term on the right-hand side of~\eqref{eq:db_conv} and the $-\omega(1)$ term on the right-hand side of~\eqref{eq:bipt_conv}.
To illustrate the improved region, consider an instance of the parameters satisfying $q_{11}=\frac{\log n-\log\log n}{m}$, $q_{00}=1-q_{11}$, $q_{01}=q_{10}=0$ and $m=\omega(\log n)$. Because $mq_{11}=\log n-\log\log n\le \log n-\omega(1)$, this instance of parameters is in the converse region given by~\eqref{eq:bipt_conv}. However, this instance is out of the converse region~\eqref{eq:db_conv}, because
\begin{align*}
    -\frac{m}{2}\log(1-2\psia)&=-\frac{m}{2}\log(1-2(\sqrt{q_{11}q_{00}}-\sqrt{q_{01}q_{10}})^2)\\
    &=-\frac{m}{2}\log(1-2q_{00}q_{11})\\
    &\ge \frac{m}{2}2q_{00}q_{11}\\
    &=(\log n-\log\log n)(1-o(1))\\
    &>(1-\Omega(1))\log n.
\end{align*}

To make the comparison fair, we comment that there also exists certain converse region given by~\eqref{eq:db_conv}, which is not covered by~\eqref{eq:bipt_conv}.
\end{remark}

\section{Proof of converse statement} \label{sec:proof-converse}
In this section, we give a detailed proof for Theorem~\ref{Thm:converse}. 
Let $(G_1,G_2)$ be an attributed \er\ pair $\mathcal{G}(n,\bm{p};m,\bm{q})$. In this proof, we will focus on the intersection graph of $G_1$ and $G_2$, denoted as $G_1 \wedge G_2$, which is the graph on the vertex set $\mathcal{V}=\vu\cup\va$ whose edge set is the intersection of the edge sets of $G_1$ and $G_2$.
We say a permutation $\pi$ on the vertex set $\vu$ 
is an automorphism of $G_1\wedge G_2$ if a user-user pair $(i,j)$ is in the edge set of $G_1\wedge G_2$ if and only if $(\pi(i),\pi(j))$ is in the edge set of $G_1\wedge G_2$, and a user-attribute pair $(i,a)$ is in the edge set of $G_1\wedge G_2$ if and only if $(\pi(i),a)$ is in the edge set of $G_1\wedge G_2$ i.e., if $\pi$ is \emph{edge-preserving}.  Note that an identity permutation is always an automorphism.  Let $\mathrm{Aut}(G_1\wedge G_2)$ denote the set of automorphisms of $G_1\wedge G_2$.  
By Lemma~\ref{lemma:Aut-error-prob} below, exact alignment cannot be achieved w.h.p.\ if $\mathrm{Aut}(G_1\wedge G_2)$ contains permutations other than the identity permutation.  This allows us to establish conditions for not achieving exact alignment w.h.p.\ by analyzing automorphisms of $G_1\wedge G_2$.

% \begin{lemma}
% [\cite{Cul-Kiy-improved2016}]
% \label{lemma:Aut-error-prob}
% Let $(G_1,G_2)$ be an attributed \er\ pair $\mathcal{G}(n,\bm{p};m,\bm{q})$.  Given $|\mathrm{Aut}(G_1 \wedge G_2)|$, the probability that MAP estimator succeeds is at most $\frac{1}{|\mathrm{Aut}(G_1 \wedge G_2)|}$.
% \end{lemma}

\begin{restatable}{lemma}{LemmaAUT}
\label{lemma:Aut-error-prob}
Let $(G_1,G_2)$ be an attributed \er\ pair $\mathcal{G}(n,\bm{p};m,\bm{q})$.  Given $|\mathrm{Aut}(G_1 \wedge G_2)|$, the probability that MAP estimator succeeds is at most $\frac{1}{|\mathrm{Aut}(G_1 \wedge G_2)|}$.
\end{restatable}
The proof of Lemma~\ref{lemma:Aut-error-prob} is deferred to Appendix~\ref{appx:PfLem2}.

In the proof of Theorem~\ref{Thm:converse}, we will further focus on the automorphisms given by swapping two user vertices.  To this end, we first define the following equivalence relation between a pair of user vertices.  We say two user vertices $i$ and $j$ ($i\neq j$) are \emph{indistinguishable} in $G_1\wedge G_2$, denoted as $i\equiv j$, if $(G_1\wedge G_2)((i,v))=(G_1\wedge G_2)((j,v))$ for all $v\in\mathcal{V}$.  It is not hard to see that swapping two indistinguishable vertices is an automorphism of $G_1\wedge G_2$, and thus $|\mathrm{Aut}(G_1 \wedge G_2)\setminus\{\textrm{identity permutation}\}|\ge|\{\textrm{indistinguishable vertex pairs}\}|$.  
In the lemma below, we establish the sharp threshold for the existence of indistinguishable vertex pair. 
\begin{lemma}[Sharp threshold for indistinguishable pair]
\label{lem:ind-pair}
    Let $(G_1,G_2)$ be an attributed \er\ pair $\mathcal{G}(n,\bm{p};m,\bm{q})$ and let $G=G_1\wedge G_2$. If 
    \begin{equation}
        \label{eq:ind-pair1}
        -n\log(1-2p_{11}+2p_{11}^2)-m\log(1-2q_{11}+2q_{11}^2)-2\log n\rightarrow -\infty,
    \end{equation}
    then with high probability, there exists at least one pair of indistinguishable vertices. 

    Conversely, if 
    \begin{equation}
        \label{eq:ind-pair2}
        -n\log(1-2p_{11}+2p_{11}^2)-m\log(1-2q_{11}+2q_{11}^2)-2\log n\rightarrow \infty,
    \end{equation}
    then with high probability, there exists no indistinguishable vertex pairs.
\end{lemma}

The proof of Lemma~\ref{lem:ind-pair} is deferred to the end of this section. We complete the proof of Theorem~\ref{Thm:converse} based on Lemma~\ref{lem:ind-pair} in the following.
% Therefore, in the proof below, we show that the number of such indistinguishable vertex pairs is positive with a large probability, which suffices for proving Theorem~\ref{Thm:converse}.

% \begin{reptheorem}{Thm:converse}
% Let $(G_1, G_2')$ be the observable pair generated from the attributed \er\ pair $\mathcal{G}(n, \bm{p},m,\bm{q})$. If 
% \begin{equation}
%     np_{11}+mq_{11}-\log{n} \rightarrow -\infty,
% \end{equation}
% then no algorithm guarantees exact alignment w.h.p.
% \end{reptheorem}
\ThmConverse*
\begin{remark}
    A novelty in the converse proof is that we show the existence of \emph{indistinguishable user pairs} under the attributed \er\ graph pair model, while the converse proof in~\cite{Cul-Kiy-exact2017} shows the existence of \emph{isolated vertices} under the \er\ graph pair model. The reason why we take a different approach from~\cite{Cul-Kiy-exact2017} is because~\eqref{eq:converse} cannot guarantee the existence of isolated user vertices with high probability. To see this, let us consider the example of $p_{11}=\frac{0.5\log n}{n}, m=\log n$ and $q_{11}=0.49$. In this case, \eqref{eq:converse} is satisfied because $-n\log(1-2p_{11}+2p_{11}^2)-m\log(1-2q_{11}+2q_{11}^2)\le 2np_{11}+2mq_{11}=1.98\log n$, where the inequality follows because $-\log(1-2x+2x^2)\le 2x$ for any $x\ge 0$. It is not hard to see that the expected number of isolated vertices can be calculated as $n(1-p_{11})^{n-1}(1-q_{11})^m$. It follows that
        \begin{align*}
    \prob(\exists \text{ an isolated vertex }) &\le \E[\# \text{ isolated vertices}]\\
        &=n(1-p_{11})^{n-1}(1-q_{11})^m\\
        &\le \exp(\log n-(n-1)p_{11}+\log 0.51 \times \log n)\\
        &=\exp(\log n-0.5\log n +\log 0.51 \times \log n)\\
        &\approx \exp(-0.17\log n)\rightarrow 0,
    \end{align*}
     i.e., the expected number of isolated vertices goes to zero. So with high probability, there exist no isolated vertices in the graph. 
\end{remark}
\begin{proof}[Proof of Theorem~\ref{Thm:converse}]

Let $X$ denote the number of indistinguishable user vertex pairs in $G$, i.e.,
 \begin{align*}
     X = \sum_{i<j\colon i,j\in\vu} \mathbb{1} \{i\equiv j\}.
 \end{align*}
 By Lemma~\ref{lem:ind-pair}, we know that $\prob(X=0)=o(1)$.
Now we derive an upper bound on the probability of exact alignment under the MAP estimator, which is also an upper bound for any estimator since MAP minimizes the probability of error.  Note that by Lemma~\ref{lemma:Aut-error-prob}, $\prob(\pi_\mathrm{MAP} = \Pi^* | X=x)\le \frac{1}{x+1}$, which is at most $1/2$ when $x\ge 1$. Therefore,
% \sout{Moreover, we have}
\begin{align}
    \prob(\pi_\mathrm{MAP}=\Pi^*) &= \prob(\pi_\mathrm{MAP} = \Pi^* | X=0) \prob(X=0)\nonumber\\
    &\quad + \prob(\pi_\mathrm{MAP} = \Pi^* | X\ge 1) \prob(X\ge 1)\nonumber\\
    %\label{Thm 2_5}
    &\leq \prob(X=0) + \frac{1}{2} \prob(X\ge 1)\nonumber\\
    %\label{Thm 2_6}
    &= \frac{1}{2} + \frac{1}{2} \prob(X=0),\nonumber
\end{align}
which goes to $1/2$ as $n\to\infty$ and thus is bounded away from $1$.  This completes the proof that no algorithm can guarantee exact alignment w.h.p.
\end{proof}
\begin{proof}[Proof of Lemma~\ref{lem:ind-pair}]
    Let $G_1$ and $G_2$ be an attributed \er\ pair $\mathcal{G}(n,\bm{p};m,\bm{q})$ and let $G=G_1 \wedge G_2$.
Let $X$ denote the number of indistinguishable user vertex pairs in $G$, i.e.,
 \begin{align*}
     X = \sum_{i<j\colon i,j\in\vu} \mathbb{1} \{i\equiv j\}.
 \end{align*}
 We will firstly show that $\prob(X=0)\to 0$ as $n\to \infty$ if the condition \eqref{eq:ind-pair1} in Lemma~\ref{lem:ind-pair} is satisfied.
 
 We start by upper-bounding $\prob(X=0)$ using Chebyshev's inequality
 \begin{align}
     \prob(X=0) %&\leq \prob(|X-\E[X]|\geq \E[X]) \nonumber\\
     \label{eq:X=0-upper-bd}
     \leq \frac{\var(X)}{\E[X]^2} 
     = \frac{\E[X^2] - \E[X]^2}{\E[X]^2}.
 \end{align}
 For the first moment term $\E[X]$, we have
 \begin{align}
    \label{eq:EX}
     \E[X] = \sum_{i<j} \prob (i \equiv j) = {n \choose 2}\prob (i \equiv j).
 \end{align}
 For the second moment term $\E[X^2]$, we expand the sum as
 \begin{align}
 %\label{Eq 2rd moment}
    \E[X^2]
    &= \E \Biggl[\sum_{i<j} \mathbb{1} \{i \equiv j\} \cdot \sum_{k<l} \mathbb{1} \{k \equiv l\} \Biggr] \nonumber\\
    %  &= \E \Biggl[\sum_{i<j} \mathbb{1} \{i \equiv j\} 
    %  + \sum_{\substack{i,j,k,l\colon i<j, k<l\\i=k,j\neq l}} \mathbb{1} \{i \equiv j \equiv l\}\nonumber\\
    %  &\mspace{48mu} + \sum_{\substack{i,j,k,l\colon\\ i<j=k<l}} \mathbb{1} \{i \equiv j \equiv l\}+ \sum_{\substack{i,j,k,l\colon\\ k<l=i<j}} \mathbb{1} \{i \equiv j \equiv k\}\nonumber\\
    %  &\mspace{48mu}+ \sum_{\substack{i,j,k,l\colon i<j, k<l\\i\neq k,j= l}} \mathbb{1} \{i \equiv k \equiv l\}\nonumber\\
    &= \E \Biggl[\sum_{i<j} \mathbb{1} \{i \equiv j\}+\mspace{-12mu} \sum_{\substack{i,j,k,l\colon i<j,k<l\\i,j,k,l\textrm{ are distinct}}} \mathbb{1}\{i \equiv j\} \cdot \mathbb{1} \{k \equiv l\}\nonumber\\[-.5em]
    &\mspace{48mu} + \mspace{-12mu}\sum_{\substack{i,j,k,l\colon i<j,k<l\\\textrm{$\{i,j\}$ and $\{k,l\}$ share one element}}}\mspace{-24mu}\mathbb{1}\{i \equiv j \equiv k \equiv l\} \Biggr]\nonumber\\
    & = {n \choose 2} \prob(i \equiv j) 
    +{n \choose 4}{4\choose 2} \prob(i \equiv j \text{ and } k \equiv l)
      \nonumber \\
    &\mspace{23mu} + 6{n \choose 3} \prob(i \equiv j \equiv k)
    ,\label{eq:ex2}
  \end{align}
 where $i,j,k,l$ are distinct in \eqref{eq:ex2}.
 With \eqref{eq:EX} and \eqref{eq:ex2}, the upper bound given by Chebyshev's inequality 
 in \eqref{eq:X=0-upper-bd} can be written as
 \begin{align}
  \label{eq:Chebshev-bound}
     \frac{\var(X)}{\E[X]^2} &= 
     %\frac{1}{\E[X]}
     \frac{2}{n(n-1)\prob(i \equiv j)}
     +\frac{4(n-2)}{n(n-1)} \frac{\prob(i \equiv j \equiv k)}{\prob(i \equiv j)^2} \nonumber\\
     &\mspace{23mu}+\frac{(n-2)(n-3)}{n(n-1)}\frac{\prob(i \equiv j \text{ and } k\equiv l)} {\prob(i \equiv j)^2} -1.
 \end{align}
 To compute $\prob(i \equiv j)$, we look into the event $\{i \equiv j\}$ which is the intersection of $A_1$ and $A_2$, where 
 $A_1=\{\forall v\in\vu\setminus\{i,j\},\, G((i,v))=G((j,v))\}$, 
 and $A_2=\{\forall u\in \va,\, G((i,u))=G((j,u))\}$.
 Recall that in the intersection graph $G=G_1\wedge G_2$, the edge probability is $p_{11}$ for user-user pairs and $q_{11}$ for user-attribute pairs. Therefore,
 \begin{align*}
     \prob(A_1) &= \sum_{i=0}^{n-2} {n-2 \choose i} p_{11}^{2i} (1-p_{11})^{2(n-2-i)}\\
     &=\left( p_{11}^2 + (1-p_{11})^2 \right)^{n-2},\\
     \prob(A_2) &=  \sum_{i=0}^{m} {m \choose i} p_{11}^{2i} (1-p_{11})^{2(m-i)} \\
     &=  \left(q_{11}^2 + (1-q_{11})^2 \right)^m.
 \end{align*}
 Since $A_1$ and $A_2$ are independent, we have
 \begin{align}
     &\mspace{23mu}\prob(i \equiv j)\\
     &= \prob(A_1)\prob(A_2)\nonumber\\
      &= \left( p_{11}^2 + (1-p_{11})^2 \right)^{n-2}
      \left(q_{11}^2 + (1-q_{11})^2 \right)^m\nonumber\\
     \label{eq:i=j}
     &= \left(1-2p_{11}+2p_{11}^2 \right)^{n-2} \left(1-2q_{11}+2q_{11}^2  \right)^{m}.
 \end{align}
 Similarly, to compute $\prob(i \equiv j \equiv k)$, we look into the event $\{i \equiv j \equiv k\}$ which is the intersection of events $B_0$, $B_1$ and $B_2$, where $B_0 = \{G((i,j))=G((j,k))=G((i,k)) \}$, $B_1 = \{\forall v \in \vu \setminus \{i,j,k\},\, G((i,v))=G((j,v))=G((k,v))\}$, and $B_2 = \{\forall u \in \va,\, G((i,u))=G((j,u))=G((k,u))\}$.
 Then, the probabilities of those three events are
 \begin{align*}
     & \prob(B_0) = p_{11}^3 + (1-p_{11})^3,\\
    %  \prob(B_1) &= \sum_{i=0}^{n-3} {n-3 \choose i} p_{11}^{3i}(1-p_{11})^{3(n-3-i)} = \left(p_{11}^3 + (1-p_{11})^3\right)^{n-3},\\
     & \prob(B_1)  = \left(p_{11}^3 + (1-p_{11})^3\right)^{n-3},\\
    %  \prob(B_2) &= \sum_{i=0}^{m} {m \choose i} q_{11}^{3i}(1-q_{11})^{3(m-i)} = \left(q_{11}^3 + (1-q_{11})^3 \right)^m.
     & \prob(B_2)= \left(q_{11}^3 + (1-q_{11})^3 \right)^m.
 \end{align*}
 Since the events $B_0$, $B_1$ and $B_2$ are independent, we have
 \begin{align*}
     &\mspace{23mu}\prob(i \equiv j \equiv k)\\
     &= \prob(B_0)\prob(B_1)\prob(B_2)\\
     &= \left(p_{11}^3 + (1-p_{11})^3\right)^{n-2}
     \left(q_{11}^3 + (1-q_{11})^3 \right)^m\\
     &= (1-3p_{11}+3p_{11}^2)^{n-2} (1-3q_{11}+3q_{11}^2)^m.
 \end{align*}
 To compute $\prob(i\equiv j \text{ and } k\equiv l)$, we look into the event $\{i\equiv j \text{ and } k\equiv l\}$ which is the intersection of $C_0$, $C_1$, $C_1'$, $C_2$ and $C_2'$, where $C_0 = \{ G(i,k) = G(j,k) = G(i,l) = G(j,l)\}$, $C_1 = \{\forall v \in \vu \setminus \{i,j,k,l\}, G(i,v) = G(j,v)\}$, $C_1' = \{\forall v \in \vu \setminus \{i,j,k,l\}, G(k,v) = G(l,v)\}$, $C_2 = \{\forall u \in \va, G(i,u) = G(j,u)\}$ and $C_2' = \{\forall u \in \va, G(k,u) = G(l,u)\}$. The probabilities of those events are
 \begin{align*}
     \prob(C_0) &= p_{11} ^6 + p_{11}^4(1-p_{11})^2 + p_{11}^2 (1-p_{11})^4 + (1-p_{11})^6,\\
     \prob(C_1) &= \prob(C_1') = (p_{11}^2 + (1-p_{11})^2) ^{n-4},\\
     \prob(C_2) &= \prob(C_2') = (q_{11}^2 + (1-q_{11})^2) ^m.
 \end{align*}
 Since $C_0$, $C_1$, $C_1'$, $C_2$ and $C_2'$ are independent, we have 
 \begin{align*}
     &\mspace{23mu}\prob(i \equiv j \text{ and } k \equiv l)\\
     &= \prob(C_0)\prob(C_1)\prob(C_1')\prob(C_2)\prob(C_2')\\
     &=\prob(C_0) (p_{11}^2 + (1-p_{11})^2) ^{2n-8} (q_{11}^2 + (1-q_{11})^2) ^{2m}.
 \end{align*}
 Now we are ready to analyze the terms in \eqref{eq:Chebshev-bound}. 
 % For the last two terms, note that $\frac{(n-2)(n-3)}{n(n-1)} \to 1$ and $\frac{\prob(i\equiv j \text{ and } k\equiv l)}{\prob(i \equiv j)^2} \to 1$. 
 % This is because condition~\ref{eq:ind-pair1} implies that $\log(p_{11}^2+(1-p_{11})^2)=o(1)$.
 We firstly focus on the fraction $\frac{\prob(i\equiv j \text{ and } k\equiv l)}{\prob(i \equiv j)^2}$, and show that it converges to zero. Note that condition~\eqref{eq:ind-pair1} implies that $-\log(p_{11}^2+(1-p_{11})^2)=o(1)$. This further implies that $p_{11}=o(1)$ or $1-p_{11}=o(1)$. As a result, we have $p_{11}^2+(1-p_{11})^2=1-o(1)$ and $\prob(C_0)=1-o(1)$. It follows that $\frac{\prob(i\equiv j \text{ and } k\equiv l)}{\prob(i \equiv j)^2}=\frac{\prob(C_0)}{(p_{11}^2+(1-p_{11})^2)^4}\rightarrow 1$.
 Therefore, we have $\frac{(n-2)(n-3)}{n(n-1)}\frac{\prob(i\equiv j \text{ and } k\equiv l)}{\prob(i \equiv j)^2} -1 \to 0$ as $n\to \infty$.
 Then we just need to bound the first two terms in \eqref{eq:Chebshev-bound}. For the first term $\frac{2}{n(n-1)\prob(i \equiv j)}$, plugging in the expression %of $\prob(i \equiv j)$ 
 in \eqref{eq:i=j} gives
 \begin{align}
     %\log{\E[X]}
     &\mspace{21mu}-\log{\frac{2}{n(n-1)\prob(i \equiv j)}}\nonumber\\
     &=
     2\log{n} + (n-2) \log{(1-2p_{11}+2p_{11}^2)} \nonumber\\
     & \quad + m\log{(1-2q_{11}+2q_{11}^2)} 
     + O(1) \nonumber\\
     \label{Thm 2_2}
     &= \omega(1).
 \end{align}
Here \eqref{Thm 2_2} follows from
condition~\eqref{eq:ind-pair1}. 
Therefore, the first term in \eqref{eq:Chebshev-bound} $\frac{2}{n(n-1)\prob(i \equiv j)}\to 0$ as $n\to\infty$.

Next, for the second term $\frac{4(n-2)}{n(n-1)} \frac{\prob(i \equiv j \equiv k)}{\prob(i \equiv j)^2}$ in \eqref{eq:Chebshev-bound}, we have
\begin{align}
    &-\log{\left(\frac{4(n-2)}{n(n-1)} \frac{\prob(i \equiv j \equiv k)}{\prob(i \equiv j)^2}\right)} \nonumber\\
    =& \log{n} - (n-2) \log{\left(\frac{1-3p_{11} + 3p_{11}^2} {(1-2p_{11} +2p_{11}^2)^2}\right)} \nonumber\\
    & \quad -m \log{\left(\frac{1-3q_{11} + 3q_{11}^2} {(1-2q_{11} +2q_{11}^2)^2}\right)} + O(1) \nonumber\\
    \label{Thm 2_3}
    {\geq}& \log{n} +\frac12 n \log(1-2p_{11}+2p_{11}^2) + \frac12 m \log(1-2q_{11}+2q_{11}^2) + O(1)\\
    \label{Thm 2_4}
    {=}& \omega(1).
\end{align}
Here \eqref{Thm 2_3} follows from
the inequality $-\log{\left(\frac{1-3x + 3x^2} {(1-2x +2x^2)^2}\right)} \ge \frac12 \log(1-2x+2x^2)$ for any $x \in [0,1]$. Equation \eqref{Thm 2_4} follows from the condition \eqref{eq:ind-pair1}. Hence, the second term in (\ref{eq:Chebshev-bound}) also converges to $0$ as $n\to\infty$, which completes the proof for $\prob(X=0)\to 0$ as $n\to\infty$.

Next, we show that $\prob(X\ge 1)=o(1)$ under condition \eqref{eq:ind-pair2}. From \eqref{eq:EX} and \eqref{eq:i=j}, we know that $$\E[X]=\binom{n}{2}\left(1-2p_{11}+2p_{11}^2 \right)^{n-2} \left(1-2q_{11}+2q_{11}^2  \right)^{m}.$$
By Markov's inequality, we have 
\begin{align}
    \prob(X\ge 1)&\le \E[X]\nonumber\\
    &\le \exp(2\log n+(n-2)\log (1-2p_{11}+2p_{11}^2)+m(1-2q_{11}+2q_{11}^2))\nonumber\\
    &=\exp(-\omega(1))=o(1)\label{eq:last},
\end{align}
where \eqref{eq:last} follows by condition~\eqref{eq:ind-pair2}.
\end{proof}

\section{Proof of the general achievability}
\label{sec:proof-achievability}

In this section, we establish the general achievability result in Theorem~\ref{Thm:achievability-general}. In this proof, we first simplify the the optimal estimator for exact alignment, the MAP estimator, to a minimum weighted distance estimator in Lemma~\ref{Lemma:MAP}. We then analyze the probability analyze the error probability that a wrong permutation $\pi$ has a lower weighted distance than the true underlying permutation in Lemma~\ref{Lemma:P(error of pi)}. The main idea in the proof of Lemma~\ref{Lemma:P(error of pi)} is first bounding the error probability with the probability generating function of the difference of weighted distance. The generating function is then further bounded by applying a cycle decomposition to the edge permutation induced by $\pi$. With these two key lemmas, the proof of Theorem~\ref{Thm:achievability-general} is then completed by a straightforward union bound argument. In the following, we state the two key lemmas and prove Theorem~\ref{Thm:achievability-general} based on them.

To state Lemma~\ref{Lemma:MAP}, we first introduce some basic notation for graph statistics needed in stating the MAP estimator. 
For any attributed graph $g$ on the vertex set $\vu\cup\va$ and any permutation $\pi$ over the user vertex set $\vu$, we use $\pi(g)$ to denoted the graph given by applying $\pi$ to $g$.  For any two attributed graphs $g_1$ and $g_2$ on $\vu\cup\va$, we consider the Hamming distance between their edges restricted to the user-user vertex pairs in $\eu$, denoted as
\begin{align}
\label{Eq:Delta_u}
    \Delta^{\mathrm{u}}(g_1,g_2) &=\mspace{-12mu} \sum_{(i,j)\in \mathcal{E}_{\mathrm{u}}} \mspace{-12mu} \mathbb{1} \{g_1((i,j)) \neq g_2((i,j))\};
\end{align}
and the Hamming distance between their edges restricted to the user-attribute vertex pairs in $\ea$, denoted as
\begin{align}
\label{Eq:Delta_a}
    \Delta^{\mathrm{a}}(g_1,g_2) &= \mspace{-12mu}\sum_{(i,v) \in \mathcal{E}_{\mathrm{a}}} \mspace{-12mu} \mathbb{1} \{g_1((i,v)) \neq g_2((i,v))\}.
\end{align}
\begin{restatable}{lemma}{LemmaMAP}
\label{Lemma:MAP}
Let $(G_1,G_2')$ be an observable pair generated from the attributed \er\ pair $\mathcal{G}(n,\bm{p};m,\bm{q})$. The MAP estimator of the permutation $\Pi^*$ based on $(G_1,G_2')$ simplifies to
\begin{align*}
    &\hat{\pi}_\mathrm{MAP}(G_1,G_2') \\
    &= \argmin_{\pi \in \mathcal{S}_n}\{w_1 \Delta ^{\mathrm{u}}(G_1,\pi^{-1}(G_2')) 
    + w_2 \Delta ^{\mathrm{a}}(G_1,\pi^{-1}(G_2'))\},
    \end{align*}
where $w_1 = \log\left(\frac{p_{11}p_{00}}{p_{10}p_{01}}\right)$, $w_2 = \log \left(\frac{q_{11}q_{00}}{q_{10}q_{01}}\right)$, and
\begin{align*}
    \Delta ^{\mathrm{u}}(G_1,\pi^{-1}(G_2')) &=\mspace{-12mu} \sum_{(i,j)\in \mathcal{E}_{\mathrm{u}}} \mspace{-12mu} \mathbb{1} \{G_1((i,j)) \neq G_2'((\pi(i),\pi(j)))\},\\
    \Delta ^{\mathrm{a}}(G_1,\pi^{-1}(G_2')) &= \mspace{-12mu}\sum_{(i,v) \in \mathcal{E}_{\mathrm{a}}} \mspace{-12mu} \mathbb{1} \{G_1((i,v)) \neq G_2'((\pi(i),v))\}.
\end{align*}
\end{restatable}
In the following lemma, we upper bound the probability that a permutation has a lower weighted distance than the identity permutation.
\begin{restatable}{lemma}{LemmaErrorPi}
\label{Lemma:P(error of pi)}
Let $(G_1,G_2)$ be an attributed \er\ pair $\mathcal{G}(n,\bm{p};m,\bm{q})$. For any permutation $\pi$, let
\begin{equation*}
\delta_\pi(G_1,G_2) \triangleq w_1 (\Delta^\mathrm{u}(G_1,\pi( G_2)) -\Delta^\mathrm{u}(G_1,G_2))
+ w_2 (\Delta^\mathrm{a}(G_1,\pi( G_2)) -\Delta^\mathrm{a}(G_1,G_2)).
\end{equation*}
Then when $\pi$ has $n-\Tilde{n}$ fixed points, we have
\begin{align*}
    \prob \left(\delta_\pi (G_1, G_2)\leq 0\right)
    \leq (1-2\psi_{\mathrm{u}})^{\frac{\Tilde{n}(n-2)}{4}}(1-2\psi_{\mathrm{a}})^{\frac{\Tilde{n}m}{2}}.
\end{align*}
\end{restatable}
We defer the proof of Lemma~\ref{Lemma:MAP} to Appendices~\ref{appd:pfMAP}, and defer the proof of Lemma~\ref{Lemma:P(error of pi)} to the end of this section. With these two lemmas, we are now ready to prove Theorem~\ref{Thm:achievability-general}.
\begin{proof}[Proof of Theorem~\ref{Thm:achievability-general}]
    Given the observable pair $(G_1,G_2')$, the error probability of MAP estimator can be upper-bounded as
\begin{align}
    &\prob(\hat{\pi}_\mathrm{MAP}(G_1,G_2') \neq \Pi^*) \nonumber\\
    &= \sum_{\pi^* \in \Sn}\prob(\hat{\pi}_\mathrm{MAP}(G_1,G_2') \neq \pi^* | \Pi^* =\pi^*) \prob(\Pi^* = \pi^*) \nonumber \\
    \label{eqLem2_1}
    &= \frac{1}{|\Sn|} \sum_{\pi^* \in \Sn} \prob(\hat{\pi}_\mathrm{MAP}(G_1,G_2') \neq \pi^* | \Pi^* =\pi^*) \\
    % \label{eqLem2_2}
    % =& \frac{1}{|\Sn|} \sum_{\pi^* \in \Sn} \prob(\pi_\mathrm{MAP}(G_1,\pi^* \circ G_2) \neq \pi^* | \Pi^* =\pi^*) \\
    \label{eqlem2_2}
    &= \prob\left(\hat{\pi}_\mathrm{MAP}(G_1,G_2) \neq \pi_\mathrm{id} | \Pi^* =\pi_\mathrm{id}\right) \\
    \label{eqlem2_3}
    &= \prob(\hat{\pi}_\mathrm{MAP}(G_1,G_2) \neq \pi_\mathrm{id})\\
    \label{eqLem2_4}
    &\leq  \prob(\exists \pi \in \Sn\setminus\{\pi_\mathrm{id}\}, \delta_{\pi}(G_1,G_2) \leq 0),
\end{align}
where $\pi_\mathrm{id}$ denotes the identity permutation, and
\begin{equation}
\label{Eq:def delta_pi}
\delta_\pi(G_1,G_2) \triangleq w_1 (\Delta^\mathrm{u}(G_1,\pi( G_2)) -\Delta^\mathrm{u}(G_1,G_2))
+ w_2 (\Delta^\mathrm{a}(G_1,\pi(G_2)) -\Delta^\mathrm{a}(G_1,G_2)).
\end{equation}

%Here \eqref{eqLem2_1} follows since $\Pi^*$ is uniformly drawn from $\Sn$, and thus $\prob(\Pi^*=\pi^*) = 1/|\Sn|$ for all $\pi^*$.
Here \eqref{eqLem2_1} follows from the fact that $\Pi^*$ is uniformly drawn from $\Sn$, which implies $\prob(\Pi^*=\pi^*) = 1/|\Sn|$ for all $\pi^*$;
\eqref{eqlem2_2} is due to the symmetry among user vertices in $G_1$ and $G_2$;
\eqref{eqlem2_3} is due to the independence between $\Pi^*$ and $(G_1,G_2)$;
\eqref{eqLem2_4} is true because by Lemma~\ref{Lemma:MAP},  $\pi_{\mathrm{MAP}}(G_1,G_2)$ minimizes the weighted distance, and  $\pi_{\mathrm{MAP}} \neq \pi_{\mathrm{id}}$ only if there exists a permutation $\pi$ such that $\pi \neq \pi_\mathrm{id}$ and $\delta_{\pi}(G_1, G_2) \leq 0$. 

Now to prove that \eqref{Eq:general-feasible-region} implies that the error probability in \eqref{eqLem2_4} converges to $0$ as $n \to \infty$, we
further upper-bound the error probability as follows
\begin{align}
     &\prob\left(\exists \pi \in \Sn \setminus\{\pi_\mathrm{id}\}, \delta_{\pi}(G_1,G_2) \leq 0\right) \nonumber\\
     \label{eqLem2_5}
     & \leq \sum_{\pi \in \Sn \setminus \{\pi_\mathrm{id}\}} \prob(\delta_{\pi}(G_1,G_2) \leq 0)\\
     \label{eqLem2_6}
     & =  \sum_{\Tilde{n}=2}^{n}\sum_{\pi \in \SnTn } \prob( \delta_{\pi} (G_1,G_2) \leq 0)\\
     & \leq \sum_{\Tilde{n}=2}^n |\SnTn| \max_{\pi \in \SnTn} {\{\prob( \delta_{\pi}(G_1,G_2) \leq 0)\}} \nonumber\\
     & \leq \sum_{\Tilde{n}=2}^n n^{\Tilde{n}}
     \max_{\pi \in \SnTn} {\{\prob( \delta_{\pi}(G_1,G_2) \leq 0)\}}. \nonumber
    %  (1-2\psiu)^{\frac{\Tilde{n}(n-2)}{4}}(1-2\psia)^\frac{\Tilde{n}m}{2}.
\end{align}
Here \eqref{eqLem2_5} follows from directly applying the union bound. In \eqref{eqLem2_6}, we use $\SnTn$ to denote the set of permutations on $[n]$ that contains exactly $(n-\Tilde{n})$ fixed points. In the example of Figure~\ref{fig:model}, the given permutation $\Pi^* = (1)(23)$ has 1 fixed point and $(1)(23) \in \mathcal{S}_{3,2}$. Furthermore, we have $|\SnTn|= \binom{n}{\Tilde{n}}%{n \choose \Tilde{n}} 
\left(!\Tilde{n}\right) \leq n^{\Tilde{n}}$, where $!\Tilde{n}$, known as the number of derangements, represents the number of permutations on a set of size $\Tilde{n}$ such that no element appears in its original position.

% Now, we state a lemma to obtain a closed-form upper bound for $\max_{\pi \in \SnTn} {\{\prob( \delta_{\pi}(G_1,G_2) \leq 0)\}}$ and defer its proof to Section~\ref{Proof:Lemma4}.

% \begin{lemma}
% \label{Lemma:P(error of pi)}
% Let $(G_1,G_2)$ be an attributed \er\ pair $\mathcal{G}(n,\bm{p};m,\bm{q})$. For any permutation $\pi$, let
% \begin{equation*}
% \delta_\pi(G_1,G_2) \triangleq w_1 (\Delta^\mathrm{u}(G_1,\pi( G_2)) -\Delta^\mathrm{u}(G_1,G_2))
% + w_2 (\Delta^\mathrm{a}(G_1,\pi( G_2)) -\Delta^\mathrm{a}(G_1,G_2)).
% \end{equation*}
% Then when $\pi$ has $n-\Tilde{n}$ fixed points, we have
% \begin{align*}
%     \prob \left(\delta_\pi (G_1, G_2)\leq 0\right)
%     \leq (1-2\psi_{\mathrm{u}})^{\frac{\Tilde{n}(n-2)}{4}}(1-2\psi_{\mathrm{a}})^{\frac{\Tilde{n}m}{2}}.
% \end{align*}
% \end{lemma}

With the upper bound in Lemma~\ref{Lemma:P(error of pi)}, we have
\begin{align}
    &\prob\left(\exists \pi \in \Sn \setminus\{\pi_\mathrm{id}\},
    \delta_{\pi}(G_1,G_2) \leq 0\right)\nonumber\\
    &\leq \sum_{\Tilde{n}=2}^n n^{\Tilde{n}}
     (1-2\psiu)^{\frac{\Tilde{n}(n-2)}{4}}(1-2\psia)^\frac{\Tilde{n}m}{2}\nonumber\\
    &= \sum_{\Tilde{n}=2}^n \left(n
     (1-2\psiu)^{\frac{n-2}{4}}(1-2\psia)^\frac{m}{2}\right)^{\Tilde{n}}.\label{eqLem2_7}
\end{align}
For this geometry series, the negative logarithm of its common ratio is 
\begin{align}
    &-\log{n} - \frac{n-2}{4} \log{(1-2\psiu)} -\frac{m}{2} \log{(1-2\psia)}\nonumber\\
    \label{eqLem2_8}
    &\geq -\log{n} + \frac{n-2}{2} \psi_u + m\psi_a\\
    \label{eqLem2_9}
    &= \omega(1).
\end{align}
Here we have $\psiu =(\sqrt{p_{11}p_{00}}-\sqrt{p_{10}p_{01}})^2 \leq 1/4$ and $\psia =(\sqrt{q_{11}q_{00}}-\sqrt{q_{10}q_{01}})^2 \leq 1/4$. Therefore, \eqref{eqLem2_8} follows from the inequality $\log{(1+x)} \leq x$ for $x>-1$.
Equation \eqref{eqLem2_9} follows from condition \eqref{Eq:general-feasible-region} by noting that $\psi_u$ is no larger than $1$.
Therefore, the geometry series in \eqref{eqLem2_7} converge to $0$ as $n\to \infty$. This completes the proof that MAP estimator achieves exact alignment w.h.p.\ under condition \eqref{Eq:general-feasible-region}. 
\end{proof}

\begin{proof}[Proof of Lemma~\ref{Lemma:P(error of pi)}]
To prove the upper bound on $\prob (\delta_\pi (G_1, G_2)\leq 0)$ in Lemma~\ref{Lemma:P(error of pi)}, we will use the method of \emph{generating functions}. We first introduce our construction of a generating function and how it can be used to bound $\prob (\delta_\pi (G_1, G_2)\leq 0)$. We then present several properties of generating functions (Facts~\ref{Fact:ogf cycle decomposition}, \ref{Fact:l and 2 cycle}, and \ref{Fact:ogf coefficient}), which will be needed to finish the proof of Lemma~\ref{Lemma:P(error of pi)}.

\textbf{\emph{A generating function for the attributed \er\ pair.}}
% A generating function encodes an sequence of numbers by treating them as the coefficients of a formal power series, and thus turns a counting problem into analysis on functions. We define generating function on a pair of graphs in a way that the powers of formal variables represent the graph statics $\bm{\muu}, \bm{\mua}$ and $\delta_\pi$. 
For any graph pair $(g,h)$ that is a realization of an attributed \er\ pair,
we define a $2 \times 2$ matrix $\bm{\muu}(g,h)$ as follows for user-user edges:
\begin{equation*}
    \bm{\muu}(g,h) = 
    \begin{pmatrix}
    \muu_{11}  & \muu_{10} \\
    \muu_{01}  & \muu_{00} \\
    \end{pmatrix},
\end{equation*}
where $\muu_{ij} = \sum_{e \in \mathcal{E}_{\mathrm{u}}} \mathbb{1}\{g(e) =i, h(e) =j\}$. Similarly, we define $\bm{\mua}(g,h)$ as follows for user-attribute edges:
\begin{equation*}
    \bm{\mua}(g,h) = 
    \begin{pmatrix}
    \mua_{11}  & \mua_{10} \\
    \mua_{01}  & \mua_{00} \\
    \end{pmatrix},
\end{equation*}
where $\mua_{ij} = \sum_{e \in \mathcal{E}_{\mathrm{a}}} \mathbb{1}\{g(e) =i, h(e) =j\}$.

Now we define a generating function for attributed graph pairs, which encodes information in a formal power series. 
Let $z$ be a single formal variable and $\bm{x}$ and $\bm{y}$ be $2\times2$ matrices of formal variables where
$$\bm{x} = \begin{pmatrix}
x_{00}& x_{01}  \\
x_{10}& x_{11} 
\end{pmatrix} \text{ and }
\bm{y} =\begin{pmatrix}
y_{00}& y_{01}  \\
y_{10}& y_{11} 
\end{pmatrix}. $$
Then for each permutation $\pi$, we construct the following generating function:
\begin{equation}
\label{Eq:generating func def}
    \mathcal{A}(\bm{x},\bm{y},z) = \sum_{g\in \{0,1\}^\mathcal{E}}\sum_{h \in \{0,1\}^\mathcal{E}} z^{\delta_{\pi}(g,h)}\bm{x}^{\bm{\muu}(g,h)} \bm{y}^{\bm{\mua}(g,h)},
\end{equation}
where
$$\bm{x}^{\bm{\muu}(g,h)} \triangleq 
x_{00}^{\mu_{00}}\cdot
x_{01}^{\muu_{01}}\cdot
x_{10}^{\muu_{10}}\cdot
x_{11}^{\muu_{11}},$$
$$\bm{y}^{\bm{\mua}(g,h)} \triangleq 
y_{00}^{\mua_{00}}\cdot
y_{01}^{\mua_{01}}\cdot
y_{10}^{\mua_{10}}\cdot
y_{11}^{\mua_{11}}.$$
Note that in the above expression of $\mathcal{A}(\bm{x},\bm{y},z)$, we enumerate all possible attributed graph pairs $(g,h)$ as realizations of the random graph pair $(G_1,G_2)$. For each realization, we encode the corresponding  $\bm{\muu}(g,h), \bm{\mua}(g,h)$ and $\delta_\pi(g,h)$ in the powers of formal variables $\bm{x},\bm{y}$ and $z$. By summing over all possible realizations $(g,h)$, the terms having the same powers are merged as one term. Therefore, the coefficient of a term $z^{\delta_{\pi}}\bm{x}^{\bm{\muu}}\bm{y}^{\bm{\mua}}$ represents the number of graph pairs that have the same graph statics represented in the powers of formal variables. 

\textbf{\emph{Bounding ${\prob(\delta_\pi (G_1, G_2)\leq 0})$ in terms of the generating function}.} 
%In the above definition of generating function, we have specially designed the power of formal variables as some graph statics.
%The reason we care about $\bm{\muu}, \bm{\mua}$ is that they are sufficient graph statistics for characterizing the joint probability. 
% More specifically, recall our attributed \er\ pair model $\mathcal{G}(n, \bm{p}; m,\bm{q})$ where $\prob((G_1,G_2) = (g,h)) = \bm{p}^{\bm{\muu}(g,h)}\bm{q}^{\bm{\mua}(g,h)}$.
% Therefore, by setting $\bm{x}=\bm{p}$ and $\bm{y}=\bm{q}$, we only have $z$ as formal variable and its coefficient is the probability that the discrete random variable $\delta_\pi(G_1,G_2)$ takes value of $d$, i.e., $\prob(\delta_\pi(G_1,G_2) = d) = [z^d]\mathcal{A}(\bm{p},\bm{q},z)$, where $[z^d]$ is the coefficient extraction operator and $[z^d]\mathcal{A}(\bm{p},\bm{q},z)$ represents the coefficient of $z^d$.
% In other words, $\mathcal{A}(\bm{p},\bm{q},z)$ is the \emph{probability generating function} of $\delta_\pi(G_1,G_2)$ with formal variable $z$, i.e., $\mathcal{A}(\bm{p},\bm{q},z) = \E[z^{\delta_{\pi}(G_1,G_2)}]$. We have
% \begin{equation}
% \label{Eq:P(delta<=0)}
%     \prob (\delta_\pi (G_1, G_2)\leq 0) = \sum_{d \leq 0}[z^d] \mathcal{A}(\bm{p},\bm{q},z).
% \end{equation}
We first argue that when we set $\bm{x}=\bm{p}$ and $\bm{y}=\bm{q}$, the generating function $\mathcal{A}(\bm{p},\bm{q},z)$ becomes the probability generating function of $\delta_\pi(G_1,G_2)$ for the attributed \er\ pair $(G_1,G_2)\sim \mathcal{G}(n, \bm{p}; m,\bm{q})$.  To see this, note that the joint distribution of $G_1$ and $G_2$ can be written as $\prob((G_1,G_2) = (g,h)) = \bm{p}^{\bm{\muu}(g,h)}\bm{q}^{\bm{\mua}(g,h)}$. Then by combining terms in $\mathcal{A}(\bm{p},\bm{q},z)$, we have $\prob(\delta_\pi(G_1,G_2) = d) = [z^d]\mathcal{A}(\bm{p},\bm{q},z)$, where $[z^d]\mathcal{A}(\bm{p},\bm{q},z)$ denotes the coefficient of $z^d$ with $[z^d]$ being the \emph{coefficient extraction operator}. We comment that the probability generating function here is defined in the sense that $\mathcal{A}(\bm{p},\bm{q},z) = \E\bigl[z^{\delta_{\pi}(G_1,G_2)}\bigr]$. Since $\delta_{\pi}(G_1,G_2)$ takes real values, this is slightly different from the standard probability generating function for random variables with nonnegative integer values.  But this distinction does not affect our analysis in a significant way since $\delta_{\pi}(G_1,G_2)$ takes values from a finite set.

Now it is easy to see that
\begin{equation}
\label{Eq:P(delta<=0)}
    \prob (\delta_\pi (G_1, G_2)\leq 0) = \sum_{d \leq 0}[z^d] \mathcal{A}(\bm{p},\bm{q},z).
\end{equation}

\textbf{\emph{Cycle decomposition.}}
% \sout{\ning{The goal of introducing generating function $\mathcal{A}(\bm{x},\bm{y},z)$ is to obtain an upper bound for $\prob(\delta_\pi(G_1,G_2))\leq 0)$.
% }%
% However, due to the complicated form of $\mathcal{A}(\bm{x},\bm{y},z)$, we can not directly derive a closed-form expression for $\prob(\delta_\pi (G_1, G_2)\leq 0)$ through equation \eqref{Eq:P(delta<=0)}.
% To simplify $\mathcal{A}(\bm{x},\bm{y},z)$, we apply the method called \emph{cycle decomposition} \ning{and stated the simplified generating function in Fact~\ref{Fact:ogf cycle decomposition}}.}
We will use the cycle decomposition of permutations to simply the form of the generating function $\mathcal{A}(\bm{x},\bm{y},z)$.

Each permutation $\pi$ induces a permutation on the vertex-pair set. We denote this induced permutation as $\pi^\mathcal{E}$, where $\pi^\mathcal{E}: \mathcal{E} \to \mathcal{E}$ and $\pi^\mathcal{E}((u,v)) = (\pi(u),\pi(v))$ for $u,v \in \mathcal{V}$.
A \emph{cycle} of the induced permutation $\pi^{\mathcal{E}}$ is a list of vertex pairs such that each vertex pair is mapped to the vertex pair next to it in the list (with the last  mapped to the first one). 
The cycles naturally partition the set of vertex pairs, $\mathcal{E}$, into disjoint subsets where each subset consists of the vertex pairs from a cycle. We refer to each of these subsets as an \emph{orbit}.
For the example given in Figure~\ref{fig:model}, the induced permutation on $\mathcal{E}$ can divide it into $4$ orbits of size $1$ ($1$-orbit): $\{(2,3)\}$, $\{(1,a)\}$, $ \{(1,b) \}$, $\{ (1,c)\}$, and $4$ orbits of length $2$ ($2$-orbit): $\{(1,2),(1,3)\}$, $\{ (2,a),(3,a) \}$, $\{(2,b),(3,b)\}$, $\{ (2,c),(3,c)\}$. 

We write this partition of $\mathcal{E}$ based on the cycle decomposition as $\mathcal{E} = \cup_{k\geq1} \orbit _k$, where $\orbit_k$ denotes the $k$th orbit. Note that each cycle consists of either only user-user vertex pairs or only user-attribute vertex pairs. If a single orbit $\orbit_k$ contains only user-user vertex pairs, we define its generating function on formal variables $z$ and $\bm{x}$ as
\begin{align*}
    \mathcal{A}_{\orbit_k}(\bm{x},z) = \sum_{g\in \{0,1\}^{\orbit_k}}\sum_{h \in \{0,1\}^{\orbit_k}} z^{\delta_{\pi}(g,h)}\bm{x}^{\bm{\muu}(g,h)}.
    % \bm{x}^{\bm{\mua}(g,h)}.
\end{align*}
If $\orbit_k$ contains only user-attribute vertex pairs, we define its generating function on formal variables $z$ and $\bm{y}$ as
\begin{align*}
    \mathcal{A}_{\orbit_k}(\bm{y},z) = \sum_{g\in \{0,1\}^{\orbit_k}}\sum_{h \in \{0,1\}^{\orbit_k}} z^{\delta_{\pi}(g,h)}\bm{y}^{\bm{\mua} (g,h)}.
    % \bm{y}^{\bm{\muu}(g,h)}.
\end{align*}
Here, we extend the previous definitions of $\delta_\pi$, $\bm{\muu}$ and $\bm{\mua}$ on attributed graphs to any set of vertex pairs. 
Let $\mathcal{E'}$ be an arbitrary set of vertex pairs. Then
we define $\delta_\pi$ for any $g,h \in \{0,1\}^\mathcal{E'}$ as
\begin{align}
    \delta_\pi(g,h) = & w_1 \mspace{-20mu} \sum_{e \in \mathcal{E'} \cap  \mathcal{E}_\mathrm{u}} \left(\mathbb{1}\{g(e) \neq h(\pi^{\mathcal{E}}(e))\} -\mathbb{1}\{g(e) \neq h(e)\} \right)\nonumber\\
    \label{Eq:extend delta_pi}
    & \mspace{-50mu} +w_2\sum_{{e \in \mathcal{E'} \cap \mathcal{E}_\mathrm{a}}} \left(\mathbb{1}\{g(e) \neq h(\pi^{\mathcal{E}}(e))\} -\mathbb{1}\{g(e) \neq h(e)\} \right).
\end{align}
For $g, h \in \{0,1\}^{\mathcal{E'}}$, we keep $\bm{\muu}(g,h)$ and $\bm{\mua}(g,h)$ as $2\times2$ matrices as follows:
\begin{align*}
    \bm{\muu}(g,h) = 
    \begin{pmatrix}
    \muu_{11}, \muu_{10}\\
    \muu_{01}, \muu_{00}
    \end{pmatrix}
    \text{ \;\ and \;\ }
    \bm{\mua}(g,h) = 
    \begin{pmatrix}
    \mua_{11}, \mua_{10}\\
    \mua_{01}, \mua_{00}
    \end{pmatrix},
\end{align*}
where
\begin{align}
    \label{Eq:extend muu}
    \muu_{ij} = \muu_{ij}(g,h) \triangleq\sum_{{e \in \mathcal{E'} \cap \mathcal{E}_\mathrm{u}}} \mathbb{1}\{ g(e)=i, h(e)=j\},\\
    \label{Eq:extend mua}
    \mua_{ij} = \mua_{ij}(g,h) \triangleq \sum_{{e \in \mathcal{E'} \cap \mathcal{E}_\mathrm{a}}} \mathbb{1}\{ g(e)=i, h(e)=j\}.
\end{align}
We remind the reader that by setting the set of vertex pairs $\mathcal{E'}$ to be $\mathcal{E}$ these extended definitions on $\delta_\pi$, $\bm{\muu}$ and $\bm{\mua}$ agree with the previous definition where $g, h$ are attributed graphs.

Now, we consider the generating functions for two orbits $\orbit_k$ and $\orbit_{k'}$. If the size of $\orbit_k$ equals to the size of $\orbit_{k'}$ and both orbits consist of user-user vertex pairs, then we claim that $\mathcal{A}_{\orbit_k}(\bm{x},z) = \mathcal{A}_{\orbit_{k'}}(\bm{x},z)$. This is because to obtain $\mathcal{A}_{\orbit_k}(\bm{x},z)$, we sum over all realizations $g,h \in \{0,1\}^{\orbit_k}$, which is equivalent to summing over $g,h \in \{0,1\}^{\orbit_{k'}}$. Similarly, if the size of $\orbit_k$ equals to the size of $\orbit_{k'}$ and both orbits consist of user-attribute vertex pairs, we have $\mathcal{A}_{\orbit_k}(\bm{y},z) = \mathcal{A}_{\orbit_{k'}}(\bm{y},z)$.
To make the notation compact, we define a generating function $\mathcal{A}_l(\bm{x},z)$ for size $l$ user-user orbits and a generating function $\mathcal{A}_l(\bm{y},z)$ for size $l$ user-attribute orbits. Let $\mathcal{E}_l^\mathrm{u}$ denote a general user-user orbit of size $l$ and $\mathcal{E}_l^\mathrm{a}$ denote a general user-attribute orbit of size $l$. Then
\begin{align}
\label{Eq:l-cycle OGFu}
    \mathcal{A}_l(\bm{x},z) \triangleq \sum_{g\in \{0,1\}^{\mathcal{E}_l^{\mathrm{u}}}}\sum_{h \in \{0,1\}^{\mathcal{E}_l^{\mathrm{u}}}} z^{\delta_{\pi}(g,h)}\bm{x}^{\bm{\muu} (g,h)},\\
    % \bm{x}^{\bm{\mua} (g,h)},\\
    \label{Eq:l-cycle OGFa}
    \mathcal{A}_l(\bm{y},z) \triangleq \sum_{g\in \{0,1\}^{\mathcal{E}_l^{\mathrm{a}}}}\sum_{h \in \{0,1\}^{\mathcal{E}_l^{\mathrm{a}}}} z^{\delta_{\pi}(g,h)}\bm{y}^{\bm{\mua} (g,h)}.
    % \bm{y}^{\bm{\mua} (g,h)}.
\end{align}

\textbf{\emph{Properties of generating functions.}}
\begin{restatable}{fact}{ogfcycle}\label{Fact:ogf cycle decomposition}
The generating function $\mathcal{A}(\bm{x},\bm{y},z)$ of permutation $\pi$ can be decomposed into
\begin{align*}
    \mathcal{A}(\bm{x},\bm{y},z)= & \prod_{l\geq1} \mathcal{A}_{l}(\bm{x},z)^{t_l^{\mathrm{u}}} \mathcal{A}_{l} (\bm{y},z)^{t_l ^{\mathrm{a}}},
\end{align*}
where $t_l^\mathrm{u}$ is the number of user-user orbits of  size $l$, $t_l^\mathrm{a}$ is the number of user-attribute orbits of size $l$. 
\end{restatable}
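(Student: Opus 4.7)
The plan is to show that the generating function factorizes because all the statistics encoded in it, namely $\delta_\pi(g,h)$, $\bm{\muu}(g,h)$, and $\bm{\mua}(g,h)$, are additive over a partition of $\mathcal{E}$ into the orbits of $\pi^{\mathcal{E}}$. Once this additivity is established, the double sum over $(g,h)\in\{0,1\}^{\mathcal{E}}\times\{0,1\}^{\mathcal{E}}$ in \eqref{Eq:generating func def} splits into independent double sums over $(g|_{\orbit_k},h|_{\orbit_k})\in\{0,1\}^{\orbit_k}\times\{0,1\}^{\orbit_k}$, and the generating function becomes a product of per-orbit generating functions.

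The first step is to observe that each cycle of $\pi^{\mathcal{E}}$ is contained either entirely in $\eu$ or entirely in $\ea$, because $\pi$ permutes user vertices and fixes attribute vertices, so $\pi^{\mathcal{E}}$ maps user-user pairs to user-user pairs and user-attribute pairs to user-attribute pairs. This justifies classifying each orbit $\orbit_k$ as either a user-user orbit or a user-attribute orbit. The next step is to verify the additivity claims directly from the extended definitions \eqref{Eq:extend delta_pi}, \eqref{Eq:extend muu}, and \eqref{Eq:extend mua}: each is a sum of indicators indexed by edges, so
\[
\delta_\pi(g,h)=\sum_{k\geq 1}\delta_\pi(g|_{\orbit_k},h|_{\orbit_k}),\quad
\bm{\muu}(g,h)=\sum_{k\geq 1}\bm{\muu}(g|_{\orbit_k},h|_{\orbit_k}),
\]
and similarly for $\bm{\mua}$, with the understanding that orbits of the wrong type contribute $0$ to the user and attribute counts respectively.

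With additivity in hand, the exponents in $z^{\delta_\pi(g,h)}\bm{x}^{\bm{\muu}(g,h)}\bm{y}^{\bm{\mua}(g,h)}$ split into sums over orbits, so this monomial factors as a product of per-orbit monomials. Because specifying $(g,h)$ on $\mathcal{E}$ is equivalent to independently specifying $(g|_{\orbit_k},h|_{\orbit_k})$ on each orbit $\orbit_k$, the double sum in \eqref{Eq:generating func def} distributes over the product, yielding
\[
\mathcal{A}(\bm{x},\bm{y},z)=\prod_{k\geq 1}\mathcal{A}_{\orbit_k}(\cdot,z),
\]
where each factor is $\mathcal{A}_{\orbit_k}(\bm{x},z)$ or $\mathcal{A}_{\orbit_k}(\bm{y},z)$ depending on the type of the orbit. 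The final step is to use the earlier observation that orbits of the same size and same type produce identical generating functions, namely $\mathcal{A}_l(\bm{x},z)$ or $\mathcal{A}_l(\bm{y},z)$ as defined in \eqref{Eq:l-cycle OGFu}--\eqref{Eq:l-cycle OGFa}; grouping the product by orbit size $l$ and type then yields the stated formula with exponents $t_l^{\mathrm{u}}$ and $t_l^{\mathrm{a}}$.

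There is no real analytical obstacle here; the proof is structural. The main point to handle carefully is the bookkeeping that ensures the classification of orbits is well-defined and that the restriction map $(g,h)\mapsto(g|_{\orbit_k},h|_{\orbit_k})_k$ is a bijection between $\{0,1\}^{\mathcal{E}}\times\{0,1\}^{\mathcal{E}}$ and $\prod_k(\{0,1\}^{\orbit_k}\times\{0,1\}^{\orbit_k})$, so that Fubini-style distribution of the sum over the product is valid. Everything else follows from the additivity of the statistics and the invariance of the per-orbit generating function under relabeling of orbits of the same size and type.
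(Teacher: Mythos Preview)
Your proposal is correct and follows essentially the same approach as the paper: both arguments exploit the additivity of $\delta_\pi$, $\bm{\muu}$, and $\bm{\mua}$ over the orbit partition of $\mathcal{E}$, then use the bijection between $\{0,1\}^{\mathcal{E}}$ and $\prod_k\{0,1\}^{\orbit_k}$ to distribute the sum over the resulting product, and finally group orbits by size and type. Your write-up is slightly more explicit about why each orbit lies entirely in $\eu$ or $\ea$, but the structure of the argument is identical.
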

\begin{fact}
\label{Fact:l and 2 cycle}
Let $\bm{x} \in \R ^{2 \times 2}$ and $z \neq 0$. Then for all $l \geq 2$, we have $\mathcal{A}_{l}(\bm{x},z) \leq \mathcal{A}_{2}(\bm{x},z)^{\frac{l}{2}}$ and  $\mathcal{A}_{l}(\bm{x},z) \leq \mathcal{A}_{2}(\bm{x},z)^{\frac{l}{2}}$.
\end{fact}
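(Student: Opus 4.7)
The plan is to rewrite $\mathcal{A}_l(\bm{x},z)$ as the trace of the $l$-th power of a $2\times 2$ transfer matrix, verify that this matrix has real eigenvalues, and then conclude via an elementary inequality on its two eigenvalues.

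First, I would fix a user-user orbit $\{e_1,\ldots,e_l\}$ with $\pi^{\mathcal{E}}(e_i)=e_{i+1\bmod l}$ and, writing $g_i=g(e_i)$, $h_i=h(e_i)$, unfold \eqref{Eq:l-cycle OGFu} using \eqref{Eq:extend delta_pi} and \eqref{Eq:extend muu}. The summand factors along the cycle if we absorb the ``matched'' contribution into $A_{g,h}:= x_{g,h}\,z^{-w_1\mathbb{1}\{g\neq h\}}$ (attached to each pair $(g_i,h_i)$) and the ``crossed'' contribution into $B_{g,h}:= z^{w_1\mathbb{1}\{g\neq h\}}$ (attached to each pair $(g_i,h_{i+1})$), giving
\[
\mathcal{A}_l(\bm{x},z)=\sum_{g,h\in\{0,1\}^l}\prod_{i=1}^l A_{g_i h_i}\,B_{g_i h_{i+1}}.
\]
Summing out the $h_i$'s one at a time collapses the double sum to $\mathcal{A}_l(\bm{x},z)=\mathrm{tr}(C^l)$, where $C:=AB$ is a $2\times 2$ real matrix (note that $B$ is symmetric, so no transpose is needed).

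Next, I would verify that the two eigenvalues $\lambda_1,\lambda_2$ of $C$ are real by expanding
\begin{align*}
\mathrm{tr}(C)^2-4\det(C)
&=(x_{00}-x_{11})^2+(x_{01}-x_{10})^2\\
&\quad+2(x_{00}+x_{11})(x_{01}+x_{10})+4x_{00}x_{11}z^{2w_1}+4x_{01}x_{10}z^{-2w_1},
\end{align*}
a sum of non-negative terms whenever $x_{ij}\geq 0$ and $z>0$ (the setting of the intended application to $\bm{x}=\bm{p}$); hence the discriminant is non-negative and $\lambda_1,\lambda_2\in\R$. Writing $\mathcal{A}_l=\lambda_1^l+\lambda_2^l$ and $\mathcal{A}_2=\lambda_1^2+\lambda_2^2\geq 0$, the elementary inequality $a^l+b^l\leq (a^2+b^2)^{l/2}$ for $a,b\geq 0$ and $l\geq 2$, applied to $(|\lambda_1|,|\lambda_2|)$, then yields
\[
\mathcal{A}_l(\bm{x},z)\leq |\lambda_1|^l+|\lambda_2|^l\leq (\lambda_1^2+\lambda_2^2)^{l/2}=\mathcal{A}_2(\bm{x},z)^{l/2}.
\]
The user-attribute statement follows from the identical argument with $\bm{x}\mapsto\bm{y}$ and $w_1\mapsto w_2$.

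The main obstacle is the middle step: producing a manifestly non-negative expression for $\mathrm{tr}(C)^2-4\det(C)$. The split of $\delta_\pi$ between the ``matched'' $A$ and the ``crossed'' $B$ is not unique, but this particular split has the twin virtues of keeping $B$ symmetric (so that summing out the $h_i$'s really produces a trace against a single matrix $C$) and isolating the $z^{w_1}$ dependence in terms that combine with the non-negative $x_{ij}$ to make the discriminant a sum of squares and positive monomials.
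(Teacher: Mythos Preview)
Your transfer-matrix argument is correct and is essentially the same approach as the one the paper defers to: the paper does not prove Fact~\ref{Fact:l and 2 cycle} itself but cites Theorem~4 of Cullina--Kiyavash~\cite{Cul-Kiy-exact2017}, whose proof likewise writes $\mathcal{A}_l$ as $\mathrm{tr}(C^l)=\lambda_1^l+\lambda_2^l$ for a $2\times 2$ transfer matrix and concludes via the eigenvalue inequality. One small remark: the hypotheses as stated in the paper (arbitrary $\bm{x}\in\R^{2\times 2}$, $z\neq 0$) are looser than what the argument actually supports, and you correctly flagged that your discriminant decomposition needs $x_{ij}\ge 0$ and $z>0$; this is exactly the regime used downstream (e.g.\ $\bm{x}=\bm{p}$, $z=e^{-1/4}$), so the restriction is harmless.
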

We refer the readers to Appendix~\ref{Appendix:fact1} for the proof of Fact~\ref{Fact:ogf cycle decomposition}, and Theorem~4 in \cite{Cul-Kiy-exact2017} for the proof of Fact~\ref{Fact:l and 2 cycle}.
Combining these two facts, we get 
\begin{align}
    \mathcal{A}(\bm{x},\bm{y},z) \leq  &\mathcal{A}_{1}(\bm{x},z)^{t_1^\mathrm{u}}
    \mathcal{A}_{1}(\bm{y},z)^{t_1^\mathrm{a}}\nonumber\\
    \label{Eq:ogf bound}
    &\mathcal{A}_{2}(\bm{x},z)^{\frac{\vpin -t_1^\mathrm{u}}{2}}
    \mathcal{A}_{2}(\bm{x},z)^{\frac{nm-t_1^\mathrm{a}}{2}}.
\end{align}
Here, in \eqref{Eq:ogf bound}, we use $\vpin$ to denote the total number of user-user pairs and $\vpin = \sum_{l \geq 1} t_l^\mathrm{u} l = \binom{n}{2}$.
We have the closed-form expressions for $\mathcal{A}_1$ and $\mathcal{A}_2$ following from their definition in \eqref{Eq:l-cycle OGFu} and \eqref{Eq:l-cycle OGFa}
\begin{align}
    \label{Eq:A1(x)}
    \mathcal{A}_{1}(\bm{x},z)&=x_{00}+x_{10}+x_{01}+x_{11},\\
    \label{Eq:A1(y)}
    \mathcal{A}_{1}(\bm{y},z)&=y_{00}+y_{10}+y_{01}+y_{11},\\
    \label{Eq:A2(x)}
    \mathcal{A}_{2}(\bm{x},z)&=(x_{00}+x_{10}+x_{01}+x_{11})^2\nonumber \\ 
    &+2x_{00}x_{11}(z^{2w_1}-1)+2x_{10}x_{01}(z^{-2w_1}-1), \\
    \label{Eq:A2(y)}
    \mathcal{A}_{2}(\bm{y},z)&=(y_{00}+y_{10}+y_{01}+y_{11})^2\nonumber \\
    &+2y_{00}y_{11}(z^{2w_2}-1)+2y_{10}y_{01}(z^{-2w_2}-1).
\end{align}
Moreover, we have Fact~\ref{Fact:ogf coefficient} which gives explicit upper bounds on the coefficients of a generating function
\begin{fact}
\label{Fact:ogf coefficient} 
For a discrete random variable $X$ defined over a finite set $\mathcal{X}$, let
\begin{equation}
\label{Eq:prob-gen-func}
\Phi(z)  \triangleq \E[z^X] = \sum_{i \in \mathcal{X}}\prob(X=i)z^i
\end{equation}
be the probability generating function of $X$.
Then, for any $j \in \mathcal{X}$ and $z > 0$,
\begin{equation}
\label{eq:ogf coefficient}
     [z^j]\Phi(z) \leq z^{-j} \Phi(z).
\end{equation}
For any $j \in \mathcal{X}$ and $z \in (0,1]$, 
\begin{equation}
\label{eq:ogf sum-coefficient}
    \sum_{\substack{i \leq j\\ i\in \mathcal{X}}} [z^i]\Phi(z) \leq  z^{-j}\Phi(z).
\end{equation}
For any $j \in \mathcal{X}$ and $z \geq 1$, 
\begin{equation}
\label{eq:ogf sum-coefficient2}
    \sum_{\substack{i \geq j\\ i\in \mathcal{X}}} [z^i]\Phi(z) \leq  z^{-j}\Phi(z).
\end{equation}

\end{fact}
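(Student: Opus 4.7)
The plan is to read all three bounds as elementary Chernoff-style inequalities for $\Phi$: since $\Phi$ is the probability generating function of $X$, every coefficient satisfies $[z^i]\Phi(z)=\prob(X=i)\geq 0$, i.e.\ $\Phi$ is a polynomial with nonnegative coefficients. For any $z>0$, the identity
\begin{equation*}
z^{-j}\Phi(z)=\sum_{i\in\mathcal{X}}\prob(X=i)\,z^{i-j}
\end{equation*}
then writes the right-hand side as a sum of nonnegative terms, and I can freely compare any subset of these terms to the whole sum.

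For \eqref{eq:ogf coefficient} I would simply retain the single term $i=j$, which equals $\prob(X=j)z^{0}=[z^{j}]\Phi(z)$, and drop all remaining nonnegative terms. For \eqref{eq:ogf sum-coefficient} the key observation is that $z\in(0,1]$ and $i\leq j$ force $z^{i-j}\geq 1$, hence $\prob(X=i)\leq\prob(X=i)\,z^{i-j}$ for every $i\leq j$; summing over $i\leq j$ and then extending the sum to all $i\in\mathcal{X}$ (again using nonnegativity of the omitted terms) produces the bound. Inequality \eqref{eq:ogf sum-coefficient2} follows by the symmetric argument: for $z\geq 1$ and $i\geq j$ we again have $z^{i-j}\geq 1$, so the same two-step chain goes through with ``$\leq j$'' replaced by ``$\geq j$''.

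There is really no serious obstacle here; the entire content of the fact is the standard observation underlying the Chernoff bound $\prob(X\leq j)\leq z^{-j}\E[z^{X}]$ for $z\in(0,1]$ and its dual $\prob(X\geq j)\leq z^{-j}\E[z^{X}]$ for $z\geq 1$, specialized to the finite setting where $\Phi(z)$ is a genuine polynomial so that coefficient extraction is unambiguous. The only thing to be careful about is keeping track of the direction of the monomial comparison, which is why the two one-sided statements require $z\in(0,1]$ and $z\geq 1$ respectively.
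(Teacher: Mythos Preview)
Your proposal is correct and matches the paper's own proof essentially line for line: the paper also writes $z^{-j}\Phi(z)=\sum_{i}p_i z^{i-j}$, obtains \eqref{eq:ogf coefficient} by dropping the nonnegative terms with $i\neq j$, and obtains \eqref{eq:ogf sum-coefficient} and \eqref{eq:ogf sum-coefficient2} via the same two-step chain $\sum p_i\leq\sum p_i z^{i-j}\leq z^{-j}\Phi(z)$ using $z^{i-j}\geq 1$ in the appropriate ranges.
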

\begin{proof}[Proof of Fact~\ref{Fact:ogf coefficient}]
We write $p_i \triangleq \prob(X=i)$ in this proof. For any $j \in \mathcal{X}$ and $z>0$, we have 
\begin{align*}
    z^{-j}\Phi(z) - [z^j]\Phi(z) &= \sum_{i\in\mathcal{X}}p_i z^{i-j}-p_j
    = \sum_{\substack{i \neq j\\i\in\mathcal{X}}}p_i z^{i-j}
    \ge 0,
\end{align*}
which establishes~\eqref{eq:ogf coefficient}.

For any $j \in \mathcal{X}$ and $z \in (0,1)$, we have $\sum_{i \leq j}p_i \leq \sum_{i \leq j}p_i z^{i-j}$. Therefore, we have
\begin{align*}
    \sum_{\substack{i \leq j\\  i\in \mathcal{X}} } [z^i] \Phi (z) = \sum_{\substack{i \leq j\\ i \in \mathcal{X}}}p_i \leq \sum_{\substack{i \leq j\\ i \in \mathcal{X} }}p_i z^{i-j} \leq \sum_{i \in \mathcal{X}}p_i z^{i-j} = z^{-j} \Phi (z),
\end{align*}
which establishes \eqref{eq:ogf sum-coefficient}. 

For any $z >1$ and $j \in \mathcal{X}$, we have $\sum_{i \geq j}p_i \leq \sum_{i \geq j}p_i z^{i-j}$. Therefore, we have
\begin{align*}
    \sum_{\substack{i \geq j\\  i\in \mathcal{X}} } [z^i]\Phi (z) = \sum_{\substack{i \geq j\\ i \in \mathcal{X} }}p_i \leq \sum_{\substack{i \geq j\\ i \in \mathcal{X}}}p_i z^{i-j} \leq \sum_{i \in \mathcal{X}}p_i z^{i-j} = z^{-j} \Phi(z),
\end{align*}
which establishes \eqref{eq:ogf sum-coefficient2}.
\end{proof}
% \subsection{Proof of Lemma~\ref{Lemma:P(error of pi)}}\label{Proof:Lemma4}
% \begin{proof}%[Proof of Lemma~\ref{Lemma:P(error of pi)}]
With the three facts of generating functions stated above, we are now ready to finish the proof of Lemma~\ref{Lemma:P(error of pi)}.
For any $\pi \in \SnTn$ and any $z_1\in (0,1)$, we have
\begin{align}
    & \prob\left(\delta_\pi (G_1, G_2)\leq 0\right) \nonumber\\
    &= \sum_{d\leq 0}[z^d] \mathcal{A}(\bm{p},\bm{q} ,z) \nonumber\\
    \label{eqLem4_0}
    &\leq \mathcal{A}(\bm{p},\bm{q} ,z_1)\\
    \label{eqLem4_-1}
    &\leq \mathcal{A}_{1}(\bm{p},z)^{t_1^\mathrm{u}}
    \mathcal{A}_{1}(\bm{q},z)^{t_1^\mathrm{a}}\nonumber\\
    & \;\ \mathcal{A}_{2}(\bm{p},z)^{\frac{\vpin -t_1^\mathrm{u}}{2}}
    \mathcal{A}_{2}(\bm{q},z)^{\frac{nm-t_1^\mathrm{a}}{2}}\\
    \label{eqLem4_1}
    &\leq \mathcal{A}_2(\bm{p},z)^{\frac{\vpin - t_1 ^{\mathrm{u}}}{2}} \mathcal{A}_2(\bm{q},z)^{\frac{nm- t_1 ^{\mathrm{a}}}{2}}.
\end{align}
In \eqref{eqLem4_0}, we set $z \in (0,1)$, and this upper bound follows from Fact~\ref{Fact:ogf coefficient}. \eqref{eqLem4_-1} follows from the decomposition on $\mathcal{A}(\bm{p},\bm{q} ,z)$ stated in Fact~\ref{Fact:ogf cycle decomposition}. Equation \eqref{eqLem4_1} follows since $\mathcal{A}_1(\bm{p},z) =\mathcal{A}_1(\bm{q},z) =1$ according to their expression in \eqref{Eq:A1(x)} and \eqref{Eq:A1(y)}. To obtain a tight bound, we then search for $z \in (0,1)$ that achieves the minimum of \eqref{eqLem4_1}. Following the definition of $\mathcal{A}_2(\bm{p},z)$ in \eqref{Eq:A2(x)} and using the inequality $a/x+bx \geq 2\sqrt{ab}$, we have
\begin{align}
    \mathcal{A}_2(\bm{p},z) &= 1+ 2p_{00}p_{11}( z^{2w_1}-1)+2p_{10}p_{01}( z^{-2w_1}-1)\nonumber\\
    &\geq 1-2p_{00}p_{11}-2p_{10}p_{01} + 4\sqrt{p_{00}p_{11}p_{10}p_{01}}\nonumber\\
    \label{Eq:def-psiu}
    & = 1-2(\sqrt{p_{00}p_{11}}-\sqrt{p_{10}p_{01}})^2 \triangleq 1-2\psiu.
\end{align}
Here the equality holds if and only if $z^{2w_1} = \sqrt{\frac{p_{10}p_{01}}{p_{00}p_{11}}}$. Recall that $w_1 = \log\left(\frac{p_{11}p_{00}}{p_{10}p_{01}}\right)$. Therefore, $\mathcal{A}_1(\bm{p},z)$ achieves the minimum when  {$z = e^{-1/4}$}. Similarly, we have 
\begin{align}
    \mathcal{A}_2(\bm{q},z) &= 1+ 2q_{00}q_{11}( {z^{2w_2}}-1)+2q_{10}q_{01}( {z^{-2w_2}}-1)\nonumber\\
    &\geq 1-2q_{00}q_{11}-2q_{10}q_{01} + 4\sqrt{q_{00}q_{11}q_{10}q_{01}}\nonumber\\
    \label{Eq:def-psia}
    & = 1-2(\sqrt{q_{00}q_{11}}-\sqrt{q_{10}q_{01}})^2 \triangleq 1-2\psia.
\end{align}
Here the equality holds if and only if  $ {z^{2w_2}} = \sqrt{\frac{q_{10}q_{01}}{q_{00}q_{11}}}$. With $w_2 = \log\left(\frac{q_{11}q_{00}}{q_{10}q_{01}}\right)$, we have that  $\mathcal{A}_2(\bm{q},z)$ achieves the minimum when  {$z = e^{-1/4}$}. Therefore,  {$z =e^{-1/4}$} minimizes \eqref{eqLem4_1} and we have
\begin{align}
    &\prob\left(\delta_\pi (G_1, G_2)\leq 0\right) \nonumber \\
    &\leq (1-2\psiu)^{\frac{\vpin - t_1^{\mathrm{u}}}{2}}
    (1-2\psia)^{\frac{mn- t_1 ^{\mathrm{a}}}{2}}\nonumber\\
    \label{eqLem4_2}
    &\leq (1-2\psi_{\mathrm{u}})^{\frac{\Tilde{n}(2n-\Tilde{n}- 2)}{4}}
    (1-2\psi_{\mathrm{a}})^{\frac{\Tilde{n}m}{2}}\\
    \label{eqLem4_3}
    &\leq (1-2\psi_{\mathrm{u}})^{\frac{\Tilde{n}(n-2)}{4}}(1-2\psi_{\mathrm{a}})^{\frac{\Tilde{n}m}{2}}.
\end{align}
In \eqref{eqLem4_2}, we use the following relations between the number of fixed vertex pairs $t_1^\mathrm{u}$, $t_1^\mathrm{a}$ and number of fixed vertices $\Tilde{n}$
\begin{align}
    &\binom{n-\Tilde{n}}{2} \leq t_1^\mathrm{u} \leq \binom{n-\Tilde{n}}{2}+\frac{\Tilde{n}}{2},\\
    & t_1^\mathrm{a} = (n-\Tilde{n})m \nonumber.
\end{align}
In the given upper bound of $t_1^\mathrm{u}$, $\binom{n-\Tilde{n}}{2}$ corresponds to the number of user-user vertex pairs whose two vertices are both fixed under $\pi$, and $\frac{\Tilde{n}}{2}$ is the upper bound of user-user vertex pairs whose two vertices are swapped under $\pi$. 
In $\eqref{eqLem4_3}$, we use the fact that $\Tilde{n} \leq n$. 
\end{proof}

\section{Proof of achievability in the sparse regime}

% \emph{\textbf{Proof Outline.}}
In this section, we prove Theorem~\ref{Thm:achievability-sparse}, which characterizes the achievable region when the user-user connection is  \emph{sparse} in the sense that $p_{11} = O(\frac{\log n}{n})$. We use $\Density$ to denote the number of user-user edges in the intersection graph and it follows a binomial distribution $\Bin(\vpin, p_{11})$. 
%%ToDo
% From the last section, $p_{11} = \omega(\frac{\log n}{n})$ is proved to be achievable (Corollary~\ref{}). 
In the sparse regime where $p_{11} = O(\frac{\log n}{n})$, the achievability proof here is different from what we did in  Section~\ref{sec:proof-achievability}. The reason for applying a different proof technique is that, in this sparse regime, the union bound we applied in Section~\ref{sec:proof-achievability} on $\prob\left(\exists \pi \in \Sn \setminus\{\pi_\mathrm{id}\}, \delta_{\pi}(G_1,G_2) \leq 0\right)$ becomes very loose. To elaborate on this point, notice that the error of union bound comes from counting the intersection events multiple times. Therefore, if the probability of such intersection events becomes larger, then the union bound will be looser. In our problem, our event space contains sets of possible realizations on $(G_1, G_2)$ and an example of the aforementioned intersection events is  $\{\Density=0\}$ which lays in the intersection of $\{\delta_{\pi}(G_1,G_2) \leq 0\}$ for all $\pi \in \Sn$.  Moreover, other events where $\Density$ is small are also in the intersection of $\{\delta_{\pi}(G_1,G_2) \leq 0\}$ for some $\pi \in \Sn$ and the number of such permutations (equivalently the times of repenting when apply union bound) increases as $R$ gets smaller. As a result,  if $p_{11}$ becomes relatively small, then the probability that $\Density$ is small will be large and thus union bound will be loose.

To overcome the problem of the loose union bound in the sparse regime, we apply a \emph{truncated union bound}. We first expand the probability we want to bound as follows
\begin{align*}
    &\prob\left(\exists \pi \in \Sn \setminus\{\pi_\mathrm{id}\}, \delta_{\pi}(G_1,G_2) \leq 0\right) \\
    &\mspace{-5mu}=\mspace{-5mu} \sum_{r\geq 0} \prob\left(\exists \pi \in \Sn \setminus\{\pi_\mathrm{id}\}, \delta_{\pi}(G_1,G_2) \leq 0 | \Density=\density \right) \prob(\Density=\density).
\end{align*}
We then apply the union bound on the conditional probability $\prob\left(\exists \pi \in \Sn \setminus\{\pi_\mathrm{id}\}, \delta_{\pi}(G_1,G_2) \leq 0 | \Density=\density \right).$ As we discussed before, the error of applying union bound directly should be a function on $\density$. Therefore, for some small $\density$, the union bound on $\prob\left(\exists \pi \in \Sn \setminus\{\pi_\mathrm{id}\}, \delta_{\pi}(G_1,G_2) \leq 0 | \Density=\density \right)$ is very loose while for the other $\density$, the union bound is relatively tight.
Therefore, we truncate the union bound on the conditional probability by taking the minimum with 1, which is an upper bound for any probability
\begin{align*}
    &\prob\left(\exists \pi \in \Sn \setminus\{\pi_\mathrm{id}\}, \delta_{\pi}(G_1,G_2) \leq 0 | \Density=\density \right)\\
    &= \min \{1, \sum_{\pi \in \Sn \setminus\{\pi_{\mathrm{id}}\}}\prob\left(\delta_{\pi}(G_1,G_2) \leq 0 | \Density=\density \right)\}.
\end{align*}
Through such truncating, we avoid adopting the union bound when it is too loose and obtain a tighter bound. For example, given that $\Density=0$,  we have $\prob\left(\delta_{\pi}(G_1,G_2) \leq 0 | \Density=0 \right) =1$ for all $\pi \in \Sn$. Thus, by using the truncated union bound, we obtain $1$ as a the upper bound instead of $(n!-1)$.
Overall, the key idea of our proof is first derive $\prob\left(\delta_{\pi}(G_1,G_2) \leq 0 | \Density=\density \right)$ as a function of $\density$ and then apply the truncated union bound according to how large this conditional probability  is.
This idea is inspired by \cite{Cul-Kiy-exact2017} and is extended to the attributed \er\ pair model. 
We restate the theorem to prove as follows.

% \begin{replemma}{Lemma:achievability-sparse}
% Consider the attributed \er\ pair $\mathcal{G}(n,\bm{p};m,\bm{q})$.
% If
% \begin{align}
%     % \ning{m \psia} & \ning{= \Theta(\log n)},\label{eq:lem30}\\
%     p_{11} &=\ning{O}\left(\tfrac{\log{n}}{n}\right),\label{eq:lem31}\\ 
%     p_{10}+p_{01} &= O\left(\tfrac{1}{\log{n}}\right),\label{eq:lem32}\\
%     \frac{p_{10}p_{01}}{p_{11}p_{00}} &= O\left(\tfrac{1}{\ning{(\log{n})^3}}\right),\label{eq:lem33}\\
%     np_{11}+m \psia -\log{n} &= \omega(1),\label{eq:lem34}
% \end{align}
% then the MAP estimator achieves exact alignment w.h.p.
% \end{replemma}
\LemmaSparse*

\begin{proof}[Proof of Theorem~\ref{Thm:achievability-sparse}]
% \ning{We first restrict the regime to discuss using Lemma~\ref{Lemma:achievability-general}.  
% If $m \psia = \omega (n \log n)$, then exact alignment is achievable according to \eqref{eq:achievability}, which agrees with \eqref{eq:lem34}. 
% Therefore we only need to consider $m \psia = O(n \log n)$ in the later part of the proof.}
We discuss two regimes $p_{11}= O(\frac{1}{n})$ and  $\omega(\frac{1}{n}) \leq p_{11} \leq \Theta(\frac{\log{n}}{n})$ .

When $p_{11} = O(\frac{1}{n})$, we have $n\psiu \leq np_{11} = O(1)$. Thus, the sufficient condition~\eqref{Eq:general-feasible-region} for exact alignment in Theorem~\ref{Thm:achievability-general} 
\[
\tfrac{n\psiu}{2} + m\psia -\log{n} = \omega(1)
\]
is satisfied when condition~\eqref{eq:lem34} 
\[
np_{11}+m \psia -\log{n} = \omega(1)
\]
is satisfied. By Theorem~\ref{Thm:achievability-general} , exact alignment is achievable w.h.p. 

Now suppose $\omega(\frac{1}{n}) \le p_{11} \le \Theta(\frac{\log n}{n})$. 
Note that the number of edges in the intersection graph of $G_1$ and $G_2$ has the following equivalent representation
\[
\Density = \muu_{11}(G_1,G_2)=\sum_{e \in \mathcal{E}_{\mathrm{u}}} \mathbb{1}\{G_1(e) =1, G_2(e) =1\}.
\]
Then, $\Density \sim \Bin(\vpin,p_{11})$ and $\E[\Density] = \vpin p_{11} = \binom{n}{2} p_{11} = \omega(n)$.
By the Chebyshev's inequality, for any constant $\epsilon > 0$,
\begin{align*}
    \prob(|\Density -\E[\Density]|\geq \epsilon \E[\Density]) \leq \mspace{-5mu} \frac{\var(\Density)}{\epsilon^2 \E[\Density]^2}\mspace{-5mu}=\mspace{-5mu}\frac{1-p_{11}}{\epsilon^2}\frac{1}{\E[\Density]}= o\left(\frac{1}{n}\right).
\end{align*}
In the following, we upper bound the probability of error by discussing two cases: when $\Density\leq (1+\epsilon) \E[\Density]$ and when $\Density>(1+\epsilon) \E[\Density]$. We have
% \sout{Therefore, using law of total probability, we have}
\begin{align}
    & \prob\left(\exists \pi \in \Sn  \setminus\{\pi_\mathrm{id}\}, \delta_{\pi}(G_1,G_2) \leq 0\right)\nonumber\\
   &= \sum_{\density=0}^{\vpin} \prob \left(\exists \pi \in \Sn \mspace{-5mu} \setminus \mspace{-5mu} \{\pi_\mathrm{id}\}, \delta_{\pi}(G_1,G_2) \leq 0 | \Density = \density \right) \prob(\Density =\density) \nonumber\\
   &\le \mspace{-25mu}\sum_{\substack{ {\density}\leq (1+\epsilon) \E[\Density] }}\mspace{-30mu}\prob \left(\exists \pi \in \Sn \mspace{-5mu} \setminus \mspace{-5mu} \{\pi_\mathrm{id}\}, \delta_{\pi}(G_1,G_2) \leq 0 | \Density =\density \right) \prob(\Density =\density)
   \nonumber\\
   &\qquad \quad + \prob\left(|\Density -\E[\Density]| > \epsilon \E[\Density]\right)\nonumber\\
   & = \mspace{-25mu}\sum_{{\density}\leq (1+\epsilon) \E[\Density]}\mspace{-30mu}\prob \left(\exists \pi \in \Sn \mspace{-5mu} \setminus \mspace{-5mu} \{\pi_\mathrm{id}\}, \delta_{\pi}(G_1,G_2) \leq 0 | \Density =\density \right) \prob(\Density =\density)
   \nonumber\\
   \label{eqLem3_1}
   &\qquad \quad + o(1)\\
   \label{eqLem3_2}
   & \stackrel{\text{Lemma}~\ref{Lemma:approximated union bound}}{\le} \sum_{{\density}\leq (1+\epsilon) \E[\Density]} 3n^2   z_6^2 \prob(\Density =\density) +o(1)\\
   & = \sum_{\density \leq (1+\epsilon) \E[\Density]} 3n^2   z_6^2 {\vpin \choose \density} p_{11}^{\density} (1-p_{11})^{\vpin - \density} +o(1) \nonumber\\
   &= 3n^2 (1-2\psi_{\mathrm{a}})^m  \sum_{\density=0}^{\vpin} {\vpin \choose \density} p_{11}^{\density} e^{-\frac{4\density}{n}} (1-p_{11})^{\vpin - \density}  +o(1) \nonumber \\
   \label{eqLem3_3}
   &= 3n^2 (1-2\psi_{\mathrm{a}})^m \left(p_{11}e^{-\frac{4}{n}}+1 - p_{11}\right)^{\vpin} + o(1)\\
   \label{eqLem3_4}
   &\leq  3n^2 (1-2\psi_{\mathrm{a}})^m \left(1-\tfrac{4}{n}p_{11}\right)^{\vpin}+o(1).
\end{align}
Here \eqref{eqLem3_1} follows from the Chebyshev's inequality above. In~\eqref{eqLem3_2}, $z_6 = \exp \{ -\tfrac{2r}{n} + \tfrac{m}{2} \log{(1-2\psia)} +O(1)\}$. This step will be justified by Lemma~\ref{Lemma:approximated union bound}, which is the major technical step in establishing the error bound. 
To apply Lemma~\ref{Lemma:approximated union bound}, we need the conditions~\eqref{eq:lem31}~\eqref{eq:lem32}~\eqref{eq:lem33} and $\density = O(\E[\Density]) = O(n \log n)$  to hold and we will explain the reason in the proof of Lemma~\ref{Lemma:approximated union bound}. Equation \eqref{eqLem3_3} follows from the binomial formula and \eqref{eqLem3_4} follows from the inequality $e^x-1 \leq x$. Taking the negative logarithm of the first term in \eqref{eqLem3_4}, we have
\begin{align}
    &-\log{\left(3n^2 (1-2\psi_{\mathrm{a}})^m \left(1-\tfrac{4}{n}p_{11}\right)^{\vpin} \right)} \nonumber\\
    &= -2\log{n} -m\log{(1-2\psia)} - \vpin \log{\left(1-\tfrac{4p_{11}}{n}\right)}+O(1)\nonumber\\
    \label{eqLem3_5}
    &\geq -2\log{n} +2m\psia +\vpin \frac{4p_{11}}{n}+O(1)\\
    \label{eqLem3_6}
    &= -2\log{n} +2m\psia +2np_{11}+O(1)\\
    \label{eqLem3_7}
    &= \omega(1).
\end{align}
Here, we have \eqref{eqLem3_5} follows from the inequality $\log{(1+x)} \leq x $ for $x>-1$. 
We get equation \eqref{eqLem3_6} by plugging in $\vpin = {n \choose 2}$. Equation \eqref{eqLem3_7} follows from the assumption \eqref{eq:lem34} in Theorem~\ref{Thm:achievability-sparse}. Therefore, we have \eqref{eqLem3_4} converges to 0 and so does the error probability. 
\end{proof}

\begin{lemma}
\label{Lemma:approximated union bound} 
Let $(G_1,G_2) \sim \mathcal{G}(n,\bm{p};m,\bm{q})$ and $\Density=\sum_{e \in \mathcal{E}_{\mathrm{u}}} \mathbb{1}\{G_1(e) =1, G_2(e) =1\}$.
If $\bm{p}$ satisfies constraints \eqref{eq:lem31} \eqref{eq:lem32} \eqref{eq:lem33}, and $\density = O(n \log n)$, then
\begin{equation*}
    \prob(\exists \pi \in \Sn \setminus \{\pi_\mathrm{id}\}, \delta_\pi(G_1,G_2)  \leq 0\mid \Density = \density) \le 3n^2   z_6^2,
\end{equation*}
where $z_6 = \exp \{ -\frac{2r}{n} +\tfrac{m}{2} \log{(1-2\psia)} +O(1)\}$.
\end{lemma}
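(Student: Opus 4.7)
The plan is to adapt the generating function machinery developed in Section~\ref{sec:proof-achievability} to the conditional setting, and pair it with a union bound that groups permutations by their number of non-fixed points. The target is a per-permutation estimate $\prob(\delta_\pi(G_1,G_2) \le 0 \mid \Density = \density) \le C z_6^{\tilde{n}}$ for some absolute constant $C$ whenever $\pi \in \SnTn$. Combined with $|\SnTn| \le n^{\tilde{n}}$, this yields
\[
\prob(\exists \pi \neq \pi_{\mathrm{id}},\ \delta_\pi \le 0 \mid \Density = \density) \le C \sum_{\tilde{n}=2}^{n} (n z_6)^{\tilde{n}} \le 3 n^2 z_6^2,
\]
where the last step uses the geometric series when $n z_6$ is bounded away from $1$ and is otherwise trivial since the desired right-hand side already exceeds $1$.

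To obtain the per-permutation bound I would introduce an auxiliary formal variable $u$ that marks the statistic $\muu_{11}$ (which equals $\Density$), and work with the two-variable generating function $\mathcal{B}(\bm{p}, \bm{q}, z, u) \triangleq \mathcal{A}(\bm{p}', \bm{q}, z)$, where $\bm{p}'$ is obtained from $\bm{p}$ by replacing $p_{11}$ with $u p_{11}$. By construction $\mathcal{B}(\bm{p}, \bm{q}, z, u) = \E[z^{\delta_\pi} u^{\Density}]$, so the coefficients in either variable are non-negative and Fact~\ref{Fact:ogf coefficient} applies. Applying it first in $z$ (with $z \in (0,1]$) and then in $u$ (with $u > 0$) gives
\[
\prob(\delta_\pi \le 0,\ \Density = \density) \le u^{-\density}\, \mathcal{B}(\bm{p}, \bm{q}, z, u).
\]
I would then invoke Facts~\ref{Fact:ogf cycle decomposition} and~\ref{Fact:l and 2 cycle} to factorize $\mathcal{B}$ across orbits, fix $z = e^{-1/4}$ so that the user-attribute block collapses to $(1-2\psia)^{\tilde{n} m/2}$ exactly as in Lemma~\ref{Lemma:P(error of pi)}, and reduce the user-user block to the product of $\mathcal{A}_1(\bm{p}', e^{-1/4}) = 1 + (u-1)p_{11}$ and $\mathcal{A}_2(\bm{p}', e^{-1/4})$ raised to the orbit-count exponents from the proof of Lemma~\ref{Lemma:P(error of pi)}. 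Dividing by $\prob(\Density = \density) = \binom{\vpin}{\density} p_{11}^\density (1-p_{11})^{\vpin-\density}$ then yields the conditional bound.

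The main obstacle is the joint optimization over $u$: one must choose $u$ so that the normalized binomial factor
\[
\frac{u^{-\density}\,(1 + (u-1)p_{11})^{t_1^u}}{\binom{\vpin}{\density} p_{11}^\density (1-p_{11})^{\vpin - \density}}
\]
stays $O(1)$ while the residual $\mathcal{A}_2(\bm{p}', e^{-1/4})^{(\vpin - t_1^u)/2}$ delivers the $\exp(-2\density/n)$ per non-fixed user vertex promised by $z_6$. The saddle point of $u^{-\density}\E[u^\Density]$ is $u^\ast = \density(1-p_{11})/(p_{11}(\vpin - \density))$, which lies near $1$ because $\density = O(\vpin p_{11})$, and Stirling's formula then turns the first factor into $O(1)$. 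The three sparsity hypotheses enter in controlling the second factor: \eqref{eq:lem31} provides the binomial concentration scale $\sqrt{\vpin p_{11}(1-p_{11})}$ needed by Stirling, \eqref{eq:lem32} ensures $p_{10}+p_{01}$ contributes only lower-order perturbations to $\mathcal{A}_2$, and most crucially \eqref{eq:lem33} drives the cross term $2 p_{10} p_{01}(z^{-2w_1}-1)$ in $\mathcal{A}_2(\bm{p}', e^{-1/4})$ to strictly lower order than the principal expression $(1+(u-1)p_{11})^2 - 2 u p_{11} p_{00}$, so that a Taylor expansion about $u^\ast$ yields the target per-orbit factor. Using $\vpin - t_1^u = \Theta(\tilde{n} n)$ and $\density = O(n \log n)$ then converts the per-orbit gain into the per-permutation gain $z_6^{\tilde{n}}$, and the geometric summation from the first paragraph completes the proof.
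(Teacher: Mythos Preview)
Your plan has a genuine gap at the step ``Stirling's formula then turns the first factor into $O(1)$.'' At the saddle point $u^\ast$ the ratio of the Chernoff bound to the exact point mass is
\[
\frac{(u^\ast)^{-\density}\bigl(1+(u^\ast-1)p_{11}\bigr)^{\vpin}}{\binom{\vpin}{\density}p_{11}^{\density}(1-p_{11})^{\vpin-\density}}
\;=\;\sqrt{\tfrac{2\pi \density(\vpin-\density)}{\vpin}}\,(1+o(1))\;=\;\Theta\bigl(\sqrt{\density}\,\bigr),
\]
so with $\density=\Theta(n\log n)$ this is $\Theta(\sqrt{n\log n})$, not $O(1)$. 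This prefactor does not scale with $\tilde n$ and therefore cannot be absorbed into the $e^{O(\tilde n)}$ slack of $z_6^{\tilde n}$; it survives to the final sum and degrades the bound to $\Theta(\sqrt{n\log n})\,n^2 z_6^2$, which downstream shifts the threshold in Lemma~\ref{Lemma:achievability-sparse} from $\log n$ to $\tfrac{5}{4}\log n$. A related inconsistency: your displayed ``normalized binomial factor'' carries exponent $t_1^{\mathrm u}$, not $\vpin$, so the missing $(1+(u-1)p_{11})^{\nontrivpin}$ is still sitting inside $\mathcal A_2(\bm p',e^{-1/4})^{\nontrivpin/2}$, and that residual then does not reduce to the clean $\exp(-2\density/n)$ per unit of $\tilde n$ you claim.

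The paper sidesteps the $\sqrt{\density}$ loss by conditioning first on $\nonDensity$, the number of $(1,1)$ user--user pairs among the \emph{moving} vertex pairs only. Because $\delta_\pi$ depends only on moving pairs and $\nonDensity$ is independent of $\Density-\nonDensity$, one has $\prob(\delta_\pi\le 0\mid \Density=\density)=\E\bigl[\prob(\delta_\pi\le 0\mid\nonDensity)\,\big|\,\Density=\density\bigr]$. The coefficient--extraction slack now sits against $\prob(\nonDensity=\nondensity)$ for the much smaller $\nondensity$, and the paper's choice of $x_{11}$ (deliberately \emph{not} the saddle point, see~\eqref{x_11}) buys an aggressive factor $(1/\log n)^{\nondensity}$ that swallows any polynomial loss in $\nondensity$. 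The passage back to $\Density=\density$ goes through the hypergeometric law of $\nonDensity\mid\Density$ together with a dedicated MGF bound imported from \cite{Cul-Kiy-exact2017}, and that bound---not an optimization over $u$---is where the $-2\density/n$ exponent in $z_6$ actually comes from.
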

\begin{proof} 

We will establish the above upper bound in three steps.
{We denote the set of vertex pairs that are \emph{moving} under permutation $\pi^{\mathcal{E}}$ as
$\mathcal{E}_\mathrm{m}= \{e \in \mathcal{E}: \pi^\mathcal{E}(e) \neq e$\}. Let 
\[
\nonDensity = \sum_{e \in \mathcal{E}_\mathrm{m} \cap \mathcal{E}_\mathrm{u}}\mathbb{1}\{G_1(e) =1, G_2(e) =1\}
\]
represent the number of co-occurred user-user edges in $\mathcal{E}_\mathrm{m}$ of $G_1 \wedge G_2$.} In Step 1, we apply the method of generating functions to get an upper bound on $\prob (\delta_\pi(G_1,G_2)  \leq 0\mid \nonDensity = \nondensity)$.
The reason for conditioning on $\nonDensity$ first is that the corresponding generating function only involves cycles of length $l\geq 2$ and its upper bound is easier to derive compared with the probability conditioned on $\Density$.
In Step 2, we upper bound $\prob (\delta_\pi(G_1,G_2)  \leq 0\mid \Density = \density)$ using result from Step 1 and properties of the Hypergeometric distribution. 
In Step 3, we upper bound $\prob(\exists \pi \in \Sn \setminus \{\pi_\mathrm{id}\}, \delta_\pi(G_1,G_2)  \leq 0\mid \Density = \density)$ using the truncated union bound.

\vspace{.5em}
\textbf{\emph{Step 1.}} We prove that for any $\pi \in \SnTn$, $\nondensity = O(\frac{\Tilde{t ^{\mathrm{u}}}\log{n}}{n})$, and $z_3 = (1-2\psi_{\mathrm{a}})^\frac{m}{2}$, 
\begin{equation}
\label{Eq: P(delta|nonDensity)}
   \prob\left(\delta_{\pi}(G_1,G_2) \leq 0\mid \nonDensity=\nondensity \right) \leq z_3 ^{\Tilde{n}} z_4^{\nondensity} z_5^{\Tilde{n}}
\end{equation}
for some $z_4 = O(\frac{1}{\log{n}})$ and some $z_5=O(1)$.

\vspace{0.5em}
{For the induced subgraph pair on $\mathcal{E}_\mathrm{m} \times \mathcal{E}_\mathrm{m}$, define the generating function as}%
\begin{align}
\label{Eq:def ogf Tilde-A}
    \Tilde{\mathcal{A}} (\bm{x},\bm{y},z) 
    &= \sum_{g\in \{0,1\}^{\mathcal{E} _{\mathrm{m}}}}\sum_{h \in \{0,1\}^{\mathcal{E} _{\mathrm{m}}}} z^{\delta_{\pi}(g,h)}\bm{x}^{\bm{{\muu}}(g,h)}\bm{y}^{\bm{{\mua}}(g,h)}.
\end{align}
Recall for $g,h \in \{0,1\}^{\mathcal{E}_\mathrm{m}}$, the expression for the extended  $\delta_\pi(g,h)$, $\bm{{\muu}}(g,h)$ and $\bm{{\mua}}(g,h)$ in \eqref{Eq:extend delta_pi}, \eqref{Eq:extend muu} and \eqref{Eq:extend mua}. We have
\begin{align*}
    \delta_\pi(g,h) &= w_1 \mspace{-20mu} \sum_{e \in \mathcal{E}_\mathrm{m} \cap \mathcal{E}_\mathrm{u}} \mspace{-20mu} \left(\mathbb{1}\{g(e) \neq h(\pi^{\mathcal{E}}(e))\} -\mathbb{1}\{g(e) \neq h(e)\} \right)\nonumber\\
    & \mspace{-50mu} +w_2 \sum_{e \in \mathcal{E}_\mathrm{m} \cap \mathcal{E}_\mathrm{a}} \left(\mathbb{1}\{g(e) \neq h(\pi^{\mathcal{E}}(e))\} -\mathbb{1}\{g(e) \neq h(e)\} \right).
\end{align*}
For the $2 \times 2$ matrices $\bm{{\muu}}(g,h)$ and $\bm{{\mua}}(g,h)$, their entries $\muu_{i,j}$ and $\mua_{i,j}$ are 
\begin{align*}
    \muu_{ij} &= \muu_{ij}(g,h) = \sum_{e \in \mathcal{E}_\mathrm{m} \cap \mathcal{E}_\mathrm{u}} \mathbb{1}\{ g(e)=i, h(e)=j\},\\
    \mua_{ij} &= \mua_{ij}(g,h) = \sum_{e \in \mathcal{E}_\mathrm{m}\cap \mathcal{E}_\mathrm{a}} \mathbb{1}\{ g(e)=i, h(e)=j\}.
\end{align*}
Moreover, according to the decomposition of generating function in Fact~\ref{Fact:ogf cycle decomposition} and using the fact that $\mathcal{E}_\mathrm{m}$ only contains orbits of size larger than 1, we obtain
\begin{align*}
    \Tilde{\mathcal{A}} (\bm{x},\bm{y},z)  &= \prod_{l \geq2} \mathcal{A}_{l}(\bm{x},z)^{t_l ^{\mathrm{u}}} \prod_{l \geq 2} \mathcal{A}_{l}(\bm{y},z)^{t_l ^{\mathrm{a}}}.
\end{align*}
where $t_l ^{\mathrm{u}}$ is the number of user-user orbits of size $l$ and $t_l ^{\mathrm{a}}$ is the number of user-attribute orbits of size $l$.

Now, by setting 
\begin{align*}
\bm{x} =\bm{x_{11}} \triangleq \begin{pmatrix}
    p_{00}& p_{01}  \\
    p_{10}& x_{11}p_{11}
\end{pmatrix}
\end{align*}
and $\bm{y} = \bm{q}$, the generating function $\Tilde{\mathcal{A}}(\bm{x_{11}},\bm{q},z)$ contains only two formal variables $x_{11}$ and $z$. Recall the expression of $\Tilde{\mathcal{A}}$ in \eqref{Eq:def ogf Tilde-A}.
For each $g,h \in \{0,1\}^{\mathcal{E}_\mathrm{m}}$, the term in the summation of  $\Tilde{\mathcal{A}} (\bm{x_{11}},\bm{q},z)$ can be written as 
\begin{align*}
    &z^{\delta_\pi(g,h)}\bm{x_{11}}^{\bm{\muu}(g,h)} \bm{q}^{\bm{\mua}(g,h)}\\
    &= z^{\delta_\pi(g,h)} x_{11} ^{\muu_{11}(g,h)} \bm{p}^{\bm{\muu}(g,h)} \bm{q}^{\bm{\mua}(g,h)} \\
    &= \prob((G_1^{\mathcal{E_\mathrm{m}}},G_2^{\mathcal{E_\mathrm{m}}}) = (g,h)) \,\ z^{\delta_\pi(g,h)}  x_{11} ^{\muu_{11}(g,h)},
\end{align*}
where we use $G_1 ^{\mathcal{E_\mathrm{m}}}$ to denote the component of $G_1$ that only concerns the vertex pair set  $\mathcal{E}_\mathrm{m}$ and thus the support of $G_1 ^{\mathcal{E_\mathrm{m}}}$ is $\{0,1\}^{\mathcal{E_\mathrm{m}}}$.
The event $\{(G_1 ^{\mathcal{E_\mathrm{m}}}, G_2 ^{\mathcal{E_\mathrm{m}}}) = (g,h)\}$ is a collection of attributed graph pairs $(g_1,g_2)$ each of which have exactly the same edges in the vertex pair set $\mathcal{E}_\mathrm{m}$ as $(g,h)$.

Notice that the fixed vertex pairs $\mathcal{E} \setminus \mathcal{E}^\mathrm{m}$ do not have a influence on $\delta(G_1,G_2)$. The event $\{\nonDensity = \nondensity, \delta_\pi(G_1,G_2) =d \}$ is a collection of attributed graph pairs $(g_1,g_2)$ such that  $\muu_{11}(g_1^{\mathcal{E_\mathrm{m}}},g_2^{\mathcal{E_\mathrm{m}}}) = \nondensity$ and $\delta_\pi(g_1^{\mathcal{E_\mathrm{m}}},g_2^{\mathcal{E_\mathrm{m}}}) =d$. 
Then by summing over all possible $g,h \in \{0,1\}^{\mathcal{E}_\mathrm{m}}$ , we have
\[
\prob(\delta_\pi(G_1,G_2)=d,\nonDensity=\nondensity) =[z^d x_{11}^{\nondensity}] \Tilde{\mathcal{A}} (\bm{x_{11}},\bm{y},z).
\]
Thus, we can write
\begin{align}
    &\prob \left(\delta_\pi (G_1, G_2)\leq 0, \nonDensity =\nondensity\right) \nonumber\\
    &= \sum_{d\leq 0}[z^d x_{11}^{\nondensity}] \Tilde{\mathcal{A}}(\bm{x_{11}},\bm{q},z) \nonumber\\
    &= \sum_{d\leq 0}[z^d x_{11}^{\nondensity}] \prod_{l\geq2} \mathcal{A}_l(\bm{x_{11}},z )^{t_l ^{\mathrm{u}}} \mathcal{A}_l(\bm{q},z  )^{t_l ^{\mathrm{a}}}\nonumber\\
    \label{Lem7_0}
    & \leq (x_{11})^{-\nondensity} \sum_{d\leq 0}[z^d] \prod_{l\geq2} \mathcal{A}_l(\bm{x_{11}},z )^{t_l ^{\mathrm{u}}} \mathcal{A}_l(\bm{q},z)^{t_l ^{\mathrm{a}}}\\
    \label{Lem7_1}
    &\leq  (x_{11})^{-\nondensity}\prod_{l\geq2} \mathcal{A}_l(\bm{x_{11}},z )^{t_l^{\mathrm{u}}} \mathcal{A}_l(\bm{q},z  )^{t_l^{\mathrm{a}}}\\
    \label{Lem7_2}
    &\leq  (x_{11})^{-\nondensity} \mathcal{A}_2(\bm{x_{11}},z )^{\frac{\nontrivpin}{2}} \mathcal{A}_2(\bm{q},z  )^{\frac{m\Tilde{n}}{2}}.
\end{align}
In \eqref{Lem7_0}, we set $x_{11} >0$ and the inequality follows from \eqref{eq:ogf coefficient} in Fact~\ref{Fact:ogf coefficient}.
In \eqref{Lem7_1}, we set $z\in (0,1)$ and this inequality follows from \eqref{eq:ogf sum-coefficient} Fact~\ref{Fact:ogf coefficient}.
Inequality in~\eqref{Lem7_2} follows from Fact~\ref{Fact:l and 2 cycle}, where 
\[
\nontrivpin = \sum_{l\geq 2} \vpin_l l = |\mathcal{E}_\mathrm{m} \cap \mathcal{E}_\mathrm{u}|
\]
is the number of moving user-user pairs  and 
\[
\nontrivpout =\sum_{l\geq 2} \vpout_l l = |\mathcal{E}_\mathrm{m} \cap \mathcal{E}_\mathrm{a}| = \Tilde{n}m
\]
is the number of moving user-attribute pairs. 

Next, let us lower bound $\prob(\nonDensity = \nondensity)$. Note that $\nonDensity \sim \mathrm{Bin}(\nontrivpin, p_{11})$. We have
\begin{align}
    &\prob(\nonDensity = \nondensity) = {\nontrivpin \choose \nondensity} p_{11} ^{\nondensity} (1-p_{11})^{\nontrivpin -\nondensity}\nonumber\\
    &\geq \left(\frac{\nontrivpin p_{11}}{\nondensity(1-p_{11})}\right)^{\nondensity}(1-p_{11})^{\nontrivpin},\label{eq:rtilde}
\end{align}
where equation~\eqref{eq:rtilde} follows since $\binom{n}{k} \ge (n/k)^k$ for any nonnegative integers $k \le n$.

Now we combine the bounds in~\eqref{Lem7_2}~and~\eqref{eq:rtilde} to upper bound $\prob \left(\delta_{\pi}(G_1,G_2) \leq 0\mid \nonDensity=\nondensity \right)$. Define $p_{ij}'\triangleq\frac{p_{ij}}{1-p_{11}}$ for $i,j \in \{0,1\}$. We have 
\begin{align}
    % \label{Eq:3 terms}
    &\prob \left(\delta_{\pi}(G_1,G_2) \leq 0\mid \nonDensity=\nondensity \right) =\frac{\prob (\delta_\pi (G_1, G_2)\leq 0, \nonDensity=\Tilde{r})}{\prob(\nonDensity = \nondensity)} \nonumber\\
    \label{eq:step1}
    & \leq \mathcal{A}_2(\bm{q},z  )^{\frac{m\Tilde{n}}{2}}
    \left(\frac{\nondensity}{x_{11} p'_{11} \nontrivpin}\right)^{\nondensity} 
    \left(\frac{\mathcal{A}_2(\bm{x_{11}} , z )}{(1-p_{11})^2}\right)^{\nontrivpin /2}.
\end{align}
For the first term, similar to what we did in \eqref{Eq:def-psia}, we set $z = e^{-1/4}$, which satisfies the condition $z \in (0,1)$ in Fact~\ref{Fact:ogf coefficient}. Recall the expression of $\mathcal{A}_2(\bm{y},z)$ in \eqref{Eq:A2(y)}, we have
\begin{align}
    &\mathcal{A}_2(\bm{q},z  )^{\frac{m\Tilde{n}}{2}}\nonumber \\
    \label{Eq:Step1-add-1}
    &=\left(1+ 2q_{00}q_{11}(z^{2w_2}-1)+2q_{10}q_{01}(z^{-2w_2}+1)\right)^{\frac{m\Tilde{n}}{2}} \\
    \label{Eq:Step1-add-2}
    &= (1-2q_{00}q_{11}-2q_{10}q_{01} + 4\sqrt{q_{00}q_{11}q_{10}q_{01}})^{\frac{m\Tilde{n}}{2}}\\
    &= \left(1-2(\sqrt{q_{11}q_{00}}-\sqrt{q_{10}q_{01}})^2\right)^{\frac{m\Tilde{n}}{2}}\nonumber\\
    &= (1-2\psi_{\mathrm{a}})^{\frac{m\Tilde{n}}{2}}\nonumber\\
    \label{Eq:3terms}
    &\triangleq z_3 ^{\Tilde{n}}.
\end{align}
where \eqref{Eq:Step1-add-1} follows since $q_{00}+q_{01}+q_{10}+q_{11} = 1$ and \eqref{Eq:Step1-add-2} follows by plugging in $z = e^{-1/4}$ and $w_2 = \log{\left(\frac{q_{11}q_{00}}{q_{10}q_{01}}\right)}$.
For the second term in~\eqref{eq:step1}, we set
% \begin{align}
% \label{x_11}
%     x_{11} = \frac{1}{p_{11}'} \frac{\nondensity \log{n} + \E[\nonDensity]}{\nontrivpin} = \frac{\nondensity \log{n} + p_{11} \nontrivpin}{p_{11}'\nontrivpin},
% \end{align}
\begin{align}
\label{x_11}
    x_{11} =  \frac{\nondensity \log{n} + p_{11} \nontrivpin}{p_{11}'\nontrivpin},
\end{align}
which is positive. Then, we have
\begin{align}
\label{Eq:3terms-1}
    \left(\frac{\nondensity}{x_{11} p'_{11} \nontrivpin}\right)^{\nondensity} =\left(\frac{\nondensity}{\nondensity\log{n} + p_{11} \nontrivpin}\right)^{\nondensity} \leq \left( \frac{1}{\log{n}}\right)^{\nondensity}.
\end{align}
For the third term in~\eqref{eq:step1}, using equation \eqref{Eq:A2(x)} with $z = e^{-1/4}$, we have 
\begin{align*}
    &\frac{\mathcal{A}_2(\bm{x_{11}} , z )}{(1-p_{11})^2} \\
    &= \frac{(1-p_{11}+x_{11} p_{11})^2}{(1-p_{11})^2} +\frac{ 2x_{11} p_{11}p_{00} (\sqrt{\frac{p_{10}p_{01}}{p_{11}p_{00}}}-1)}{(1-p_{11})^2} \nonumber \\
    &\quad +\frac{2p_{10}p_{01} (\sqrt{\frac{p_{00}p_{11}}{p_{10}p_{01}}}-1)}{(1-p_{11})^2}\\
    &= (1 +p_{11}'x_{11})^2 -2x_{11}p_{11}' p_{00}' -2p_{10}' p_{01}'\\
    &\quad + 2(x_{11}+1) \sqrt{p_{11}' p_{00}' p_{10}' p_{01}'}\\
    &\leq  1 +(p_{11}'x_{11})^2 + 2p_{11}'x_{11}(p_{10}'+p_{01}') \\
    &\quad +2(x_{11}+1) \sqrt{p_{11}' p_{00}' p_{10}' p_{01}'},
\end{align*}
where the last inequality follows since $1- p_{00}' = p_{10}' +p_{01}'$ and $-2p_{10}' p_{01}' \le 0$. Taking logarithm of $\left(\frac{\mathcal{A}_2(\bm{x_{11}}, z )}{(1-p_{11})^2}\right)^{\nontrivpin/2}$, we get
\begin{align}
% \label{Lem 8_1}
     &\frac{\nontrivpin}{2} \log{\left(\frac{\mathcal{A}_2(\bm{x_{11}}, z )}{(1-p_{11})^2}\right)} \\
    &\leq \frac{1}{2} \nontrivpin (p_{11}'x_{11})^2 + \nontrivpin p_{11}'x_{11}(p_{10}'+p_{01}') \nonumber\\
    \label{2nd term in 3}
    &\quad +\nontrivpin(x_{11}+1) \sqrt{p_{11}' p_{00}' p_{10}' p_{01}'},
\end{align}
where~\eqref{2nd term in 3} follows from the inequality $\log(1+x) \leq x$.
Let us now bound the three terms in \eqref{2nd term in 3}.
\begin{itemize} \itemsep .5em
    \item For the first term, we have
% under the conditions on number of non-trivial user-user pairs $\nontrivpin \leq n\Tilde{n}$, assumption $\nondensity \leq O(\frac{\nontrivpin\log{n}}{n})$ and sparsity constrain on $p_{11}$ (\ref{Eq sparsity constrains p11}), by taking $x_{11}$ to be (\ref{x_11}) 
\begin{align}
    &\nontrivpin(p_{11}'x_{11})^2  \nonumber\\
    &=\nontrivpin \left(\frac{\nondensity \log{n}}{\nontrivpin} +p_{11}\right)^2 \nonumber\\
    % \label{Lem7_3}
    &= \frac{\nondensity^2 (\log{n})^2}{\nontrivpin} + 2\nondensity (\log{n})p_{11} + \nontrivpin p_{11}^2\nonumber\\ 
    \label{Lem7_4}
    &=  O\left(\frac{\nondensity(\log{n})^3}{n} \right) + 2\nondensity (\log{n})p_{11} + \nontrivpin p_{11}^2\\
    \label{Lem7_5}
    &=  O\left(\frac{\nondensity(\log{n})^3}{n}+ \frac{\nondensity(\log{n})^2}{n} +\frac{\Tilde{n}(\log{n})^2}{n} \right) \\
    &=o(\nondensity+\Tilde{n}), \nonumber
\end{align}
where \eqref{Lem7_4} follows from the assumption  $\nondensity = O(\frac{\nontrivpin\log{n}}{n})$ in~\eqref{Eq: P(delta|nonDensity)} and~\eqref{Lem7_5} follows since  $p_{11} = O(\frac{\log{n}}{n})$ and $\nontrivpin \leq \Tilde{n}n$. 
\item For the second term in (\ref{2nd term in 3}), we have
\begin{align}
    &\nontrivpin p_{11}'x_{11}(p_{10}'+p_{01}')\nonumber\\
    &= (\nondensity \log{n}+ \nontrivpin p_{11})(p_{10}'+p_{01}') \nonumber\\
    \label{Lem7_6}
    &\leq \frac{\nondensity (p_{10}+p_{01})\log{n} + \Tilde{n}np_{11}(p_{10}+p_{01})} {1-p_{11}}\\
    \label{Lem7_8}
    &= O(\nondensity +\Tilde{n}),
\end{align}
where \eqref{Lem7_6} follows 
from $\nontrivpin \leq \Tilde{n}n$ and
\eqref{Lem7_8} follows since $p_{01}+p_{10} = O(\frac{1}{\log{n}})$, $p_{11} = O(\frac{\log{n}}{n})$, and $1-p_{11} = \Theta(1)$.

\item For the third term in (\ref{2nd term in 3}), we have
\begin{align}
    &\nontrivpin(x_{11}+1) \sqrt{p_{11}' p_{00}' p_{10}' p_{01}'} \nonumber\\
    &= \nontrivpin \left(\frac{\nondensity \log{n} + p_{11}\nontrivpin}{p_{11}'\nontrivpin} +1\right)\sqrt{p_{11}' p_{00}' p_{10}' p_{01}'} \nonumber\\
    &= (\nondensity\log{n}+p_{11} \nontrivpin +p_{11}'\nontrivpin)p_{00}'\sqrt{\frac{p_{10}p_{01}}{p_{11}p_{00}}} \nonumber\\
    \label{Lem 7_9}
    &\leq (\nondensity\log{n}+p_{11}  n \tilde{n} +p_{11}' n \tilde{n})p_{00}'\sqrt{\frac{p_{10}p_{01}}{p_{11}p_{00}}} \\
    \label{Lem7_10}
    &=  o(\nondensity+ \Tilde{n}).
\end{align}
Here (\ref{Lem 7_9}) follows since $\nontrivpin \leq \Tilde{n}n$.
\eqref{Lem7_10} follows since $p_{11}' = O(p_{11}) = O \left(\frac{\log n}{n}\right)$,
$p_{00}' = O(1)$, and $\frac{p_{10}p_{01}}{p_{11}p_{00}} = O\left(\frac{1}{(\log{n})^3}\right)$. 
\end{itemize}
\vspace{.5em}
In summary, the third term of \eqref{eq:step1} is upper bounded as
\begin{equation}
\label{Eq:3terms-2}
\left(\frac{\mathcal{A}_2(\bm{x_{11}}, z )}{(1-p_{11})^2}\right)^{\nontrivpin/2} \le \exp\{O(\Tilde{r}+\Tilde{n})\}.
\end{equation}
Finally, combining~\eqref{Eq:3terms}~\eqref{Eq:3terms-1}~\eqref{Eq:3terms-2}, we have
\begin{align*}
     &\prob \left(\delta_\pi(G_1,G_2) \leq 0\mid \nonDensity = \nondensity\right) \\
     &\leq (1-2\psi_{\mathrm{a}})^\frac{m\Tilde{n}}{2}\left( \frac{1}{\log{n}}\right)^{\nondensity} \exp\{O(\nondensity+\Tilde{n})\}  \\
     &\leq (1-2\psi_{\mathrm{a}})^\frac{m\Tilde{n}}{2} \left(\frac{e^{O(1)}}{\log{n}}\right)^{\nondensity} \left(e^{O(1)}\right)^{\Tilde{n}} \\
     &= z_3^{\Tilde{n}} z_4^{\nondensity} z_5 ^{\Tilde{n}}
\end{align*}
for some $z_4 = O (\frac{1}{\log{n}})$ and $z_5 = O(1)$.
% \sout{where we define $ z_4 \triangleq O (\frac{1}{\log{n}})$ and $z_5 \triangleq O(1)$.}

\vspace{.5em}
\textbf{\emph{Step 2.}}
We will prove that for any $\pi \in \SnTn$ and $\density =O ( n \log{n})$, 
\begin{equation}
\label{Eq:step2}
\prob(\delta_\pi(G_1,G_2)\leq 0 \mid \Density =\density) \leq z_6^{\Tilde{n}}
\end{equation}
for some $z_6 = \exp\{-\frac{2\density}{n} + \frac{m}{2}\log(1-2\psi_{\mathrm{a}})+O(1)\}$.

\vspace{.5em}
In this step, we will compute $\prob(\delta_\pi(G_1,G_2) \leq 0 | \Density=\density)$ through $\prob(\delta_\pi(G_1,G_2) \leq 0 | \nonDensity=\nondensity)$, which involves using properties of a Hypergeometric distribution. 

Recall a Hypergeometric distribution, denoted as $\Hyp(n,N,K)$, is the probability distribution of the number of marked elements out of the $n$ elements we draw without replacement from a set of size $N$ with $K$ marked elements. Let $\Phi_\mathrm{Hyp}(z)$ be the probability generating function for  $\Hyp(n,N,K)$ and  $\Phi_\mathrm{Bin}(z)$ be the probability generating function for a binomial distribution $\Bin(n, \frac{K}{N})$. A few useful properties of the two distributions are as follows.
\begin{itemize}
    \item The mean of $\mathrm{Hyp}(n,N,K)$ is $nK/N$.
    \item  For all $n,N,K \in \N$ and $z >0$, we have 
        $\Phi_\mathrm{Hyp}(z) \leq \Phi_\mathrm{Bin}(z)$ \cite{V.Ch-tail1979}.
    \item 
        $\Phi_\mathrm{Bin}(z) = \left(1+\tfrac{K}{N}(z-1)\right)^n$.
\end{itemize}

In our problem, we are interested in the random variable $\nonDensity|\Density=\density$. We treat the set of moving user-user vertex pairs $\mathcal{E}_\mathrm{u} \cap \mathcal{E}_\mathrm{m}$ as a group of marked elements in $\mathcal{E}_\mathrm{u}$. 
From $\mathcal{E}_\mathrm{u}$, we consider drawing $\density$ vertex pairs and creating co-occurred edges for each chosen vertex pair. Along this line, the random variable  $\nonDensity| \Density=\density$, which is the number of co-occurred edges in $\mathcal{E}_\mathrm{m} \cap \mathcal{E}_\mathrm{u}$, represents the number of marked elements out of the $\density$ chosen elements and it follows a Hypergeometric distribution $\Hyp(\density, \vpin, \nontrivpin)$.
From this point and on, we always consider generating functions $\Phi_\mathrm{Hyp}(z)$ and $\Phi_\mathrm{Bin}(z)$ with parameters $n=\density$, $N= \vpin$, $K=\nontrivpin$.
Moreover, from \cite[Lemma~\rom{4}.5] {Cul-Kiy-exact2017}, we have the following upper bound on $\Phi_\mathrm{Hyp}(z)$ for any $z \in (0,1)$
    \begin{align} 
    \label{Eq:Hye gen upper bound}
        \Phi_\mathrm{Hyp}(z) \leq \exp\left \{ \tfrac{\density \Tilde{n}}{n}(-2+\tfrac{e}{n-1}+2e z)\right \}.
    \end{align}
Now, we are ready for proving \eqref{Eq:step2}. We first write 
\begin{align}
    &\prob(\delta_\pi(G_1,G_2) \leq 0 \mid \Density = \density)\nonumber\\
    &= \prob(\delta_\pi(G_1,G_2) \leq 0 ,\nonDensity \leq \nondensity^* \mid \Density =\density)\nonumber\\ 
    \label{Eq:step2-1}
    &+\prob(\delta_\pi(G_1,G_2)  \leq 0,\nonDensity > \nondensity^* \mid \Density =\density).
\end{align}
Here we set $\nondensity^*= C \E[\nonDensity\mid\Density =\density] =C \frac{\density\nontrivpin}{\vpin}$, where $C>0$ is some positive constant to be specified later. 
Note that $\vpin = \binom{n}{2}$ and $r =O(n\log n)$ from the assumption, then we have $\nondensity^* = O(\frac{\nontrivpin \log n}{n})$.

\begin{itemize} \itemsep .5em
    \item 
For the first term in~\eqref{Eq:step2-1}, we have
\begin{align}
    &\prob(\delta_\pi(G_1,G_2)  \leq 0,\nonDensity \leq \nondensity^* \mid \Density =\density) \nonumber\\
    \label{Lem 8_0}
    &= \sum_{\nondensity \leq \nondensity^*} \prob(\nonDensity = \nondensity \mid \Density = \density) \,\ \prob(\delta_\pi(G_1, G_2)  \leq 0 \mid \nonDensity = \nondensity) \\
    \label{Lem 8_11}
    &\leq \sum_{\nondensity \leq \nondensity^*}\prob(\nonDensity = \nondensity \mid \Density = \density) z_3^{\Tilde{n}}z_4^{\nondensity}z_5^{\Tilde{n}}\\
    &\leq  z_3^{\Tilde{n}}z_5^{\Tilde{n}} \sum_{\nondensity=0}^n\prob(\nonDensity = \nondensity \mid \Density = \density) z_4^{\nondensity} \nonumber\\
    \label{Lem 8_2}
    & = {z_3^{\Tilde{n}} z_5^{\Tilde{n}}  \Phi_{\mathrm{Hyp}}(z_4)}\\
    \label{Lem 8_3}
    & {\leq z_3^{\Tilde{n}} z_5^{\Tilde{n}}   \exp \left\{ \tfrac{\Tilde{n} \density}{n}(-2+\tfrac{e}{n-1}+2e z_4)\right\} }\\
    \label{Lem 8_4}
    & {= z_3^{\Tilde{n}} (e^{O(1)})^{ \Tilde{n}}  \exp \left\{- \tfrac{2\Tilde{n}  \density}{n} + \tfrac{e \Tilde{n}  \density}{n(n-1)} +O(\tfrac{1}{\log n}) \tfrac{\Tilde{n}  \density}{n} \right\} }\\
    \label{Lem 8_5}
    & {\leq z_3^{\Tilde{n}}  \exp \left \{\Tilde{n} \left(- \tfrac{2\density}{n} + O(1) \right)\right\} }
\end{align}
In \eqref{Lem 8_0}, we use the conditional independence of $\Density$ and $\delta_\pi(G_1,G_2)$ given $\nonDensity$, which can be proved as follows 
\begin{align}
    &\prob(\delta_{\pi}(G_1,G_2) \leq 0|\nonDensity=\nondensity, \Density=\density)\nonumber\\
    &=\frac{\prob(\delta_{\pi}(G_1,G_2) \leq 0, \nonDensity=\nondensity, \Density=\density)}{\prob(\nonDensity=\nondensity, \Density=\density)}\nonumber\\
    &= \frac{\prob(\delta_{\pi}(G_1,G_2) \leq 0, \nonDensity=\nondensity, \Density - \nonDensity=\density-\nondensity)}{\prob(\nonDensity=\nondensity,  \Density - \nonDensity=\density-\nondensity)}\nonumber\\
    &= \frac{\prob(\delta_{\pi}(G_1,G_2) \leq 0, \nonDensity=\nondensity) \prob(\Density - \nonDensity=\density-\nondensity)}{\prob(\nonDensity = \nondensity)\prob(\Density - \nonDensity=\density-\nondensity)}\label{Eq:indep}\\
    &= \prob((\delta_{\pi}(G_1,G_2) \leq 0 | \nonDensity = \nondensity),\nonumber
\end{align}
where \eqref{Eq:indep} follows from the fact that $\delta_\pi(G_1,G_2)$ and $\nonDensity$ are determined by $\mathcal{E}_\mathrm{m}$ while $\Density-\nonDensity$ is determined by those fixed vertex pairs.
In \eqref{Lem 8_11}, we have $\nondensity = O(\frac{\nontrivpin \log n}{n})$ and this inequality follows from \eqref{Eq: P(delta|nonDensity)} from Step 1. 
Equation \eqref{Lem 8_2} follows  from the definition of the probability generating function for $\Hyp(\density, \vpin, \Tilde{\vpin})$.
\eqref{Lem 8_3} follows form the conclusion about probability generating function of the hypergeometric distribution in \eqref{Eq:Hye gen upper bound} with $z_4 \in (0,1)$.
\eqref{Lem 8_4} is true since $z_4 = O \left(\frac{1}{\log n}\right)$ and $z_5 = O(1)$. 
\eqref{Lem 8_5} is true since $\density = O \left({n}{\log n}\right)$.

\item For the second term of \eqref{Eq:step2-1}, we have
\begin{align}
    &\prob(\delta_\pi(G_1,G_2)\leq 0,\nonDensity > \nondensity^* \mid \Density =\density)\nonumber\\ 
    &= \sum_{\nondensity > \nondensity^*}  \prob(\delta_\pi(G_1,G_2) \leq 0,\nonDensity = \nondensity \mid \Density =\density)\nonumber\\ 
    \label{Eq:Lem8_6}
    &= \sum_{\nondensity > \nondensity^*}  \prob(\delta_\pi(G_1, G_2) \leq 0 \mid \nonDensity = \nondensity)  \prob(\nonDensity = \nondensity \mid \Density = \density)\\
    &\leq \mspace{-5mu}\max_{0\le\nondensity\le n } \{\prob(\delta_\pi(G_1, G_2) \leq 0 | \nonDensity = \nondensity)\}  \prob(\nonDensity > \nondensity^* | \Density = \density).\nonumber
\end{align}

Here \eqref{Eq:Lem8_6} follows from the conditional independence of $\delta_{\pi}(G_1,G_2)$ and $\Density$ given $\nonDensity$.
To find this maximum probability, we consider the extreme case. 
Recall that $\delta_\pi = w_1(\Delta^\mathrm{u}(G_1,\pi(G_2))-\Delta^\mathrm{u}(G_1,G_2)) + w_2(\Delta^\mathrm{a}(G_1,\pi(G_2))-\Delta^\mathrm{a}(G_1,G_2))$. We have that $w_2(\Delta^\mathrm{a}(G_1,\pi(G_2))-\Delta^\mathrm{a}(G_1,G_2))$ is independent of $\nonDensity$. 
From the upper bound on generating function in \eqref{Lem7_2}, we consider $\pi^{\mathcal{E}}$ consisting of only 2-cycles.
Since $\Delta^\mathrm{u}(G_1,\pi(G_2))-\Delta^\mathrm{u}(G_1,G_2) > 0$ only if there exist user-user vertex pairs such that $(G_1(e), G_2(e)) = (1, 1)$ and $(G_1(\pi ^{\mathcal{E}}(e)) G_2(\pi ^{\mathcal{E}}(e))) = (0, 0)$, we have 
$ \Delta^\mathrm{u}(G_1,\pi(G_2))-\Delta^\mathrm{u}(G_1,G_2) \leq 0$ with probability $1$ given $\nonDensity = 0$.
Therefore, given $\nonDensity=0$ the probability that $\delta_\pi \leq 0$ is maximized.
We have
\begin{align}
    &\max_{0\le\nondensity\le n} \{\prob(\delta_\pi(G_1, G_2) \leq 0 \mid \nonDensity = \nondensity)\} \nonumber\\
    % \label{eq:add1}
    & \leq  \prob(\delta_\pi(G_1, G_2) \leq 0 \mid \nonDensity = 0)\nonumber\\
    \label{eq:add2}
    & \leq z_3^{\Tilde{n}}z_5^{\Tilde{n}},
    % & \ning{= \prob(\Delta^\mathrm{a}(G_1,\pi(G_2))-\Delta^\mathrm{a}(G_1,G_2) \leq 0| \nonDensity=\nontrivpin)}\\
    % \label{eq:add3}
    % & \ning{=\prob(\Delta^\mathrm{a}(G_1,\pi(G_2))-\Delta^\mathrm{a}(G_1,G_2) \leq 0)}\\
    % \label{eq:add4}
    % & \ning{= \sum_{d\leq0}[z^d] \prod_{l\geq1} \mathcal{A}_l(\bm{q},z)^{t_l^{\mathrm{a}}}}\\
    % \label{eq:add5}
    % & \ning{\leq \prod_{l\geq1} \mathcal{A}_l(\bm{q},z)^{t_l^{\mathrm{a}}}}\\
    % \label{eq:add6}
    % & \ning{\leq \mathcal{A}_2(\bm{q},z)^{m\Tilde{n}}}\\
    % \label{eq:add7}
    % & \ning{\leq (1-2\psia)^{\frac{m\Tilde{n}}{2}} = z_3^{\tilde{n}}}.
\end{align}
where \eqref{eq:add2} follows from \eqref{Eq: P(delta|nonDensity)} in Step 1 with $\nondensity=0$.
% For the equality \eqref{eq:add2}, recall the definition of $\delta_\pi$ in \eqref{Eq:def delta_pi} and notice that given $\nonDensity = \nontrivpin$, $ \Delta^\mathrm{u}(G_1,\pi(G_2))-\Delta^\mathrm{u}(G_1,G_2) =0$.
% \eqref{eq:add3} follows from the independence of $\Delta^\mathrm{a}$ and $\nonDensity$ since $\Delta^\mathrm{a}$ is a function on $\mathcal{E}_\mathrm{a}$ while $\nonDensity$ is a function on  $\mathcal{E}_\mathrm{u}$.
% \eqref{eq:add4} follows from the fact that $\prod_{l\geq1} \mathcal{A}_l(\bm{q},z)^{t_l^{\mathrm{a}}}$ is the probability generating function for $w_2 (\Delta^\mathrm{a}(G_1,\pi(G_2))-\Delta^\mathrm{a}(G_1,G_2)) $ with formal variable $z$.
% In \eqref{eq:add5}, we set $z\in (0,1)$ and the inequality follows from \eqref{eq:ogf coefficient} in Fact~\ref{Fact:ogf coefficient}.
% \eqref{eq:add6} follows from Fact~\ref{Fact:l and 2 cycle}.
% \eqref{eq:add7} follows from the upper bound on $\mathcal{A}_2(\bm{q},z)$ derived in \eqref{Eq:def-psiu}. 
Now we get
\begin{align}
    &\prob(\delta_\pi(G_1,G_2)\leq 0,\nonDensity > \nondensity^* \mid \Density =\density)\nonumber\\ 
    % &= \prob(\delta_\pi(G_1, G_2) \leq 0 \mid \nonDensity = \nontrivpin) \prob(\nonDensity > \nondensity^* \mid \Density = \density)\\
    & \leq z_3^{\Tilde{n}}z_5^{\Tilde{n}} \prob(\nonDensity > \nondensity^* \mid \Density = \density)\nonumber\\
    \label{Eq:Lem8_7}
    &= z_3^{\Tilde{n}}z_5^{\Tilde{n}}\sum_{i>\nondensity^*} [z^i] \Phi_\mathrm{Hyp}(z)\\
    \label{Eq:Lem8_7.2}
    & \leq  z_3^{\Tilde{n}} z_5^{\Tilde{n}} z^{-\nondensity^*} \Phi_{\Hyp}(z)\\
    \label{Eq:Lem8_8}
    & {\leq z_3^{\Tilde{n}}z_5^{\Tilde{n}} z^{-\nondensity^*} \Phi_\mathrm{Bin}(z)}\\
    \label{Eq:Lem8_9}
    & {= z_3^{\Tilde{n}}z_5^{\Tilde{n}} z^{-\nondensity^*} \left(1+\tfrac{\nontrivpin}{\vpin}(z-1)\right)^{\density}}\\
    \label{Eq:Lem8_10}
    & {\leq z_3^{\Tilde{n}}z_5^{\Tilde{n}} z^{-\nondensity^*} \exp \left \{ \tfrac{\density \nontrivpin}{\vpin}(z-1) \right\}}\\
    \label{Eq:Lem8_11}
    &{= z_3^{\Tilde{n}}z_5^{\Tilde{n}} \exp \left \{-\nondensity^* +\tfrac{\density \nontrivpin}{\vpin}(e-1) \right\}}\\
    \label{Eq:Lem8_12}
    &={z_3^{\Tilde{n}} z_5^{\Tilde{n}}\exp \left \{\tfrac{\density \nontrivpin}{\vpin}(-C-1+e) \right\}}\\
    \label{Eq:Lem8_13}
    &{\leq z_3^{\Tilde{n}}z_5^{\Tilde{n}} \exp \left\{\tfrac{\density \Tilde{n}(n-2)}{n(n-1)}(-C-1+e) \right\}}\\
    \label{Eq:Lem8_14}
    & {\leq z_3^{\Tilde{n}} \exp \left\{\Tilde{n} \left(\tfrac{\density }{n}(-C-1+e)+ O(1)\right) \right\}}\\
    \label{Eq:Lem8_15}
    & = o\left(z_3^{\Tilde{n}}  \exp \left \{\Tilde{n} \left(- \tfrac{2\density}{n} + O(1) \right)\right\}\right)
\end{align}
In \eqref{Eq:Lem8_7}, $\Phi_\mathrm{Hyp}(z)$ is a probability generating function for $\Hyp(\density,\vpin,\nontrivpin)$.
In \eqref{Eq:Lem8_7.2}, we set $z>1$ and the inequality follows from \eqref{eq:ogf sum-coefficient2} in Fact~\ref{Fact:ogf coefficient}.
In \eqref{Eq:Lem8_8}, $\Phi_\mathrm{Bin}(z)$ is a probability generating function for $\Bin(\density,\frac{\nontrivpin}{\vpin})$ and this inequality follows from the property of a Hypergeometric distribution. 
\eqref{Eq:Lem8_9} follows from the definition of $\Phi_\mathrm{Bin}(z)$. 
% $\Phi_\mathrm{Bin}(z) = \left(1+\frac{\nontrivpin}{\vpin}(z-1)\right)^{\density}$.
\eqref{Eq:Lem8_10} follows form the inequality $1+x \leq e^x$.
In \eqref{Eq:Lem8_11}, we set $z=e$.
In \eqref{Eq:Lem8_12}, we plug in $\nondensity^* = C\frac{\density \nontrivpin}{\vpin}$ where $C$ is larger than $(e-1)$.
In \eqref{Eq:Lem8_13}, we use the relation $\nontrivpin \geq \frac{\Tilde{n}(n-2)}{2}$ from \eqref{eqLem4_3} and $\vpin = {n \choose 2}$. 
In \eqref{Eq:Lem8_14}, we plug in $z_5 = O(1)$.
\eqref{Eq:Lem8_15} is true because we can always find $C>e+1$ such that \eqref{Eq:Lem8_14} is exponentially smaller than \eqref{Lem 8_5}.
\end{itemize}
\vspace{.5em}
We conclude that the second term of \eqref{Eq:step2-1} is negligible compared with the upper bound of the first term given in \eqref{Lem 8_5}.  
Combining the two terms,~\eqref{Eq:step2-1} can be bounded as
\begin{align*}
    &\prob(\delta_\pi(G_1,G_2)  \leq 0 \mid \Density =\density)\\
    &\leq \exp\left\{\Tilde{n}\left(-\tfrac{2\density}{n} + \tfrac{m}{2}\log(1-2\psi_{\mathrm{a}}) +O(1) \right)\right\}\\
    &= z_6^{\Tilde{n}}.
\end{align*}

\vspace{.5em}
\textbf{\emph{Step 3.}} {We now establish the desired error bound
\[
\prob(\exists \pi \in \Sn \setminus \{\pi_\mathrm{id}\}, \delta_\pi(G_1,G_2)  \leq 0\mid \Density = \density) \le 3n^2 z_6^2,
\]
where $z_6 = \exp\{-\frac{2\density}{n} + \frac{m}{2}\log(1-2\psi_{\mathrm{a}})+O(1)\}$.}

\vspace{.5em}
{When $n  z_6 > 2/3$, we have $$\prob(\exists \pi \in \Sn \setminus \{\pi_\mathrm{id}\}, \delta_\pi(G_1,G_2)  \leq 0\mid \Density = \density) \leq 1 \leq 3n^2  z_6^2.$$ Now assume that $n  z_6 \le 2/3$. We can bound}
\begin{align}
    &\prob(\exists \pi \in \Sn \setminus \{\pi_\mathrm{id}\}, \delta_\pi(G_1,G_2)  \leq 0 \mid \Density = \density) \nonumber\\ 
    \label{Lem 9_0}
    & \leq \sum_{\Tilde{n} =2}^{n} \sum_{\pi \in \SnTn} \prob(\delta_\pi(G_1,G_2)  \leq 0 \mid \Density = \density)\\
    &\leq \sum_{\Tilde{n} =2}^{n} |\SnTn| \max_{\pi \in \SnTn}\{\prob( \delta_\pi(G_1,G_2)  \leq 0 \mid \Density =\density)\}\nonumber\\
    \label{Lem 9_1}
    &\leq \sum_{\Tilde{n} =2}^{n} |\SnTn|  z_6^{\Tilde{n}}\\
    \label{Lem 9_2}
    &\leq \sum_{\Tilde{n} =2}^{n} n^{\Tilde{n}}   z_6^{\Tilde{n}}\\
    &\le \frac{(n   z_6)^2}{1-n   z_6} \nonumber\\
    \label{Lem 9_3}
    &\leq 3n^2  z_6^2,
\end{align}
where \eqref{Lem 9_0} follows from the union bound,
\eqref{Lem 9_1} follows from inequality \eqref{Eq:step2} proved in Step 2, \eqref{Lem 9_2} follows since $|\SnTn| \leq n^{\tilde{n}}$, and \eqref{Lem 9_3} holds since $n  z_6 \leq 2/3$.

In summary, $3n^2  z_6^2$ is always an upper bound on the conditional probability. This completes the proof of Lemma~\ref{Lemma:approximated union bound}.
\end{proof}

\section{Conclusion and Future Work}
\label{sec:discussion}

%\ning{Add more on conclusion (abstract style)}\\
%\ning{
%In this paper, we proposed the attributed \er\ pair model, characterized the information theoretic limits on exact alignment, and specialized our results for comparison with three well-studied graph alignment models. The current limitation is that there is still a gap between our achievability results and the converse result. Here we highlight several extensions and potential future directions for our work.
%}
In this paper, we proposed the attributed \er\ pair model to study the effect of publicly available side information for graph alignment.
We established information-theoretic limits for exact alignment, including achievability
and converse conditions that match for a certain range of parameters.
These conditions can be used to quantify the effect of side information.
We also specialized our results to three well-studied graph alignment models for comparison.

There are many more interesting questions to ask about the attributed graph alignment problem.
Here we give some example directions.
As discussed in Section~\ref{sec:main-result}, our achievability conditions and converse conditions do not match in the most general scenario.
We conjecture that the converse conditions can be potentially improved, especially given the recent developments on tighter converse conditions for the \er\ pair model in \cite{settling-TIT}.
Moreover, the achievability results in this work is based on the MAP estimator, which has no polynomial-time implementation. Therefore, a natural question is whether there is any efficient algorithms for attributed graph alignment, and whether there is any fundamental gap between the achievable region by efficient algorithms and the information-theoretic achievable region. This question is partially answered in the concurrent work \cite{wang-2022-feasible} by proposing two efficient algorithms for attributed graph alignment and analyzing their feasible region.
Another direction that is worth further investigation is graph alignment under more general attributed graph models.
Our model has assumed that the user-attribute edges are independent of the user-user edges.
However, in the social network example, users attending the same institute are more likely to be friends than users attending different institutes.
Therefore, it would be interesting to consider graph models in which user-attribute edges are correlated with user-user edges, and to investigate how the correlation affects graph alignment.
We comment that a starting point can be the multiplicative attribute graph model proposed in \cite{kim-2012-multiplicative}, where the probability of a user-user edge depends on the product of individual attribute-attribute similarity.

\appendices
\label{sec:appendix}
% \subsection{MAP estimator}\label{sec:appendix}
% \begin{replemma}{Lemma:MAP}
% Let $(G_1,G_2')$ be an observable pair generated from the attributed \er\ pair $\mathcal{G}(n,\bm{p};m,\bm{q})$. The MAP estimator of the permutation $\Pi^*$ based on $(G_1,G_2')$ simplifies to
% \begin{align*}
%     &\hat{\pi}_\mathrm{MAP}(G_1,G_2') \\
%     &= \argmin_{\pi \in \mathcal{S}_n}\{w_1 \Delta ^{\mathrm{u}}(G_1,\pi^{-1}(G_2')) 
%     + w_2 \Delta ^{\mathrm{a}}(G_1,\pi^{-1}(G_2'))\},
%     \end{align*}
% where $w_1 = \log\left(\frac{p_{11}p_{00}}{p_{10}p_{01}}\right)$, $w_2 = \log \left(\frac{q_{11}q_{00}}{q_{10}q_{01}}\right)$, and
% \begin{align*}
%     \Delta ^{\mathrm{u}}(G_1,\pi^{-1}(G_2')) &=\mspace{-12mu} \sum_{(i,j)\in \mathcal{E}_{\mathrm{u}}} \mspace{-12mu} \mathbb{1} \{G_1((i,j)) \neq G_2'((\pi(i),\pi(j)))\},\\
%     \Delta ^{\mathrm{a}}(G_1,\pi^{-1}(G_2')) &= \mspace{-12mu}\sum_{(i,v) \in \mathcal{E}_{\mathrm{a}}} \mspace{-12mu} \mathbb{1} \{G_1((i,v)) \neq G_2'((\pi(i),v))\}.
% \end{align*}
% \end{replemma}
\newcommand{\vuu}{\mathcal{V}^{u'}}
\newcommand{\vs}{\mathcal{V}^{s}}

\section{MAP estimator}\label{appd:pfMAP}
In this section, we derive the expression of the MAP for the attributed \er{} graph pair model $\mathcal{G}(n, \bm{p};m,\bm{q})$.
\LemmaMAP*
\begin{proof}
\label{Proof:Lemma MAP}
Let $(g_1,g_2')$ be a realization of an observable pair $(G_1,G_2')$ from $\mathcal{G}(n, \bm{p};m,\bm{q})$.  Then the posterior of the permutation $\Pi^*$ can be written as:
\begin{align}
    & \prob (\Pi^* = \pi| G_1 = g_1, G_2' = g_2') \nonumber\\
    =& \frac{\prob(G_1 = g_1, G_2' = g_2'|\Pi^* = \pi) \prob (\Pi^* =\pi)}{P(G_1 = g_1, G_2' = g_2')}\nonumber \\
    \label{eqLem 1_2}
    \propto& \prob (G_1 = g_1, G_2' = g_2'|\Pi^* = \pi)\\
    \label{eqLem 1_3}
    =& \prob(G_1 = g_1, G_2 = \pi^{-1}(g_2'))\\
    \label{eqLem 1_4}
    =& \prod _{(i,j)\in \{0,1\}^2}p_{ij}^{\muu_{ij}(g_1,\pi^{-1}(g_2'))} q_{ij}^{\mua_{ij}(g_1,\pi^{-1}(g_2'))}.
\end{align}
Here equation \eqref{eqLem 1_2} follows from the fact that $\Pi^*$ is uniformly drawn from $\mathcal{S}_n$ and $\prob(G_1=g_1, G_2'=g_2')$ does not depend on $\pi$. Equation \eqref{eqLem 1_3} is due to the independence between $\Pi^*$ and $(G_1,G_2)$. 

To further simplify equation \eqref{eqLem 1_4}, note that the total number of edges in a graph is invariant under any permutation. 
We define $\beta ^{\mathrm{u}}(G_1)$ as the total number of user-user edges in graph $G_1$ and $\beta ^{\mathrm{u}}(\pi^{-1}(G_2'))$ for graph $\pi^{-1}(G_2')$. Similarly, we define $\beta ^{\mathrm{a}}(G_1)$ and $\beta ^{\mathrm{a}}(\pi^{-1}(G_2'))$ as the total number of user-attribute edges for graph $G_1$ and $\pi^{-1}(G_2')$, respectively.
Recall our definitions on Hamming distance $\Delta^{\mathrm{u}}(G_1,\pi^{-1}(G_2'))$ and $\bm{\mu}(G_1,\pi^{-1}(G_2'))$, and notice that $\Delta^{\mathrm{u}}(G_1,\pi^{-1}(G_2')) =\muu_{10}+\muu_{01}$. Moreover, we have $\beta ^{\mathrm{u}}(G_1) = \muu_{11} + \muu_{10} $ and $\beta ^{\mathrm{u}}(G_2) =  \beta ^{\mathrm{u}}(\pi^{-1}(G_2'))=\muu_{11} + \muu_{01}$. Then, for the user-user set $\mathcal{E}_{\mathrm{u}}$, we have
\begin{align*}
    \mu_{11}  &= \frac{\beta ^{\mathrm{u}}(G_1)+\beta ^{\mathrm{u}}(\pi^{-1}(G_2'))}{2} - \frac{\Delta ^{\mathrm{u}}(G_1, \pi^{-1}(G_2'))}{2}\\
    \mu_{10}  &= \frac{\beta ^{\mathrm{u}}(G_1)-\beta ^{\mathrm{u}}(\pi^{-1}(G_2'))}{2} + \frac{\Delta ^{\mathrm{u}}(G_1, \pi^{-1}(G_2'))}{2}\\
    \mu_{01}  &= \frac{\beta ^{\mathrm{u}}(\pi^{-1}(G_2'))-\beta ^{\mathrm{u}}(G_1)}{2} + \frac{\Delta ^{\mathrm{u}}(G_1, \pi^{-1}(G_2'))}{2}\\
    \mu_{00}  &= {n \choose 2} \mspace{-5mu}-\mspace{-5mu} \frac{\beta ^{\mathrm{u}}(G_1)+\beta ^{\mathrm{u}}(\pi^{-1}(G_2'))}{2} \mspace{-5mu}-\mspace{-5mu} \frac{\Delta ^{\mathrm{u}}(G_1, \pi^{-1}(G_2'))}{2}.
\end{align*}
Similarly, for the user-attribute set $\mathcal{E}_{\mathrm{a}}$,  we have $\Delta^\mathrm{a}(G_1,\pi^{-1}(G_2')) = \mua_{10}+\mua_{01}$, $\beta^{\mathrm{a}}(G_1) =\mua_{11}+\mua_{10}$ and  $\beta^{\mathrm{a}}(G_2) =\beta^{\mathrm{a}}(\pi^{-1}(G_2')) =\mua_{11}+\mua_{01}$. Therefore, we get the following.
\begin{align*}
    \mua_{11} &= \frac{\beta ^{\mathrm{a}}(G_1)+\beta ^{\mathrm{a}}(\pi^{-1}(G_2'))}{2} - \frac{\Delta ^{\mathrm{a}}(G_1, \pi^{-1}(G_2'))}{2}\\
    \mua_{10} &= \frac{\beta ^{\mathrm{a}}(G_1)-\beta ^{\mathrm{a}}(\pi^{-1}(G_2'))}{2} + \frac{\Delta ^{\mathrm{a}}(G_1, \pi^{-1}(G_2'))}{2}\\
    \mua_{01} &= \frac{\beta ^{\mathrm{a}}(\pi^{-1}(G_2'))-\beta ^{\mathrm{a}}(G_1)}{2} + \frac{\Delta ^{\mathrm{a}}(G_1, \pi^{-1}(G_2'))}{2}\\
    \mua_{00} &= nm -  \frac{\beta ^{\mathrm{a}}(G_1)+\beta ^{\mathrm{a}}(\pi^{-1}(G_2'))}{2} - \frac{\Delta ^{\mathrm{a}}(G_1, \pi^{-1}(G_2'))}{2}.
\end{align*}
Since $\beta^{\mathrm{u}}(G_1),\beta ^{\mathrm{u}}(\pi^{-1}(G_2')),\beta^{\mathrm{a}}(G_1)$, and $\beta ^{\mathrm{a}}(\pi^{-1}(G_2'))$ do not depend on $\pi$, we can further simplify the posterior as follows
\begin{align}
    & \prob (\Pi^* = \pi| G_1 = G_1, G_2' = G_2') \nonumber\\
    \propto& \prod _{(i,j)\in \{0,1\}^2}p_{ij}^{\muu_{ij}(G_1,\pi^{-1}(G_2'))} q_{ij}^{\mua_{ij}(G_1,\pi^{-1}(G_2'))} \nonumber\\
    \label{Lem 1_5}
    \propto & \left( \frac{p_{11}p_{00}}{p_{10}p_{01}}\right)^{-\frac{\Delta ^{\mathrm{u}}(G_1,\pi^{-1}(G_2' ))}{2}} \left(\frac{q_{11}q_{00}}{q_{10}q_{01}}\right)^{- \frac{\Delta ^{\mathrm{a}}(G_1,\pi^{-1}(G_2' ))}{2}} \\
    \label{Lem 1_6}
    =& \exp{\left\{ -w_1 \frac{\Delta ^{\mathrm{u}}(G_1,\pi^{-1}(G_2' ))}{2} - w_2\frac{\Delta ^{\mathrm{a}}(G_1,\pi^{-1}(G_2' ))}{2} \right\}},
\end{align}
where $w_1 \triangleq \log\left(\frac{p_{11}p_{00}}{p_{10}p_{01}}\right)$ and $w_2 \triangleq \log\left(\frac{q_{11}q_{00}}{q_{10}q_{01}}\right)$. Note that $w_1>0$ and $w_2>0$ since we assume that the edges of $G_1$ and $G_2$ are positively correlated. Therefore, of all the permutations in $S_n$, the one that minimizes the weighted Hamming distance $w_1 \Delta ^{\mathrm{u}}(G_1,\pi^{-1}(G_2')) + w_2\Delta ^{\mathrm{a}}(G_1,\pi^{-1}(G_2'))$ achieves the maximum posterior probability.
\end{proof}

\section{Proof of Lemma~\ref{lemma:seed-attri}}\label{appx:PfLem1}
Recall that when we specialize the attributed \er\ pair model by setting $\bm{p}=\bm{q}$, we can treat the $m$ attributes as $m$ seeds.
The only difference between the $\mathcal{G}(n,\bm{p}; m ,\bm{p})$ model and the seeded model $\mathcal{G}(n,m,\bm{p})$ is that there are no edges between seeds in the specialized model, but those edges exist in the seeded model. Here, we show that this distinction has no influence on the information-theoretic limit of exact alignment. To see this, we prove that the optimal estimator for the seeded \er\ pair---the MAP estimator---also simplifies to minimizing the Hamming distance of the user-user edges and user-seed edges.
\begin{lemma}
\label{appx:lemMAP2}
Let $(G_1,G_2')$ be a pair of graphs generated from the seeded \er\ pair $\mathcal{G}(n,m, \bm{p})$. The MAP estimator of the permutation $\Pi^*$ based on $(G_1,G_2')$ simplifies to
\begin{align*}
    &\hat{\pi}_\mathrm{MAP}(G_1,G_2') \\
    &= \argmin_{\pi \in \mathcal{S}_n}\{\Delta ^{\mathrm{u}}(G_1,\pi^{-1}(G_2')) 
    + \Delta ^{\mathrm{a}}(G_1,\pi^{-1}(G_2'))\},
    \end{align*}
where 
\begin{align*}
    \Delta ^{\mathrm{u}}(G_1,\pi^{-1}(G_2')) &=\mspace{-12mu} \sum_{(i,j)\in \mathcal{E}_{\mathrm{u}}} \mspace{-12mu} \mathbb{1} \{G_1((i,j)) \neq G_2'((\pi(i),\pi(j)))\},\\
    \Delta ^{\mathrm{a}}(G_1,\pi^{-1}(G_2')) &= \mspace{-12mu}\sum_{(i,v) \in \mathcal{E}_{\mathrm{a}}} \mspace{-12mu} \mathbb{1} \{G_1((i,v)) \neq G_2'((\pi(i),v))\}.
\end{align*}
\end{lemma}
\begin{proof}
To start, we have the posterior of the underlying permutation.
\begin{align}
    & \prob (\Pi^* = \pi| G_1 = g_1, G_2' = g_2') \nonumber\\
    &= \frac{\prob(G_1 = g_1, G_2' = g_2'|\Pi^* = \pi) \prob (\Pi^* =\pi)}{P(G_1 = g_1, G_2' = g_2')}\nonumber \\
    \label{eq-app1_1}
    &\propto \prob (G_1 = g_1, G_2' = g_2'|\Pi^* = \pi)\\
    \label{eq-app1_2}
    &= \prob(G_1 = g_1, G_2 = \pi^{-1}(g_2')).
\end{align}
Here \eqref{eq-app1_1} follows since $\Pi^*$ is uniformly drawn. \eqref{eq-app1_2} follows since $\Pi^*$ is independent of $G_1$ and $G_2$. 
For ease of notation, we use $g_2^{\pi}$ to denote $\pi^{-1}(g_2')$.
Then according to the seeded graph model in Section~\ref{sec:model}, we have 
\begin{align}
    & \prob(G_1 = g_1, G_2 = g_2^{\pi})\nonumber\\
    \label{eq-app1_3}
    =& p_{11}^{\mu_{11}(g_1,g_2^{\pi} )} 
    p_{00}^{\mu_{00}(g_1, g_2^{\pi})} p_{10} ^{\mu_{10}(g_1, g_2^{\pi})}p_{01}^{\mu_{01} (g_1, g_2^{\pi})}.
\end{align}
In \eqref{eq-app1_3}, we define  
\begin{align*}
    &\mu_{11}(g_1, g_2^{\pi})  \triangleq \sum_{i,j \in \vuu} \mathbb{1}_{\left\{i\stackrel{g_1}{\sim}j, i\stackrel{g_2^{\pi}}{\sim}j \right\}} + \sum_{i,j \in \vs} \mathbb{1}_{\left\{i\stackrel{g_1}{\sim}j, i\stackrel{g_2^{\pi}}{\sim}j \right\}}\\
    &\hspace{2em} +\sum_{i \in \vuu, j \in \vs} \mathbb{1}_{\left\{i\stackrel{g_1}{\sim}j, i\stackrel{g_2^{\pi}}{\sim}j \right\}}+\sum_{i \in \vs, j \in \vuu} \mathbb{1}_{\left\{i\stackrel{g_1}{\sim}j, i\stackrel{g_2^{\pi}}{\sim}j \right\}}\\
    &\mu_{10}(g_1, g_2^{\pi}) \triangleq  \sum_{i,j \in \vuu} \mathbb{1}_{\left\{i\stackrel{g_1}{\sim}j, i\stackrel{g_2^{\pi}}{\not\sim}j \right\}} + \sum_{i,j \in \vs} \mathbb{1}_{\left\{i\stackrel{g_1}{\sim}j, i\stackrel{g_2^{\pi}}{\not\sim}j \right\}}\\
    &\hspace{2em} +\sum_{i \in \vuu, j \in \vs} \mathbb{1}_{\left\{i\stackrel{g_1}{\sim}j, i\stackrel{g_2^{\pi}}{\not\sim}j \right\}}+\sum_{i \in \vs, j \in \vuu} \mathbb{1}_{\left\{i\stackrel{g_1}{\sim}j, i\stackrel{g_2^{\pi}}{\not\sim}j \right\}}\\
    &\mu_{01}(g_1, g_2^{\pi})  \triangleq \sum_{i,j \in \vuu} \mathbb{1}_{\left\{i\stackrel{g_1}{\not\sim}j, i\stackrel{g_2^{\pi}}{\sim}j \right\}} + \sum_{i,j \in \vs} \mathbb{1}_{\left\{i\stackrel{g_1}{\not\sim}j, i\stackrel{g_2^{\pi}}{\sim}j \right\}}\\
    &\hspace{2em}  +\sum_{i \in \vuu, j \in \vs} \mathbb{1}_{\left\{i\stackrel{g_1}{\not\sim}j, i\stackrel{g_2^{\pi}}{\sim}j \right\}}+\sum_{i \in \vs, j \in \vuu} \mathbb{1}_{\left\{i\stackrel{g_1}{\not\sim}j, i\stackrel{g_2^{\pi}}{\sim}j \right\}}\\
    &\mu_{00}(g_1, g_2^{\pi})  \triangleq \sum_{i,j \in \vuu} \mathbb{1}_{\left\{i\stackrel{g_1}{\not\sim}j, i\stackrel{g_2^{\pi}}{\not\sim}j \right\}} + \sum_{i,j \in \vs} \mathbb{1}_{\left\{i\stackrel{g_1}{\not\sim}j, i\stackrel{g_2^{\pi}}{\not\sim}j \right\}}\\
    &\hspace{2em}  +\sum_{i \in \vuu, j \in \vs} \mathbb{1}_{\left\{i\stackrel{g_1}{\not\sim}j, i\stackrel{g_2^{\pi}}{\not\sim}j \right\}}+\sum_{i \in \vs, j \in \vuu} \mathbb{1}_{\left\{i\stackrel{g_1}{\not\sim}j, i\stackrel{g_2^{\pi}}{\not\sim}j \right\}}.
\end{align*}
where {$\vuu \triangleq \vu \setminus \vs$ is the set of unmatched user vertices} and $\vs$ is the set of seed vertices. Notice that the term summing seed-seed edges is always the same for every $\pi \in \mathcal{S}_{\mathrm{u}}$ since we only permute user vertices. 
Here, we define 
\begin{align*}
    &\mu_{11}'(g_1, g_2^{\pi}) \triangleq \sum_{i,j \in \vuu} \mathbb{1}_{\left\{i\stackrel{g_1}{\sim}j, i\stackrel{g_2^{\pi}}{\sim}j \right\}}\\
    & \hspace{2em} +\sum_{i \in \vuu, j \in \vs} \mathbb{1}_{\left\{i\stackrel{g_1}{\sim}j, i\stackrel{g_2^{\pi}}{\sim}j \right\}}+\sum_{i \in \vs, j \in \vuu} \mathbb{1}_{\left\{i\stackrel{g_1}{\sim}j, i\stackrel{g_2^{\pi}}{\sim}j \right\}}\\
    &\mu_{10}'(g_1, g_2^{\pi}) \triangleq  \sum_{i,j \in \vuu} \mathbb{1}_{\left\{i\stackrel{g_1}{\sim}j, i\stackrel{g_2^{\pi}}{\not\sim}j \right\}} \\
    &\hspace{2em}  +\sum_{i \in \vuu, j \in \vs} \mathbb{1}_{\left\{i\stackrel{g_1}{\sim}j, i\stackrel{g_2^{\pi}}{\not\sim}j \right\}}+\sum_{i \in \vs, j \in \vuu} \mathbb{1}_{\left\{i\stackrel{g_1}{\sim}j, i\stackrel{g_2^{\pi}}{\not\sim}j \right\}}\\
    &\mu_{01}'(g_1, g_2^{\pi})  \triangleq \sum_{i,j \in \vuu} \mathbb{1}_{\left\{i\stackrel{g_1}{\not\sim}j, i\stackrel{g_2^{\pi}}{\sim}j \right\}} \\
    &\hspace{2em}  +\sum_{i \in \vuu, j \in \vs} \mathbb{1}_{\left\{i\stackrel{g_1}{\not\sim}j, i\stackrel{g_2^{\pi}}{\sim}j \right\}}+\sum_{i \in \vs, j \in \vuu} \mathbb{1}_{\left\{i\stackrel{g_1}{\not\sim}j, i\stackrel{g_2^{\pi}}{\sim}j \right\}}\\
    &\mu_{00}'(g_1, g_2^{\pi})  \triangleq \sum_{i,j \in \vuu} \mathbb{1}_{\left\{i\stackrel{g_1}{\not\sim}j, i\stackrel{g_2^{\pi}}{\not\sim}j \right\}} \\
    & \hspace{2em} +\sum_{i \in \vuu, j \in \vs} \mathbb{1}_{\left\{i\stackrel{g_1}{\not\sim}j, i\stackrel{g_2^{\pi}}{\not\sim}j \right\}}+\sum_{i \in \vs, j \in \vuu} \mathbb{1}_{\left\{i\stackrel{g_1}{\not\sim}j, i\stackrel{g_2^{\pi}}{\not\sim}j \right\}}.
\end{align*}
We therefore have
\begin{align}
    & \prob(G_1 = g_1, G_2 = g_2^{\pi})\nonumber\\
    &\propto p_{11}^{\mu_{11}'(g_1, g_2^{\pi})}
    p_{00}^{\mu_{00}'(g_1, g_2^{\pi})} p_{10} ^{\mu_{10}'(g_1, g_2^{\pi})}p_{01}^{\mu_{01}'(g_1, g_2^{\pi})}
    \label{eq-app1_4}
    % & \propto \left(\frac{ps^s [1-ps^2-2ps(1-s)]}{[ps(1-s)]^2}\right)^{-\mu_{10}'(g_1, g_2^{\pi}) - \mu_{01}'(g_1, g_2^{\pi})}
\end{align}
% Here \eqref{eq-app1_4} follows since $\mu_{11}+\mu_{10}$ is a fixed constant
So far the MAP estimator we derived here is exactly the same as the estimator for attributed graph alignment. Applying Lemma~\ref{Lemma:MAP}, we then get
\begin{align*}
    \hat{\pi}_\mathrm{MAP} (g_1, g_2')= \argmin_{\pi \in \mathcal{S}_{\mathrm{u}}} \{\mu_{10}'(g_1, g_2^{\pi}) + \mu_{01}'(g_1, g_2^{\pi})\}.
\end{align*}
\end{proof}

\section{Proof of Lemma~\ref{lemma:Aut-error-prob}}\label{appx:PfLem2}
\LemmaAUT*
\begin{proof}
    We assume without loss of generality that the true underlying permutation $\Pi^*$ us the identity permutation, i.e., $G_2=G_2'$. Recall that from Lemma~\ref{Lemma:MAP} we have 
    \begin{align*}
    &\hat{\pi}_\mathrm{MAP}(G_1,G_2') \\
    &= \argmin_{\pi \in \mathcal{S}_n}\{w_1 \Delta ^{\mathrm{u}}(G_1,\pi^{-1}(G_2')) 
    + w_2 \Delta ^{\mathrm{a}}(G_1,\pi^{-1}(G_2'))\},
    \end{align*}
where $w_1 = \log\left(\frac{p_{11}p_{00}}{p_{10}p_{01}}\right)$, $w_2 = \log \left(\frac{q_{11}q_{00}}{q_{10}q_{01}}\right)$, and
\begin{align*}
    \Delta ^{\mathrm{u}}(G_1,\pi^{-1}(G_2')) &=\mspace{-12mu} \sum_{(i,j)\in \mathcal{E}_{\mathrm{u}}} \mspace{-12mu} \mathbb{1} \{G_1((i,j)) \neq G_2'((\pi(i),\pi(j)))\},\\
    \Delta ^{\mathrm{a}}(G_1,\pi^{-1}(G_2')) &= \mspace{-12mu}\sum_{(i,v) \in \mathcal{E}_{\mathrm{a}}} \mspace{-12mu} \mathbb{1} \{G_1((i,v)) \neq G_2'((\pi(i),v))\}.
\end{align*}
It suffices to show that for any $\sigma\in \mathrm{Aut}(G_1 \wedge G_2)$, we have $\Delta ^{\mathrm{u}}(G_1,G_2)= \Delta ^{\mathrm{u}}(G_1,\sigma(G_2))$ and $\Delta ^{\mathrm{a}}(G_1,G_2)= \Delta ^{\mathrm{a}}(G_1,\sigma(G_2))$. This would imply that permutation $\sigma^{-1}$ has the same posterior as the identity permutation, and thus the estimator cannot do better than a random guess between these permutations.

We firstly show that $\Delta ^{\mathrm{u}}(G_1,G_2)= \Delta ^{\mathrm{u}}(G_1,\sigma(G_2))$. Consider a user pair $(i,j)\in \mathcal{E}_{\mathrm{u}}$. Suppose that $G_1(i,j)=G_2(i,j)=1$, i.e., $(G_1\wedge G_2)(i,j)=1$. Because $\sigma\in \mathrm{Aut}(G_1 \wedge G_2)$, we know that $G_1(\sigma(i),\sigma(j))=G_2(\sigma(i),\sigma(j))=1$. So the contribution of $(i,j)$ to $\Delta ^{\mathrm{u}}(G_1,G_2)$ and $\Delta ^{\mathrm{u}}(G_1,\sigma(G_2))$ are both zero. Next we consider the case of $(G_1\wedge G_2)(i,j)=0$. This includes subcases of $G_1(i,j)=0,G_2(i,j)=1$ and $G_1(i,j)=1,G_2(i,j)=0$ and $G_1(i,j)=0,G_2(i,j)=0$. Let $S_\mathrm{u}$ denote the edge orbit in the permutation $\sigma^\mathcal{E}$ that contains $(i,j)$. Because $\sigma\in \mathrm{Aut}(G_1 \wedge G_2)$, we know that $(G_1\wedge G_2)(e)=0$ for each $e\in S_\mathrm{u}$. Note that the contribution of $e$ to $\Delta^\mathrm{u}(G_1, G_2)$ is one if $G_1(e)=0,G_2(e)=1$ or $G_1(e)=1,G_2(e)=0$ and the contribution is zero if $G_1(e)=0,G_2(e)=0$. Therefore, the total contribution of the orbit $S_\mathrm{u}$ to $\Delta ^{\mathrm{u}}(G_1,G_2)$ is given by the total number of edges in $G_1$ and $G_2$ on the orbit $S_\mathrm{u}$. Similarly, the contribution of $S_\mathrm{u}$ to $\Delta ^{\mathrm{u}}(G_1,\sigma(G_2))$ is the same. Thus, we have $\Delta ^{\mathrm{u}}(G_1,G_2)= \Delta ^{\mathrm{u}}(G_1,\sigma(G_2))$.

Secondly, we show that $\Delta ^{\mathrm{a}}(G_1,G_2)= \Delta ^{\mathrm{a}}(G_1,\sigma(G_2))$. Consider a user-attribute pair $(i,a)\in \mathcal{E}_\mathrm{a}$. Then if $(G_1\wedge G_2)(i,a)=1$, we have $G_1(i,a)=G_2(i,a)=G_1(\sigma(i),a)=G_2(\sigma(i),a)=1$. So the contribution of $(i,a)$ to $\Delta ^{\mathrm{a}}(G_1,G_2)$ and $\Delta ^{\mathrm{a}}(G_1,\sigma(G_2))$ are both zero. Next, suppose $(G_1\wedge G_2)(i,a)=0$. Let $S_\mathrm{a}$ denote the edge orbit that contains $(i,a)$. It is not hard to see that the contribution of the user-attribute pairs in $S_\mathrm{a}$ to $\Delta ^{\mathrm{a}}(G_1,G_2)$ equals to the total number of edges in $G_1$ and $G_2$ on $S_\mathrm{a}$, and its contribution of $\Delta ^{\mathrm{a}}(G_1,\sigma(G_2))$ is the same. So we have $\Delta ^{\mathrm{a}}(G_1,G_2)= \Delta ^{\mathrm{a}}(G_1,\sigma(G_2))$.

\end{proof}

\section{Proof of Fact~\ref{Fact:ogf cycle decomposition}}
% Orbit decomposition}
\ogfcycle*
% \begin{repfact}{Fact:OFG cycle decomposition}
% \label{Appendix:fact1}
% The generating function $\mathcal{A}(\bm{x},\bm{y},z)$ for permutation $\pi$ can be decomposed into
% \begin{align*}
%     \mathcal{A}(\bm{x},\bm{y},z)= & \prod_{l\geq1} \mathcal{A}_{l}(\bm{x},z)^{t_l ^{\mathrm{u}}} \mathcal{A}_{l} (\bm{y},z)^{t_l ^{\mathrm{a}}},
% \end{align*}
% where $t_l^\mathrm{u}$ is the number of $l$-cycle of user-user pairs, $t_l^\mathrm{a}$ is the number of $l$-cycle of user-attribute pairs. Here $\mathcal{A}_l(\bm{x},z)$ is defined as the generating function for a $l$-cycle of user-user pair
% \begin{align}
% % \label{Eq:l-cycle OGFu}
%     \mathcal{A}_l(\bm{x},z) = \sum_{g\in \{0,1\}^{\mathcal{E}_l}}\sum_{h \in \{0,1\}^{\mathcal{E}_l}} z^{\delta_{\pi}(g,h)}\bm{x}^{\bm{\muu} (g,h)}.
% \end{align}
% The generating function for a $l$-cycle of user-user pair is defined as
% \begin{align}
% % \label{Eq:l-cycle OGFa}
%     \mathcal{A}_l(\bm{y},z) = \sum_{g\in \{0,1\}^{\mathcal{E}_l}}\sum_{h \in \{0,1\}^{\mathcal{E}_l}} z^{\delta_{\pi}(g,h)}\bm{y}^{\bm{\mua} (g,h)}.
% \end{align}
% \end{repfact}
\begin{proof}
\label{Appendix:fact1}
Recall the definition of $\mathcal{A}(\bm{x},\bm{y},z)$ for a given $\pi$
\begin{align*}
    \mathcal{A}(\bm{x},\bm{y},z) = \sum_{g\in \{0,1\}^\mathcal{E}}\sum_{h \in \{0,1\}^\mathcal{E}} z^{\delta_{\pi}(g,h)}\bm{x}^{\bm{\muu} (g,h)} \bm{y}^{\bm{\mua}(g,h)}.
\end{align*}
According to the cycle decomposition on $\pi^{\mathcal{E}}$, we write $\mathcal{E} = \cup_{i\geq 1} \orbit_i$, where use $\orbit_i$ is the $i$th orbit and there are $N$ orbits in total. Then we have the following.
\begin{align}
    &\mathcal{A}(\bm{x},\bm{y},z) \nonumber\\ 
    &= \sum_{g\in \{0,1\}^\mathcal{E}}\sum_{h \in \{0,1\}^\mathcal{E}} z^{\delta_{\pi}(g,h)}\bm{x}^{\bm{\muu} (g,h)} \bm{y}^{\bm{\mua}(g,h)}\nonumber\\
    \label{eqFact1_0}
    &= \sum_{g\in \{0,1\}^\mathcal{E}}\sum_{h \in \{0,1\}^\mathcal{E}} \prod_{e \in \mathcal{E}} z^{\delta_\pi(g_e,h_e)} \bm{x}^{\bm{\muu}(g_e,h_e)} \bm{y}^{\bm{\mua}(g_e,h_e)}\\
    \label{eqFact1_1}
    &= \sum_{g\in \{0,1\}^\mathcal{E}}\sum_{h \in \{0,1\}^\mathcal{E}} \prod_{i=1}^{N} f(g_{\orbit_i},h_{\orbit_i})\\
    \label{eqFact1_2}
    &= \mspace{-15mu} \sum_{\substack{g_{\orbit_1} \in \{0,1\}^{\orbit_1}}}\sum_{h_{\orbit_1} \in \{0,1\}^{\orbit_1}}
    \mspace{-10mu} \ldots \mspace{-10mu}
    \sum_{h_{\orbit_N} \in \{0,1\}^{\orbit_N}} \prod_{i=1}^{N} f(g_{\orbit_i},h_{\orbit_i}) \\
    \label{eqFact1_3}
    &=\prod_{i=1}^{N} \left(\sum_{g_{\orbit_i}\in \{0,1\}^{\orbit_i}}\sum_{h_{\orbit_i} \in \{0,1\}^{\orbit_i}} f(g_{\orbit_i},h_{\orbit_i})\right) \\
    \label{eqFact1_4}
    &=\prod_{i=1}^{N} \mathcal{A}_{\orbit_i} (\bm{x},\bm{y},z)\\
    \label{eqFact1_5}
    &=  \prod_{l\geq1} \mathcal{A}_{l}(\bm{x},z^{w_1})^{t_l ^{\mathrm{u}}} \mathcal{A}_{l} (\bm{y},z^{w_2})^{t_l ^{\mathrm{a}}}.
\end{align}
Here we use $g_{\mathcal{E'}}$ to denote a subset of $g$ that contains only vertex pairs in $\mathcal{E}'$ and $h_{\mathcal{E'}}$ to denote a subset of $h$ that contains only vertex pairs in $\mathcal{E}'$, where $\mathcal{E'}$ can be any set of vertex pairs. In \eqref{eqFact1_0}, $g_e$ (resp. $h_e$) represent a subset of $g$ (resp. $h$) that contains a single vertex pair $e$. 
In \eqref{eqFact1_1}, $g_{\orbit_i}$ (resp. $h_{\orbit_i}$) represents the subset of $g$ (resp. $h$) that contains only vertex pairs in the orbit $\orbit_i$. 
We define $f(g_{\orbit_i},h_{\orbit_i})$ as a function of $g_{\orbit_i}$ and $h_{\orbit_i}$ where $f(g_{\orbit_i},h_{\orbit_i}) = \prod_{e \in \orbit_i}z^{\delta_\pi(g_e,h_e)} \bm{x}^{\bm{\muu}(g_e,h_e)}$ if $\orbit_i$ only contains user-user pairs, and $f(g_{\orbit_i},h_{\orbit_i}) = \prod_{e \in \orbit_i}z^{\delta_\pi(g_e,h_e)} \bm{y}^{\bm{\mua}(g_e,h_e)}$ if $\orbit_i$ only contains user-attribute pairs. 
% Therefore, we have $\prod_{i=1}^{N} f(g_{\orbit_i},h_{\orbit_i}) = z^{\delta_{\pi}(g,h)}\bm{x}^{\bm{\muu} (g,h) } \bm{y}^{\bm{\mua}(g,h)}$.
Equation \eqref{eqFact1_2} follows because $\orbit_i$'s  are disjoint and their union is $\mathcal{E}$.  
Note that $f(g_{\orbit_i},h_{\orbit_i})$ only concerns vertex pairs in the cycle $\orbit_i$ since for $e \in \orbit_i$ we have $\pi^{\mathcal{E}}(e) \in \orbit_i$.
Then, \eqref{eqFact1_3} follows because $f(g_{\orbit_i},h_{G_i})$'s are independent functions.
In \eqref{eqFact1_4}, we use $\mathcal{A}_{\orbit_i} (\bm{x}, \bm{y},z)$ to denote the generating function for the orbit $\orbit_i$ where $\mathcal{A}_{\orbit_i} (\bm{x}, \bm{y},z) = \mathcal{A}_{\orbit_i} (\bm{x},z)$ if $\orbit_i$ contains user-user vertex pairs; $\mathcal{A}_{\orbit_i} (\bm{x}, \bm{y},z) = \mathcal{A}_{\orbit_i} (\bm{y},z)$ if $\orbit_i$ contains user-attribute vertex pairs. To see why this equation follows,
note that if $\orbit_i$ contains only user-user vertex pairs, then
\begin{align*}
    & \sum_{g_{\orbit_i}\in \{0,1\}^{\orbit_i}}\sum_{h_{\orbit_i} \in \{0,1\}^{\orbit_i}} f(g_{\orbit_i},h_{\orbit_i}) \\
    &= \sum_{g_{\orbit_i}\in \{0,1\}^{\orbit_i}}\sum_{h_{\orbit_i} \in \{0,1\}^{\orbit_i}}  \prod_{e \in \orbit_i} z^{\delta_\pi(g_e,h_e)} \bm{x}^{\bm{\muu}(g_e,h_e)}\\
    & = \sum_{g_{\orbit_i}\in \{0,1\}^{\orbit_i}}\sum_{h_{\orbit_i} \in \{0,1\}^{\orbit_i}}  z^{\delta_\pi(g_{\orbit_i},h_{\orbit_i})} \bm{x}^{\bm{\muu}(g_{\orbit_i},h_{\orbit_i})}\\
    & = \mathcal{A}_{\orbit_i} (\bm{x},z).
\end{align*}
If $\orbit_i$ contains only user-attribute vertex pairs, then
\begin{align*}
    & \sum_{g_{\orbit_i}\in \{0,1\}^{\orbit_i}}\sum_{h_{\orbit_i} \in \{0,1\}^{\orbit_i}} f(g_{\orbit_i},h_{\orbit_i}) \\
    &= \sum_{g_{\orbit_i}\in \{0,1\}^{\orbit_i}}\sum_{h_{\orbit_i} \in \{0,1\}^{\orbit_i}}  \prod_{e \in \orbit_i} z^{\delta_\pi(g_e,h_e)} \bm{y}^{\bm{\mua}(g_e,h_e)}\\
    & = \sum_{g_{\orbit_i}\in \{0,1\}^{\orbit_i}}\sum_{h_{\orbit_i} \in \{0,1\}^{\orbit_i}}  z^{\delta_\pi(g_{\orbit_i},h_{\orbit_i})} \bm{y}^{\bm{\mua}(g_{\orbit_i},h_{\orbit_i})}\\
    & = \mathcal{A}_{\orbit_i} (\bm{y},z).
\end{align*}
In \eqref{eqFact1_5}, we apply the fact that orbits of the same size have the same generating function. 
\end{proof}

\section{A useful fact for Corollaries~\ref{Coro:achievability} and~\ref{coro:seed_tight}}\label{appx:fact4}
\begin{fact}
\label{fact-subsample}
Consider the subsampling representation of the graph parameters
\begin{align*}
    \begin{pmatrix}
    p_{11} & p_{10}\\
    p_{01} & p_{00}
    \end{pmatrix}
    &=\begin{pmatrix}
    p s_{\mathrm{u},1} s_{\mathrm{u},2} & p s_{\mathrm{u},1} (1-s_{\mathrm{u},2})\\
    p (1-s_{\mathrm{u},1}) s_{\mathrm{u},2} & p(1-s_{\mathrm{u},1})(1-s_{\mathrm{u},2}) + 1-p
    \end{pmatrix},
    % \\
    % \begin{pmatrix}
    % q_{11} & q_{10}\\
    % q_{01} & q_{00}
    % \end{pmatrix}
    % &=\begin{pmatrix}
    % q s'_1s'_2 & q s'_1 (1-s'_2)\\
    % q (1-s'_1) s'_2 & q(1-s'_1)(1-s'_2) + 1-q
    % \end{pmatrix}.
\end{align*}
If 
$1-p=\Theta(1),$
then we have $\psiu = \Theta(p_{11})$ and $\psiu = p_{11} - \Theta(p_{11}p^{1/2})$. The statement holds if we exchange $\bm{p}$ to $\bm{q}$.
\end{fact}

\begin{proof}
To see $\psiu = \Theta(p_{11})$, we write $\psiu$ using parameters from the subsampling model and we have
\begin{align}
    \psiu &= (\sqrt{p_{11}p_{00}}-\sqrt{p_{10}p_{01}})^2\nonumber\\
    &=( \sqrt{p_{11}((1-p)+p(1-s_{\mathrm{u},1})(1-s_{\mathrm{u},2}))} \nonumber\\ 
    & \;\ \;\ -\sqrt{p^2s_{\mathrm{u},1}s_{\mathrm{u},2}(1-s_{\mathrm{u},1})(1-s_{\mathrm{u},2})}  )^2 \nonumber\\
    &=(1-p)p_{11} \left(\sqrt{1+\tfrac{p(1-s_{\mathrm{u},1})(1-s_{\mathrm{u},2})}{1-p}} - \sqrt{\tfrac{p(1-s_{\mathrm{u},1})(1-s_{\mathrm{u},2})}{1-p}}\right)^2 \nonumber\\
    \label{eqf4_1}
    &= (1-p)p_{11}
    \frac{1}{\left(\sqrt{1+\frac{p(1-s_{\mathrm{u},1})(1-s_{\mathrm{u},2})}{1-p}} + \sqrt{\frac{p(1-s_{\mathrm{u},1})(1-s_{\mathrm{u},2})}{1-p}}\right)^2}.
\end{align}
In \eqref{eqf4_1}, we have that $(1-p) = \Theta(1)$ and   $\frac{1}{\sqrt{1+\frac{p(1-s_{\mathrm{u},1})(1-s_{\mathrm{u},2})}{1-p}} + \sqrt{\frac{p(1-s_{\mathrm{u},1})(1-s_{\mathrm{u},2})}{1-p}}} = \Theta(1)$. Therefore, $\psiu = \Theta(p_{11})$.

To see $\psiu = p_{11} - \Theta(p_{11}^{3/2})$, we take
\begin{align*}
    \psiu &= (\sqrt{p_{11}p_{00}}-\sqrt{p_{10}p_{01}})^2\nonumber\\
    & = p_{11}p_{00}+p_{10}p_{01} -2\sqrt{p_{11}p_{00}p_{10}p_{01}}\\
    & = p_{11} ((1-p)+p(1-s_{\mathrm{u},1})(1-s_{\mathrm{u},2})) \\
    & \;\ + p^2s_{\mathrm{u},1}s_{\mathrm{u},2}(1-s_{\mathrm{u},1})(1-s_{\mathrm{u},2})\\
    &\mspace{-20mu} -\sqrt{p_{11}^2 ((1-p)+p(1-s_{\mathrm{u},1})(1-s_{\mathrm{u},2}))p(1-s_{\mathrm{u},1})(1-s_{\mathrm{u},2})}\\
    & = p_{11} - O(p_{11}p^{1/2}).
\end{align*}
% where the last step follows from $s_1 = \Theta(1)$ and $s_2 = \Theta(1)$.

\end{proof}

\section{Proof of Corollary~\ref{Coro:achievability}}
\label{appx:coro-proof}
% \begin{proof}[Proof of Corollary~\ref{Coro:achievability}]

\paragraph{Achievability} In this proof, we first show that, under the assumptions on the user-user edges in condition~\eqref{eq:cond-nonedgeProb} and \eqref{eq:cond-rho}, the achievability result becomes
\[
n p_{11} +m\psia-\log{n} = \omega(1).
\]
% Next, we apply the assumptions on the user-attribute edges from~\eqref{eq:cond-nonedgeProb} and \eqref{eq:cond-rho}, and derive the two cases in this Corollary by approximating $\psia$.
Next, we apply conditions~\eqref{eq:cond-attedgeProb} and~\eqref{eq:cond-rhoa} to bound difference between $q_{11}$ and $\psia$, and complete the proof.

For the user-user edge part, we check the two regimes $p_{11}=\omega(\frac{\log{n}}{n})$ and $p_{11}=O(\frac{\log{n}}{n})$ separately.
If $p_{11} = \omega(\frac{\log{n}}{n})$, then with the assumption on the user-user edge density \eqref{eq:cond-nonedgeProb},  we also have $\psiu=\omega(\frac{\log{n}}{n})$ because $\psiu=\Theta(p_{11})$ (see Fact~\ref{fact-subsample} in Appendix~\ref{appx:fact4}). Therefore exact alignment is achievable according to Theorem~\ref{Thm:achievability-general}: $\tfrac{n\psiu}{2} + m\psia -\log{n} = \omega(\log n) + m\psia - \log n = \omega(1)$. 
Now we check the case when $p_{11}=O(\frac{\log{n}}{n})$.
We will see that all the conditions in Theorem~\ref{Thm:achievability-sparse} are satisfied. Notice that $p_{10}=ps_{\mathrm{u},1}(1-s_{\mathrm{u},2})\le ps_{\mathrm{u},1}=\frac{p_{11}}{s_{\mathrm{u},2}}$ and $p_{01}=ps_{\mathrm{u},2}(1-s_{\mathrm{u},1})\le ps_{\mathrm{u},2}=\frac{p_{11}}{s_{\mathrm{u},1}}$. Under assumption \eqref{eq:cond-rho}, we know that $s_{\mathrm{u},1}=\Omega(\frac{(\log n)^4}{n})$ and $s_{\mathrm{u},2}=\Omega(\frac{(\log n)^4}{n})$. Because $p_{11}=O(\frac{\log n}{n})$, we have $p_{01}=O(\frac{1}{\log n})$ and $p_{10}=O(\frac{1}{\log n})$. Moreover, note that
\begin{align*}
    \frac{p_{10}p_{01}}{p_{00}p_{11}}=O\left(\frac{p^2s_{\mathrm{u},1}s_{\mathrm{u},2}(1-s_{\mathrm{u},1})(1-s_{\mathrm{u},2})}{ps_{\mathrm{u},1}s_{\mathrm{u},2}}\right)=O(p).
\end{align*}
Because $ps_{\mathrm{u},1}s_{\mathrm{u},2}=O\left(\frac{\log n}{n}\right)$ and $s_{\mathrm{u},1}s_{\mathrm{u},2}=\Omega\left(\frac{(\log n)^4}{n}\right)$, we have $p=O\left(\frac{1}{(\log n)^3}\right)$. Then it follows that the sparsity conditions in Theorem~\ref{Thm:achievability-sparse} ~\eqref{eq:lem31},~\eqref{eq:lem32} and~\eqref{eq:lem33}  are all satisfied. Therefore, we just need $n p_{11} +m\psia-\log{n} = \omega(1)$ to guarantee that exact alignment is achievable. Combining the two case, we come to the conclusion that, under the assumptions \eqref{eq:cond-nonedgeProb} \eqref{eq:cond-rho}, the achievability results in Theorem~\ref{Thm:achievability-general} and Theorem~\ref{Thm:achievability-sparse} simplifies to 
\begin{equation}
\label{eq:simp-ach}
n p_{11} +m\psia-\log{n} = \omega(1).
\end{equation}

% Notice that under the assumption on the edge correlation \eqref{eq:cond-rho}, we have $p_{10} = O(p_{11})$ and $p_{01} = O(p_{11})$  (see Fact~\ref{fact-subsample} in Appendix~\ref{appx:fact4}). Then it follows that the sparsity constrains in Theorem~\ref{Thm:achievability-sparse} ~\eqref{eq:lem31}~\eqref{eq:lem32}~\eqref{eq:lem33}  are all satisfied. Therefore, we just need $n p_{11} +m\psia-\log{n} = \omega(1)$ to guarantee that exact alignment is achievable. Combining the two case, we come to the conclusion that, under the assumptions \eqref{eq:cond-nonedgeProb} \eqref{eq:cond-rho}, the achievability results in Theorem~\ref{Thm:achievability-general} and Theorem~\ref{Thm:achievability-sparse} simplifies to $n p_{11} +m\psia-\log{n} = \omega(1)$.

% From the above discussion, we further simplify the achievability results so that we can see how it influenced by only on $np_{11}$ and $mq_{11}$,
From the above discussion, we further simplify the achievability results so that it depends only on $np_{11}$ and $mq_{11}$, which are the two only parameters in the converse bound.
Then we can show in what regime the achievability and converse are tight (up to $\pm\omega(1)$).
% and thus show when the achievability and converse are tight (up to $\pm\omega(1)$).
From the achievability in last step: $np_{11}+m\psia-\log n =\omega(1)$, we then need to determine the difference between $m\psia$ and $mq_{11}$. 
Firstly, consider the case when $mq_{11}=\omega(\log n)$. In this case, we immediately have that $np_{11}+m\psia-\log n =\omega(1)$ because $\psia=\Theta(q_{11})$ by Fact~\ref{fact-subsample}.
Now, consider the case when $mq_{11}=O(\log n)$. Suppose $np_{11}+mq_{11}-\log n =\omega(1)$ implies $mq_{11}q^{1/2}=O(1)$, then it implies $np_{11}+m\psia-\log n =\omega(1)$ as well. This is because $mq_{11}-m\psia = O (mq_{11}q^{1/2})$. Therefore, we need to find the condition for $np_{11}+mq_{11}-\log n =\omega(1)$ to imply $mq_{11}q^{1/2}=O(1)$. Because $mq_{11}=mq s_{\mathrm{a},1}s_{\mathrm{a},2}=O(\log n)$, we have $mq_{11}q^{1/2}=mq_{11}\left(\frac{mq_{11}}{ms_{\mathrm{a},1}s_{\mathrm{a},2}}\right)^{1/2}=O\left(\frac{(\log n)^{3/2}}{(ms_{\mathrm{a},1}s_{\mathrm{a},2})^{1/2}}\right)$. Condition~\eqref{eq:cond-rhoa} implies that $ms_{\mathrm{a},1}s_{\mathrm{a},2}=\Omega((\log n)^3)$, so we have that $mq_{11}q^{1/2}=O(1)$, which completes the proof.

\paragraph{Converse} In this proof, we will show that \[np_{11}+mq_{11}-\log n\rightarrow -\infty\] implies condition~\eqref{eq:converse} in Theorem~\ref{Thm:converse}. Note that $-\log(x^2+(1-x)^2)\ge 2x$ for any $x\in [0,1]$. Therefore, we have \[
-n\log(p_{11}^2+(1-p_{11})^2)-m\log(q_{11}^2+(1-q_{11})^2)\le 2np_{11}+2mq_{11}\le 2\log n-\omega(1),
\]
which completes the proof.

\section{Proof of Corollary~\ref{coro:seed_tight}}
\label{proof-seed_tight}
% \begin{proof}[Proof of Corollary~\ref{coro:seed_tight}]
\underline{Proof for the achievability condition~\eqref{eq:ach-seed}:}
Recall that from~\eqref{eq:simp-ach} in the proof of Corollary~\ref{Coro:achievability}, we have, under assumptions \eqref{eq:cond-nonedgeProb} and \eqref{eq:cond-rho}, our achievability results (Theorem~\ref{Thm:achievability-general} and Theorem~\ref{Thm:achievability-sparse}) simplify to the following condition
\begin{align*}
    n p_{11} +m\psia \geq \log{n}+ \infty,
\end{align*}
% where $\psia - q_{11} = O(q_{11}^{3/2})$.
where $\psia - q_{11} = O(q_{11}q^{1/2})$.

% Now, in the seeded \er\ setting, we have $\bm{p} = \bm{q}$. Correspondingly, we obtain the following achievability for seeded alignment
% \begin{align}
% \label{eq:seed_tight1}
%     n p_{11} +m \psiu \geq \log{n}+ \omega(1),
% \end{align}
% where  $\psiu - p_{11} = O(p_{11}^{3/2})$.

Now, in the seeded \er\ setting, we have $\bm{p} = \bm{q}$. Because condition~\eqref{eq:sparsity} implies conditions \eqref{eq:cond-nonedgeProb} and \eqref{eq:cond-rho}, we obtain the following achievability for seeded alignment
\begin{align}
\label{eq:seed_tight1}
    n p_{11} +m \psiu \geq \log{n}+ \omega(1),
\end{align}
where  $\psiu - p_{11} = O(p_{11}p^{1/2})$.

For the above achievability condition~\eqref{eq:seed_tight1}, we show that it is equivalent to
\begin{align}
    \label{eq:seed_tight2}
    (n+m)p_{11}  \geq   \log n+\omega(1)
\end{align}
by comparing them in the following three regimes.
\begin{enumerate}
    \item For the regime $(n+m)p_{11} = \omega(\log n)$, we show that it is strictly contained in both \eqref{eq:seed_tight1} and \eqref{eq:seed_tight2}. We can easily see that $(n+m)p_{11} = \omega(\log n)$ satisfy condition~\eqref{eq:seed_tight2}. For condition~\eqref{eq:seed_tight1}, recall that we have $\psiu=\Theta(p_{11})$ from Fact~\ref{fact-subsample} (cf. Appendix~\ref{appx:fact4}). Thus, we also have $(n+m)p_{11} = \omega(\log n)$ satisfy condition~\eqref{eq:seed_tight2}.
    \item In the regime $(n+m)p_{11} = \Theta(\log n)$, we have
    % \begin{align*}
    %      m p_{11}^{3/2} &= \Theta\left(m\left(\frac{\log n}{n+m}\right)^{3/2}\right) \\
    %      &\leq \Theta \left( m  \frac{\log n }{m} \left(\frac{\log n}{n}\right)^{1/2}\right)\\
    %      &= \Theta\left( \frac{\log n^{3/2}}{n^{1/2}}\right) = O(1).
    % \end{align*}
    \begin{align*}
         m p_{11}p^{1/2} &= O\left(mp_{11}\left(\frac{\log n}{(m+n)s^2}\right)^{1/2}\right)\\
         &=O\left(\log n\left(\frac{\log n}{ns^2}\right)^{1/2}\right)\\
         &=O\left(\frac{(\log n)^{3/2}}{(\log n)^2}\right)\\
         &=O(1),
    \end{align*}
    where the penultimate equality follows by assumption~\eqref{eq:cond-rho}. 
    For the condition \eqref{eq:seed_tight1}, we have $m\psiu = mp_{11} - O(mp_{11}p^{1/2}) = mp_{11} - O(1)$. Therefore, condition \eqref{eq:seed_tight1} can be simplified to $np_{11} + mp_{11} \geq \log n +\omega(1)$, which is exactly condition \eqref{eq:seed_tight2}.
    \item For the regime $(n+m)p_{11} = o(\log n)$, it is not contained by neither \eqref{eq:seed_tight1} nor \eqref{eq:seed_tight2}.
\end{enumerate}

\underline{Proof for the converse condition~\eqref{eq:conv-seed}:}
From Theorem~\ref{Thm:converse}, we have the converse condition for attributed \er\ alignment
\begin{align*}
    np_{11} + m q_{11} \leq \log n -\omega(1).
\end{align*}
Now, in the seeded \er\ setting, we have $\bm{p} = \bm{q}$ and we directly obtain the following converse for seeded alignment
\begin{align}
\label{eq:seed_tight3}
    (n+m) p_{11} \leq \log n -\omega(1).
\end{align}
% \end{proof}
% \end{proof}
% \section{Appendix-B}
% \input{proof-achievability}
% \appendix

 \section*{Acknowledgment}
This work was supported in part by the NSERC Discovery Grant No.\ RGPIN-2019-05448, the NSERC Collaborative Research and Development Grant CRDPJ 54367619, and the NSF grant CNS-2007733.

\bibliographystyle{IEEEtranN}
\bibliography{references}
\end{document}